\documentclass[preprint]{elsarticle}
\usepackage[cmex10]{amsmath}
\usepackage{prooftree}
\usepackage{stmaryrd}
\usepackage{color}
\usepackage{cmll}
\usepackage{amsmath,amssymb}
\usepackage{esint}
\usepackage{psfrag}

%\usepackage{libertine}
%
%\definecolor{BLUE_SECTION}{RGB}{34,81,164}
%\definecolor{blue_section}{RGB}{34,81,164}
%\renewcommand*\oldstylenums[1]{{\fontfamily{fxlj}\selectfont #1}}
%
%\def\thechapter{\textcolor{blue_section}{\Roman{chapter}}}
%\renewcommand{\thepage}{{\bf \textcolor{blue_section}{\arabic{page}}}}
%\def\thepart{\textcolor{blue_section}{\Roman{part}}}
%\renewcommand{\thesection}{\textcolor{blue_section}{\arabic{section}}}
%\renewcommand{\thesubsection}{\textcolor{blue_section}{\arabic{section}.\arabic{subsection}}}
%\renewcommand{\thesubsubsection}{\arabic{section}.\arabic{subsection}.\arabic{subsubsection}}
%
%
%
%
%\newenvironment{sproof}{
 %\hspace{-\parindent}\textcolor{blue_section}{\small\bf proof.} \begin{minipage}[t]{.84\linewidth}\footnotesize  }
%{\end{minipage} \vspace{-0.5cm}\begin{flushright}$\triangleleft$\end{flushright}}
%
%
%\newcommand\defn[1]
  %{\subsubsection{{\bf Définition }(#1)}}
      %
      %
 %\renewcommand{\labelitemi}{\textcolor{blue_section}{$\bullet$}}
 %\renewcommand{\labelitemii}{\textcolor{blue_section}{{\bf +}}}
%
%

\def\Bot{{{\bot}\mskip-11mu{\bot}}}

\newcommand{\N} {\mathbb{N}}

\newcommand{\Val}[1]{\llbracket #1 \rrbracket}

\newcommand{\setp}[2]{\{\,\, #1 \,\,|\,\, #2 \,\,\}}

%% Resource Monoid
\newcommand{\M}{\mathcal{M}}

\newcommand{\interp}[1]{| #1 |}
\newcommand{\norm}[1]{\| #1 \|}
\def\unit{{\bf 1}}

\newcommand{\Mp}[1]{{\bf M[}#1{\bf ]}}

%% OKADA %%

%% INTUITIONISTIC %%

%% LAMBDA %%

%% KAM %%
\def\Terms{\mathcal{T}}

\def\redu{\longrightarrow}

\def\zero{{\bf 0}}

\newcommand{\class}[1]{\textbf{\textsc{#1}}}
\def\dai{\maltese}

\def\dbot{{\bot\bot}}

\newcommand{\FTrans}[1]{{#1^\bullet}}
\newcommand{\forcNeg}[1]{{{#1}^\ominus}}

\newcommand{\com}[2]{\langle \, #1 \, | \, #2 \, \rangle}
\newcommand{\coml}[1]{\langle #1 |}
\newcommand{\comr}[1]{ | #1 \rangle}
\def\redu{\rightarrow}
\def\redh{\rightarrow_0}
\def\nredh{\nrightarrow_0}
\def\redi{\rightarrow_i}

\def\redMu{\rightarrow_\mu}

\def\redBeta{{\rightarrow_\beta}}

\def\pBeta{{p_\beta}}

\def\Time{{\text{{\bf Time}}}}

\newcommand{\BotTime}{{\Bot_{{\Time}}}}

\def\PosTerms{{\mathcal{T}_+^0}}
\def\NegTerms{{\mathcal{T}_-^0}}
\def\ComSet{{\mathcal{C}}}
\def\dbot{{\bot\bot}}
\newcommand{\BV}[1]{{#1}_{\mathbb{V}}}

\def\daiPos{{\dai_+}}
\def\daiNeg{{\dai_-}}

\def\FVdash{{\,\Vdash_f\,}}

\def\MAL{{\class{MAL}} }
\def\HOMAL{{\bf $\text{MAL}\omega$} }

\def\HOF{{\bf $\text{F}\omega$ }}
\def\HOPA{{\bf $\text{PA}\omega$ }}

\def\HOSAL{{\bf $\text{SAL}\omega$} }

\def\LFOC{{$\text{{\bf L}}_{foc}$}}

\def\vdashSAL{\vdash_{\text{SAL}\omega}}

\def\defequ{{\equiv}}

\def\BotTime{{\Bot_{Time}}}
\def\Can{{\mathcal{D}_{can}}}
\def\CanNeg{{\mathcal{D}^\ominus_{can}}}
\def\CanPos{{\mathcal{D}^\oplus_{can}}}

\newcommand{\subtitle}[1]{	{\noindent\textbf{#1  \quad\textemdash\quad}}}
%\setlength{\parindent}{0in}

%% Forcing 

\def\sp{\,\,}
\def\delb{\sp|\sp}

\def\Dpos{{\mathcal{D^{\oplus}}}}
\def\Dneg{{\mathcal{D^{\ominus}}}}
\def\congE{\cong}

\def\sortint{\iota}
\def\sortpropPos{{o^{+}}}
\def\sortpropNeg{{o^{-}}}
\def\sortprop{o}
\def\sortpropW{o^{*}}

\newcommand{\Inte}[1]{{\norm{#1}}}
\def\foZero{{\bf 0}}
\def\foSucc{{\bf s}}
\newcommand{\foN}[1]{\mathbf{#1}}

\def\Kpos{{\mathcal{K}_+}}
\def\Kneg{{\mathcal{K}_-}}

\newdimen\boxfigwidth % width of figure box

\def\bigbox{\begingroup
  % Figure out how wide to set the box in
  \boxfigwidth=\hsize
  \advance\boxfigwidth by -2\fboxrule
  \advance\boxfigwidth by -2\fboxsep
  \setbox4=\vbox\bgroup\hsize\boxfigwidth
  % Make an invisible hrule so that the box is exactly this wide
  \hrule height0pt width\boxfigwidth\smallskip%
% Some environments like TABBING and other LIST environments
% use this measure of line size - \LINEWIDTH=\HSIZE-\LEFTMARGIN-\RIGHTMARGIN?
  \linewidth=\boxfigwidth
}
\def\endbigbox{\smallskip\egroup\fbox{\box4}\endgroup}

\newcounter{counter}

\newdefinition{defn}[counter]{Definition}
\newdefinition{rmk}[counter]{Remark}
\newdefinition{rmks}[counter]{Remarks}
\newdefinition{exa}[counter]{Example}
\newtheorem{lem}[counter]{Lemma}
\newtheorem{prop}[counter]{Property}
\newtheorem{thm}[counter]{Theorem}
\newproof{proof}{Proof}

\begin{document}

\title{Quantitative classical realizability}

\author[rvt]{Alo\"is Brunel}
\ead{alois.brunel@ens-lyon.org}

\address[rvt]{LIPN - UMR CNRS 7030 - Universit\'e Paris 13, Villetaneuse, France}

\begin{abstract}
Introduced by Dal Lago and Hofmann, quantitative realizability is a 
technique used to define models for logics based on Multiplicative Linear Logic. 
A particularity is that functions are interpreted as bounded time computable functions.
It has been used to give new and uniform proofs of soundness of several type systems
with respect to certain time complexity classes. 
We propose a reformulation of their ideas in the setting of Krivine's classical realizability. 
The framework obtained generalizes Dal Lago and Hofmann's realizability, and reveals
deep connections between quantitative realizability and a linear variant of Cohen's forcing. 
\end{abstract}

\maketitle

\section{Introduction}

Ever since its introduction by J.L Krivine \cite{krivine-realizability}, the
theory of classical realizability has raised a growing interest. Initially
designed to study the computational content of classical proofs through the
Curry-Howard correspondence, it has led to promising results in various fields.
One could mention the recent advances \cite{krivine2010realizability2} made
by Krivine in the elaboration of new models of the \class{ZF} axiomatic set theory.
Another success has been its use to define and justify a classical extraction procedure for the proof assistant Coq \cite{miquel2007classical}.\newline

\subtitle{Forcing} Forcing is a technique designed by Cohen \cite{cohen1963independence}
to prove the independence of the Continuum Hypothesis (CH) from \class{ZFC}. The idea
is to define a formula transformation which turns every formula $A$ into a 
new one noted $p\Vdash A$, where $p$ is a \textit{forcing condition}. By choosing
a suitable set of forcing conditions, one can prove the statement $p\Vdash \neg CH$. 
It has been recently shown by Krivine \cite{krivine2010realizability} that combining
classical realizability and forcing is possible. This construction can be seen
as a generalization of forcing iteration and makes possible a study of forcing
through the Curry-Howard isomorphism: Krivine
has shown that the forcing technique 
not only provides a logical translation but also a program transformation. 
Following that work, Miquel \cite{Miquel2011Forcing} has introduced an abstract
machine (the Krivine Forcing Abstract Machine, or \class{KFAM}) that internalizes
the computational behavior of programs obtained via this transformation. 
One remarkable feature of this machine is that it provides sophisticated
programming features like memory cells or program execution tracing. \newline

\subtitle{Resource sensitive realizability} Realizability techniques
have also been fruitfully applied to \textit{implicit complexity}. This research field aims at providing
machine-independent characterizations of complexity classes (such as polynomial time or
logspace functions). One of the possible approaches is to use linear logic based
type systems to constrain programs enough so that they enjoy bounded-time
normalization properties. Proving these properties can be achieved using semantic techniques.
Following different works \cite{DBLP:journals/apal/Hofmann00,Hofmann2004121},
Dal Lago and Hofmann have introduced in \cite{DalLago20112029} a \textit{quantitative} (another word
for \textit{resource sensitive}) framework based on Kleene realizability \cite{kleene1969formalized}.
One of the crucial ideas behind Dal Lago and Hofmann's work is to consider bounded-time 
$\lambda$-terms as realizers. Bounds are described using elements of a \textit{resource monoid}. 
No matter what resource monoid is chosen, their framework always yields a model of second-order Multiplicative 
Affine Logic (\class{MAL}). Various systems extending \class{MAL} are then dealt with by choosing a suitable 
resource monoid, while the basic realizability constructions are unchanged. This work
has offered new and uniform proofs of the soundness theorems for
\class{LAL}, \class{EAL},  \class{SAL} and \class{BLL} with respect to the associated complexity 
classes \cite{dalagobll,dal2010semantic,DalLago20112029}. In \cite{brunel23church}, Terui and the author 
gave a new characterization of the complexity class \class{FP} (the functions computable in
polynomial time) and used a variant of Dal Lago and Hofmann's realizability to show 
the soundness part of this result.\newline

The present work aims at applying methodology and tools coming from classical
realizability to generalize the framework proposed by Dal Lago and Hofmann, and
to reveal deep connections between quantitative realizability and
forcing techniques. \newline

\subtitle{Quantitative classical realizability} We propose a new quantitative framework, 
based on Munch's classical realizability
for focalising system \class{L} (or \LFOC) \cite{MunchReal}, 
a term calculus for classical logic \class{LC} \cite{CambridgeJournals:4439772}. We extend
this realizability using the notion of \textit{quantitative monoid}, which
derives from the \textit{resource monoid} structure introduced by Dal Lago and Hofmann.
We show that, whatever the quantitative monoid, this framework always
gives rise to a model of the Multiplicative Affine fragment of 
Higher-order Classical Arithmetic (abbreviated \class{MAL}$\omega$). By choosing
different quantitative monoids, we obtain models of logics extending \HOMAL.
Because all resource monoids in the sense of \cite{DalLago20112029} are 
also quantitative monoids, we can in principle obtain models for 
all the systems treated in \cite{dal2010semantic,DalLago20112029},
although we only exhibit a model of Soft Affine Logic (\class{SAL}) \cite{baillot2004soft}.\newline

\subtitle{Quantitative reducibility candidates} By carefully setting
parameters of classical realizability, one can retrieve the notion
of \textit{reducibility candidates} (presented using orthogonality,
as in \cite{girard1987linear,  LengrandMiquelClassical2008,okada1999phase,
pagani2010strong}), which is used to prove normalization properties. 
Similarly, in our setting, we are able to define a quantitative 
extension of this technique, which we call \textit{quantitative reducibility candidates}. 
It allows us to semantically prove complexity properties of programs that are typable in the logic
we interpret. Moreover, because
we work with a term calculus which generalizes both call-by-name and call-by-value
classical $\lambda$-calculi, these complexity properties are transferred for free 
to these calculi. Hence, we are able to retrieve and generalize 
 the bounded-time termination
results proved in \cite{dal2010semantic,DalLago20112029}.\newline

\subtitle{A forcing decomposition} Quantitative classical realizability is deeply connected
with a certain notion of forcing, which we propose to study. 
We formalize inside \HOMAL a forcing transformation on Multiplicative Linear Logic (\class{MAL}) formulas, called 
\textit{linear forcing}. Then, following Miquel's methodology \cite{Miquel2011Forcing}, we
propose an abstract machine designed to execute programs obtained by a specific
linear forcing instance. 
A \textbf{connection lemma} is proved, which shows that composing this instance of linear forcing with 
a non-quantitative realizability built upon this machine always yields a quantitative
realizability model. 
Finally, using this result, we show how quantitative reducibility candidates (restricted
to \class{MAL}) arise from the composition of usual
reducibility candidates with forcing. \newline

\subtitle{Outline} Sections \ref{sec:HOMAL} and \ref{sec:QuantKrivine} introduce \HOMAL 
and its quantitative realizability interpretation. The model
of quantitative reducibility candidates is then defined and used to prove a bounded time termination
property of \HOMAL. In section \ref{sec:other_types}, we show by taking \class{SAL} as an example 
that this interpretation and the corresponding complexity result can be extended
to larger type systems. Finally, we introduce in section \ref{sec:forcing} the 
linear forcing interpretation of \MAL and prove the accompanying decomposition results.

\section{The calculus}\label{sec:HOMAL} In this section, we describe the system \class{MAL}$\omega$. It is based on the
\MAL type system for Munch's focalising system \class{L} \cite{MunchReal},
extended with higher-order quantifications and arithmetical operations. 
Logically, it is a fragment of classical higher-order Peano arithmetic
(abbreviated by \class{PA}$\omega$). 
The syntax of \HOMAL is divided in three
distinct layers: the \textit{terms}, the \textit{type constructors} and
the \textit{kinds}. The language of terms, which we shall use to express both proof-terms and realizers, is based on the multiplicative fragment of \LFOC,
extended with extra instructions. The type constructors layer is an adaptation
of the higher-order terms syntax of \HOPA \cite{Miquel2011Forcing} to linear logic: it can be seen 
as a combination of the languages of \HOPA and classical \HOF \cite{LengrandMiquelClassical2008}. Finally,
kinds are used as a simple type system for type constructors. 

\subsection{Term syntax}\label{subsec:termsyntax}
In what follows, 
\textit{positive variables} and \textit{negative variables} are respectively written $x,y,z, \dots$
and $\alpha,\beta,\gamma,\dots$. We use the symbols $\kappa,\kappa',\dots$ to denote both
positive and negative variables. In the term syntax of \LFOC, in addition to variables, 
six syntactic categories are defined:
\emph{values}, \emph{positive values}, 
\emph{positive terms}, \emph{negative terms}, \emph{terms} and \emph{commands}:

\begin{tabular}{rcclr}
 & & & & \\
 \textbf{variables} & $\kappa,\kappa'$ & $::=$ & $\alpha \delb x$ & \\
 \textbf{values} 	& $V$ & $::=$ & $V_+ \delb t_-$ &  \\
 \textbf{positive values} & $V_+$ & $::=$ & $x\delb (V,V') \delb \{ V\}\delb k_+$ & $(k_+\in \Kpos)$  \\
 \textbf{positive terms} & $t_+$ &  $::=$ & $V_+ \delb \mu \alpha.c$ & \\
 \textbf{negative terms} & $t_-$ & $::=$ & $\alpha\delb \mu (\kappa,\kappa').c \delb \mu\{\kappa\}.c\delb
  	\mu x.c \delb k_-$ &  $(k_-\in \Kneg)$ \\
 \textbf{terms} & $t,u$ & $::=$ & $t_- \delb t_+$\\
 \textbf{commands} & $c$ & $::=$   & $\com{t_+}{t_-}$& \\
 & & & \\
\end{tabular}
\noindent where $\mu (\kappa, \kappa').c$ is not defined if $\kappa = \kappa'$. Moreover terms are always considered modulo $\alpha$-equivalence. We also make an identification between the commands
$\com{t}{u}$ and $\com{u}{t}$. Finally, we associate to every term $t$ its \emph{polarity} $\pi(t) \in \{ -, +\}$ as follows:
$$
\pi(t) = 
\left \{ 
\begin{array}{cl}
 + & \text{if }t \text{ is a positive term} \\
 - & \text{if }t \text{ is a negative term}
 \end{array} \right.
 $$

\begin{rmks}\label{rmk:term_syntax}$ $ 
\begin{enumerate}
\item
Notice that in the definition of the pair construct $(V,V')$, $V$ and $V'$ can be
values of arbitrary polarity (that is, positive or negative). We could have
 made the choice of restricting a pair to positive values. This would not be 
 problematic since we still could used $\{.\}$ to change the polarity of
 values from negative to positive before putting them into a pair.  
\item The term $\{ V \}$ can be seen as a one-tuple and is here to give the possibility
of turning a negative term into a positive one. 
\item This untyped calculus has no linear restriction on the use
of variables. However, such restrictions will appear in the type system. 
\item  The identification of $\com{t}{u}$ and $\com{u}{t}$ accounts
for the involutivity of linear negation.
\end{enumerate}
\end{rmks}

Similarly to \cite{Miquel2011Forcing}, the syntax is parametrized by a set of \textit{positive instructions} $\Kpos$ (which are considered as values) 
and a set of \textit{negative instructions} $\Kneg$. This allows us to extend the language at will, in the spirit
of Krivine's $\lambda_c$-calculus \cite{krivine-realizability}. 

\begin{rmk}
If we want to make a comparison with Miquel's work \cite{Miquel2011Forcing}, our
set $\Kneg$ corresponds to the set of \emph{instructions} while $\Kpos$
corresponds to the set of stack constants. 
\end{rmk}

If $x$ is a term or a command, $FV(x)$ denotes the set of the free variables of $x$. 
In the rest of this paper, the sets of \emph{closed terms}, \textit{closed positive terms}, \textit{closed negative terms} and \textit{closed commands} are denoted  respectively by $\Terms^{0}$, $\PosTerms$,  $\NegTerms$ and $\ComSet$. Moreover the set of values is denoted by $\mathbb{V}$. 

\subsection{Reduction}\label{subsec:reduction}

We now present the operational semantics for the syntax we just defined.
The set of commands is equipped with the following one-step reduction relations $\redMu$
and $\redBeta$:

\begin{tabular}{ccrcl}
 & & & & \\
  ($+$) &  \quad & $\com{\mu \alpha.c}{t_-}$ & $\redMu$ & $c[t_-/\alpha]$\\
  ($-$) & \quad & $\com{V^+}{\mu x.c}$ & $\redMu$ & $c[V^+/x]$\\
  ($\uparrow$) & \quad & $\com{\{V\}}{\mu \{\kappa\}.c}$ & $\redBeta$ &  $c[V/\kappa]$\\
  ($\parr$) & \quad & $\com{(V,V')}{\mu(\kappa,\kappa').c}$ & $\redBeta$ & $c[V/\kappa, V'/\kappa']$\\ 
 %%($\eta$)& \quad\quad & $\mu\kappa.\com{t}{\kappa}$ & $=_\eta$ & $t$\\
 & & & & \\
 \multicolumn{5}{l}{($\redBeta$ is defined only if the polarities of the $\kappa$'s and the $V$'s match)}\\
 & & & & \\
\end{tabular}

We pose $\redh \sp = \sp \redBeta \sp \cup \sp \redMu$. 
\begin{rmks}
  $ $ 
  
  \begin{enumerate}
  \item
The grammar defining the term syntax does not prevent 
ill-formed commands to appear. Indeed, consider the command
$\com{\mu  \alpha.c}{\{V\}}$.
$\{V \}$ is a positive term whereas $\alpha$  is a
negative variable. Hence, this command won't reduce. 
The possibility of this kind of ill-formed commands and terms 
will be removed by
typing.
  \item 
  Even if the term syntax does not allow directly to form the pair $(t,u)$
  or the one-tuple $\{t\}$  when $t$ and $u$ are not values, 
  it is possible to \emph{define} these constructions as follows:
  \begin{eqnarray*}
  	  (t,u)  & = & \mu\alpha.\com{t}{\mu \kappa.\com{u}{\mu \kappa'.\com{(\kappa,\kappa')}{\alpha}}}\\
	  \{ t\} &=  & \mu\alpha.\com{t}{\mu \kappa.\com{\{\kappa\}}{\alpha}}
	  \end{eqnarray*}
 where the polarities of $\kappa$ and $\kappa'$ respectively match those of 
 $t$ and $u$. In the case of the pair, this definition reflects an arbitrary choice
 in the order of evaluation of $t$ and $u$ (here from left to right). 
 \end{enumerate}
\end{rmks}

\begin{defn}[Evaluation relation]
Similarly to \cite{Miquel2011Forcing}, we consider an \emph{evaluation relation} to be a
binary relation $\redu$ between commands such that $\redh \subseteq \redu$. 
\end{defn}

In the rest paper, $\redu$ always denotes such an evaluation relation. 
\begin{rmk}
The fact that $\redu$ is not fixed will allow us to consider reduction rules
when we extend the term syntax with new instructions, without loosing
the properties and theorems already proved. 
\end{rmk}

%%We write
%%$c \nredh$ if there is no $c'$ such that $c\redu c'$. We write $c\Uparrow $ if the rule 
%%$\redu$ can be applied to $c$ an arbitrary number of times. 

\begin{defn}
Suppose $\rightarrow$ is a binary relation between commands. If
$c$ is normalizing for $\rightarrow\cup\redMu$, then
we define $\Time^{\rightarrow}(c)$ as the number of $\rightarrow$ steps needed
by $c$ to normalize using $\rightarrow$ and $\redMu$.  Otherwise, $\Time^{\rightarrow}(c)$ is undefined. 
\end{defn}

%$S\subseteq \{ \mu, \beta, i \}$, we denote by 
%$\TimeS{S}(c)$ the cumulated number of $\rightarrow_\circ$ steps, with $\circ\in S$, needed
%by $c$ to normalize. If $c$ does not normalize, then $\TimeS{S}(c)$ is not defined. 
%We use the abbreviation $\Time(c)$ to denote $\TimeS{\{\beta,i,\mu\}}(c)$. 

\subsection{Kinds and type constructors} 
Here is exposed the language of \HOMAL types.
We define two syntactic categories: \textit{kinds} and \textit{type constructors} (or
simply \textit{constructors}).  

\begin{tabular}{lrcl}
& & & \\
{\bf Kinds} & $\sigma, \tau$ & ::= & $\iota\sp |\sp o^{+}\sp |\sp o^{-} \sp |\sp\sigma\rightarrow \tau$\\
& & & \\
{\bf Constructors} & $A,B, T, U$ & ::= & $x^\tau\delb {x^\tau}^\bot \delb \lambda x^\tau.T \delb TU$\\
                 & & & $\delb \foZero \delb \foSucc \delb rec_\tau \delb rec_\tau^\bot$\\ %%\delb \langle T=T'\rangle A$\\
                 & & & $\delb A\otimes B \delb A \parr B\delb \exists x^\tau.A \delb \forall x^\tau.A$\\
                 & & & $\delb \downarrow A \delb \uparrow A$\\
& & & \\
\end{tabular}

Kinds are a simple type system for constructors: $\iota$ is the kind representing
\emph{individuals}, $\sigma\rightarrow \tau$ is the kind of \emph{functions} from
$\sigma$ to $\tau$, $\sortpropPos$ is the kind of \emph{positive formulas}
and $\sortpropNeg$ the kind of \emph{negative formulas}. 
We denote by $\foN{n}$ the constructor $\foSucc^{n} \foZero$.

\begin{defn}[Involutive negation] The operation $(.)^\bot$ (called \emph{negation}) is only defined on atomic constructors (variables and recursor $rec_\tau$). It is extended as an involutive operation on all kinds as follows:
\begin{center}
\begin{tabular}{rclrcl}
  $\sortpropPos^\bot$ & $=$ & $\sortpropNeg$ &
  $\sortpropNeg^\bot$ & $=$ & $\sortpropPos$\\
  $\iota^\bot$ & $=$ & $\iota$ & 
  $(\sigma \rightarrow \tau)^\bot$ & $=$ & $\sigma \rightarrow \tau^{\bot}$
\end{tabular}
\end{center}
But also on all constructors:
\begin{center}
\begin{tabular}{rclrcl} 
 $(x^\tau)^\bot$ & $=$ & ${x^\tau}^\bot$ & 
  $({x^\tau}^\bot)^\bot$ & $=$ & $x^\tau$\\
  $\foZero^{\bot}$ & $=$ & $\foZero$ & 
  $\foSucc^{\bot}$ & $=$ & $\foSucc$\\
  $(\lambda x^\tau.T)^\bot$ & $=$ & $\lambda x^\tau.(T)^\bot$&
  $(TU)^\bot$ & $=$ & $T^\bot U$\\
  $(rec_\tau)^\bot$ & $=$ & $rec_\tau^\bot$ & 
  $(rec_\tau^{\bot})^\bot$ & $=$ & $rec_\tau$ \\
%%  $(\langle T = T' \rangle A)^\bot$ & $=$ & $\langle T = T' \rangle A^\bot$\\
  $(A\otimes B)^\bot$ & $=$ & $A^\bot \parr B^\bot$ & 
  $(A\parr B)^\bot$ & $=$ & $A^\bot \otimes B^\bot$\\
  $(\forall x^\tau.A)^\bot$ & $=$ & $\exists x^\tau.A^\bot$ & 
  $(\exists x^\tau.A)^\bot$ & $=$ & $\forall x^\tau.A^\bot$\\
  $(\uparrow A)^\bot$ & $=$ & $\downarrow A^\bot$ &
  $(\downarrow A)^\bot$ & $=$ & $\uparrow A^\bot$\\
\end{tabular}
\end{center}
The operation $(.)^{\bot}$ is involutive: for any constructor $T$,
we have $T^{\bot\bot} = T$.
\end{defn}

The rules of figure \ref{fig:HOMALkinds} 
define what it means for a constructor $T$ to be of kind $\tau$ (and we note it $T: \tau$).
When we write $T : \sortprop$ it means that $T: \sortpropPos$ or $T: \sortpropNeg$. 
We say that a constructor $T$ is \emph{well-formed} if there exists some kind $\sigma$ such that $T: \sigma$ holds.

\begin{figure}[ht]   \begin{bigbox}
\begin{center}
\begin{tabular}{cc} 
\prooftree
 \justifies x^\tau : \tau
\thickness=0.08em
\shiftright 0em\using \endprooftree
&
\prooftree
 \justifies (x^\tau)^\bot : \tau^{\bot}
\thickness=0.08em
\shiftright 0em\using \endprooftree
\\ & \\

\prooftree
T : \tau \justifies \lambda x^\sigma.T : \sigma\rightarrow \tau
\thickness=0.08em
\shiftright 0em\using  \endprooftree
&
\prooftree
T : \sigma\rightarrow \tau \quad\quad U : \sigma \justifies TU : \tau
\thickness=0.08em
\shiftright 0em\using  \endprooftree\\& \\

\prooftree
 \justifies \foZero : \sortint
\thickness=0.08em
\shiftright 0em\using \endprooftree
&
\prooftree
\justifies \foSucc : \sortint \rightarrow \sortint
\thickness=0.08em
\shiftright 0em\using  \endprooftree\\& \\
\multicolumn{2}{c}{
\prooftree
 \justifies rec_\tau : \tau \rightarrow (\sortint \rightarrow \tau \rightarrow \tau) \rightarrow \sortint \rightarrow \tau
\thickness=0.08em
\shiftright 0em\using \endprooftree}\\ & \\

\multicolumn{2}{c}{
\prooftree
 \justifies rec_\tau^\bot :  \tau \rightarrow (\sortint \rightarrow \tau \rightarrow \tau) \rightarrow \sortint \rightarrow \tau
\thickness=0.08em
\shiftright 0em\using \endprooftree} \\
\end{tabular}
\end{center}

\begin{center}
\begin{tabular}{cccc} 
\prooftree
 A : \sortprop \quad\quad B : \sortprop \justifies A\otimes B : \sortpropPos
\thickness=0.08em
\shiftright 0em\using \endprooftree
&
\prooftree
 A : \sortprop \quad\quad B : \sortprop \justifies A\parr B : \sortpropNeg
\thickness=0.08em
\shiftright 0em\using \endprooftree& 

\prooftree
 A : \sortprop  \justifies \downarrow A : \sortpropPos
\thickness=0.08em
\shiftright 0em\using \endprooftree
& 
\prooftree
 A : \sortprop  \justifies \uparrow A : \sortpropNeg
\thickness=0.08em
\shiftright 0em\using \endprooftree 

\\  & \\
%\multicolumn{2}{c}{
%\prooftree
% T : \tau \quad\quad T' : \tau\quad\quad A : \sortprop \justifies
% \langle T = T'\rangle A : \sortprop
%\thickness=0.08em
%\shiftright 0em\using \endprooftree}
\multicolumn{2}{c}{
\prooftree
 A : \sortpropW \quad * \in \{+,-\}\justifies \forall x^\tau.A : \sortpropW
\thickness=0.08em
\shiftright 0em\using  \endprooftree} &

\multicolumn{2}{c}{\prooftree
 A : \sortpropW \quad * \in \{+,-\} \justifies \exists x^\tau.A : \sortpropW
\thickness=0.08em
\shiftright 0em\using \endprooftree}
\\ 
\end{tabular}
\end{center}\end{bigbox}
\caption{Typing rules for constructors}
\label{fig:HOMALkinds}
\end{figure}

\begin{prop}
If $T : \sigma$ then $T^{\bot} : \sigma^{\bot}$.
\end{prop}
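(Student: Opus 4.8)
The plan is to prove the statement by induction on the derivation of $T : \sigma$ from the rules of Figure~\ref{fig:HOMALkinds}, equivalently by structural induction on the well-formed constructor $T$, matching each syntactic form of $T$ against the corresponding clause in the definition of $(.)^\bot$ on constructors. Throughout I would rely on the already-noted involutivity of negation on kinds, $\sigma^{\bot\bot} = \sigma$, together with the defining clauses $\sortint^\bot = \sortint$, $(\sortpropPos)^\bot = \sortpropNeg$, $(\sortpropNeg)^\bot = \sortpropPos$ and $(\sigma\to\tau)^\bot = \sigma\to\tau^\bot$.

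First I would dispatch the axiom (leaf) cases, where $T^\bot$ is read off directly from the negation table and its kind from the matching axiom. For $T = x^\tau$ we have $(x^\tau)^\bot = {x^\tau}^\bot$, whose axiom gives kind $\tau^\bot = \sigma^\bot$; for $T = (x^\tau)^\bot$ the constructor negation returns $x^\tau$ of kind $\tau$, and involutivity $(\tau^\bot)^\bot=\tau$ matches $\sigma^\bot$. The cases $T=\foZero$ and $T=\foSucc$ are immediate, since these constructors are self-dual and their kinds $\sortint$ and $\sortint\to\sortint$ are fixed by $(.)^\bot$ (using $\sortint^\bot=\sortint$). The two recursor axioms are handled symmetrically via $(rec_\tau)^\bot = rec_\tau^\bot$ and $(rec_\tau^\bot)^\bot = rec_\tau$, checking the axiom-assigned kind against $\sigma^\bot$ while recalling that negation on arrow kinds rewrites only the final target; this is the point that demands the most care.

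The remaining cases are inductive. For $T=\lambda x^\sigma.U$ with $U:\tau$, the induction hypothesis gives $U^\bot:\tau^\bot$, whence $(\lambda x^\sigma.U)^\bot = \lambda x^\sigma.U^\bot : \sigma\to\tau^\bot = (\sigma\to\tau)^\bot$. For $T=UV$ the key observation is that $(UV)^\bot = U^\bot V$ leaves the argument untouched, which is exactly compatible with arrow negation flipping only the codomain: the hypothesis gives $U^\bot : (\sigma\to\tau)^\bot = \sigma\to\tau^\bot$, and $V:\sigma$ is unchanged, so $U^\bot V : \tau^\bot$. For the logical constructors I would use that negation exchanges $\otimes/\parr$, $\downarrow/\uparrow$ and $\forall/\exists$ while swapping $\sortpropPos$ and $\sortpropNeg$ but preserving the status ``$:\sortprop$''; combined with the hypothesis on the immediate subconstructors this yields the claim for each connective and quantifier. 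The main obstacle I anticipate is essentially bookkeeping: ensuring that the two independently presented tables — negation on kinds and negation on constructors — agree on every atomic generator (variables, their duals, and the recursors), since these are precisely the cases not forced compositionally by the induction hypothesis.
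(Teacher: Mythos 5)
Your overall strategy is the right one, and it is in fact the argument the paper takes for granted: the property is stated without any proof, so the routine induction on the kinding derivation (equivalently, on the structure of $T$) that you describe is exactly what is left implicit. Your compositional cases are all correct as presented: for $\lambda x^\sigma.U$ and for application the key point is, as you say, that $(\sigma\to\tau)^\bot = \sigma\to\tau^\bot$ negates only the final target, which matches the fact that $(TU)^\bot = T^\bot U$ leaves the argument untouched; the connective and quantifier cases follow from the induction hypothesis together with the polarity-swapping rules of Figure~\ref{fig:HOMALkinds}.

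The gap is in the very case you flag as ``demanding the most care'' but then dispose of by assertion. For $T = rec_\tau$ one has $\sigma = \tau\to(\sortint\to\tau\to\tau)\to\sortint\to\tau$, hence $\sigma^\bot = \tau\to(\sortint\to\tau\to\tau)\to\sortint\to\tau^\bot$. But the axiom of Figure~\ref{fig:HOMALkinds}, as printed, assigns $rec_\tau^\bot$ the kind $\tau\to(\sortint\to\tau\to\tau)\to\sortint\to\tau$, whose target is $\tau$ and not $\tau^\bot$; taking $\tau = \sortpropPos$ these two kinds are distinct, so the check you claim to perform actually fails, and the symmetric case $T = rec_\tau^\bot$ fails for the same reason. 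A complete proof must observe that this axiom is a typo and that the intended kind is $rec_\tau^\bot : \tau\to(\sortint\to\tau\to\tau)\to\sortint\to\tau^\bot$. This correction is forced by the conversion rules $rec_\tau^\bot\,T\,U\,\foZero \congE T^\bot$ and $rec_\tau^\bot\,T\,U\,(\foSucc\,n)\congE U^\bot\,n\,(rec_\tau\,T\,U\,n)$ of Figure~\ref{fig:HOMALcongruence}, whose right-hand sides have kind $\tau^\bot$ (and convertibility is meant to relate constructors of the same kind, as the paper's remark on polarity preservation and the typing rule $(\congE)$ require); it is also confirmed by the interpretation clause $\Inte{rec^{\bot}_\tau}_\rho = \bot(rec_{\Inte{\tau}}, \tau\to(\sortint\to\tau\to\tau)\to\sortint\to\tau)$, which by the paper's own property on $\bot(\cdot,\cdot)$ lands in $\Inte{\tau\to(\sortint\to\tau\to\tau)\to\sortint\to\tau^\bot}$. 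With that correction both recursor cases go through exactly as you intend; without it, your induction breaks at precisely the point you yourself identified as delicate.
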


Finally, we define a relation of \textit{convertibility} between constructors, noted $T \congE T'$, whose 
inductive definition is given in Figure \ref{fig:HOMALcongruence}. Notice that if $A$ and $B$ are
formulas and $A \congE B$ then $A$ and $B$ have the same polarity.
The presence of the dual recursor
$rec_\tau^{\bot}$ and its associated conversion rules are necessary to obtain the following property.

\begin{prop}
If $T$ and $U$ are constructors such that $T \congE U$, then
$T^\bot \congE U^\bot$. 
\end{prop}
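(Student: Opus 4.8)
The plan is to argue by induction on the derivation of $T \congE U$ obtained from the inductive definition in Figure~\ref{fig:HOMALcongruence}. Since $\congE$ is a congruence, its rules fall into two families: the \emph{structural} rules (reflexivity, symmetry, transitivity, and the compatibility rules closing $\congE$ under each constructor) and the genuine \emph{conversion axioms} ($\beta$-conversion of $\lambda$/application, together with the defining equations of $rec_\tau$ and $rec_\tau^\bot$). Before running the induction I would prove, by a straightforward induction on $T$, the commutation lemma $(T[U/x])^\bot = T^\bot[U/x]$. The only non-trivial cases are the variable occurrences: when $T = x^\tau$ both sides equal $U^\bot$, and when $T = {x^\tau}^\bot$ both sides equal $U$, using $({x^\tau}^\bot)^\bot = x^\tau$ and the involutivity of $(.)^\bot$.

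The structural rules are then dispatched directly from the induction hypotheses, using that $(.)^\bot$ is defined homomorphically on constructors. For instance, a compatibility step deriving $A \otimes B \congE A' \otimes B'$ from $A \congE A'$ and $B \congE B'$ is transported to $A^\bot \parr B^\bot \congE A'^\bot \parr B'^\bot$: indeed $(A \otimes B)^\bot = A^\bot \parr B^\bot$, and the $\parr$-compatibility rule of $\congE$ applies to the hypotheses $A^\bot \congE A'^\bot$ and $B^\bot \congE B'^\bot$ furnished by induction. Reflexivity, symmetry and transitivity are preserved immediately. The $\beta$-axiom $(\lambda x^\tau.T)\,U \congE T[U/x]$ has left-hand side negating to $(\lambda x^\tau.T^\bot)\,U$ (by $(\lambda x^\tau.T)^\bot = \lambda x^\tau.T^\bot$ and $(TU)^\bot = T^\bot U$) and right-hand side negating, via the commutation lemma, to $T^\bot[U/x]$; these two are again related by the $\beta$-axiom instantiated at $T^\bot$.

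The delicate cases, and the heart of the argument, are the recursor equations, and this is where I expect the real work to lie. Computing $(rec_\tau\,u\,v\,w)^\bot$ with the rule $(TU)^\bot = T^\bot U$ leaves the arguments untouched and rewrites only the head, giving $rec_\tau^\bot\,u\,v\,w$; in particular the zero-equation $rec_\tau\,u\,v\,\foZero \congE u$ negates (using $\foZero^\bot = \foZero$) to an equation whose left-hand side is headed by $rec_\tau^\bot$, and similarly for the successor-equation. For the induction to close, each such negated equation must itself be an instance of $\congE$, which is possible precisely because $rec_\tau^\bot$ has been equipped with its own conversion rules — exactly the rules announced before the statement. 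Thus $(.)^\bot$ exchanges the conversion axioms of $rec_\tau$ with those of $rec_\tau^\bot$ (and, by involutivity, conversely), and the recursor cases go through. The main obstacle is therefore structural rather than computational: one must check that the dual recursor's equations were designed to be the exact $(.)^\bot$-images of the primal ones, the commutation lemma being the only other point requiring care.
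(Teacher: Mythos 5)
Your proof is correct and is essentially the argument the paper intends: the paper states this proposition without giving a proof, and its remark that the dual recursor $rec_\tau^\bot$ and its associated conversion rules are ``necessary'' points exactly at the axiom-exchange under $(.)^\bot$ that your induction makes explicit (each $rec_\tau$ equation negates to a $rec_\tau^\bot$ equation and conversely, since $(TU)^\bot = T^\bot U$ negates only the head), with the commutation lemma $(T[U/x^\tau])^\bot = T^\bot[U/x^\tau]$ as the other required ingredient. The only trivial omission is that Figure \ref{fig:HOMALcongruence} also contains the $\eta$-axiom $\lambda x^\tau.Tx \congE T$, which negates to another instance of itself and is handled exactly like your $\beta$ case.
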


\begin{figure}[ht!]\begin{bigbox}
\begin{center}
\begin{tabular}{cc}
 & \\
\prooftree
 \justifies (\lambda x^\tau.T)(U) \congE T\{ x^\tau := U \}
\thickness=0.08em
\shiftright 0em\using \endprooftree  &
\prooftree
x^\tau \notin FV(T) \justifies \lambda x^\tau.Tx \congE T
\thickness=0.08em
\shiftright 0em\using  \endprooftree\\ & \\
\prooftree
 \justifies rec_\tau \, T\, U\, 0\congE T
\thickness=0.08em
\shiftright 0em\using \endprooftree
& 

\prooftree
\justifies rec_\tau \, T\, U\, (s\, n)\congE U\, n\, (rec_\tau \, T\, U\, n)
\thickness=0.08em
\shiftright 0em\using  \endprooftree\\
& \\

\prooftree
 \justifies rec_\tau^\bot \, T\, U\, 0\congE T^\bot
\thickness=0.08em
\shiftright 0em\using \endprooftree
& 

\prooftree
\justifies rec_\tau^\bot \, T\, U\, (s\, n)\congE U^\bot\, n\, (rec_\tau \, T\, U\, n)
\thickness=0.08em
\shiftright 0em\using  \endprooftree\\
& \\
\end{tabular}
\end{center}

\begin{center}
\begin{tabular}{ccc}
\prooftree
 \justifies T \congE T
\thickness=0.08em
\shiftright 0em\using \endprooftree
& 
\prooftree
 T \congE T'\justifies T'\congE T
\thickness=0.08em
\shiftright 0em\using \endprooftree
&
\prooftree
T \congE T' \quad\quad T' \congE T''\justifies T \congE T''
\thickness=0.08em
\shiftright 0em\using  \endprooftree\\
&\\
\end{tabular}
\end{center}
\begin{center}
\begin{tabular}{cc}
\prooftree
 T \congE T'\justifies \lambda x^\tau.T \congE \lambda x^\tau. T'
\thickness=0.08em
\shiftright 0em\using \endprooftree
&
\prooftree
T\congE T'\quad\quad U\congE U'\justifies TU \congE T'U'
\thickness=0.08em
\shiftright 0em\using  \endprooftree\\
&\\

\prooftree
 A \congE A'\quad\quad B\congE B' \justifies A\otimes B \congE A'\otimes B'
\thickness=0.08em
\shiftright 0em\using \endprooftree
& \prooftree
 A \congE A'\quad\quad B\congE B' \justifies A\parr B \congE A'\parr B'
\thickness=0.08em
\shiftright 0em\using \endprooftree \\ &\\
\prooftree
A\congE A' \justifies\exists x^\tau\, A\congE \exists x^\tau\, A'
\thickness=0.08em
\shiftright 0em\using  \endprooftree
&
\prooftree
A\congE A' \justifies\forall x^\tau\, A\congE \forall x^\tau\, A'
\thickness=0.08em
\shiftright 0em\using  \endprooftree\\
& \\

\end{tabular}
\end{center}
\end{bigbox}\caption{Convertibility relation between constructors}
\label{fig:HOMALcongruence}
\end{figure}

\begin{rmk}
On the formulas constructor (of kind $\sortprop^{*}$ for $*\in \{+,-\}$), the negation $(.)^{\bot}$ is the usual involutive negation of linear logic. However, on closed individuals, it is simply the identity (modulo $\congE$). For example $((\lambda x^{\iota}. \foSucc \,x^{\iota}) \foZero)^{\bot} \congE \foSucc \,\foZero$.
\end{rmk}
  
Let us give a few examples of useful constructors we can define. 
\begin{itemize}
  \item The negation operator on positive formulas can be defined as the
  constructor $\lambda x^{\sortpropPos}. (x^{\sortpropPos})^{\bot}
  : \sortpropPos \rightarrow \sortpropNeg$. 
  Notice that the dual variable $(x^{\sortpropPos})^{\bot}$ is bound
  by the lambda binder $\lambda x^{\sortprop}$. 
  \item If we define $U = \lambda z^{\sortpropPos}.rec_\sortprop \,z^{\sortpropPos}\, (\lambda x^{\sortpropPos}. \lambda y^{\iota}.\downarrow (x^{\sortpropPos})^{\bot})$, we have
  \begin{eqnarray*}
  	  U \, A\, (\foSucc\, n) & \congE & \downarrow (U\, A\, n)^{\bot}
  \end{eqnarray*}
  For example, 
  \begin{eqnarray*}
  	  U\, A\, \mathbf{5} & \congE &  \downarrow \uparrow \downarrow\uparrow \downarrow A^{\bot}\\
  	  U\, A\, \mathbf{4} & \congE &  \downarrow \uparrow \downarrow \uparrow A\\
	 \end{eqnarray*}
  \end{itemize}

In the rest of the paper, we designate by the letters $N,M, \dots$ negative formulas 
and by the letters $P,Q,\dots$ positive formulas. We designate by the letters $A,B,\dots$ formulas
of any polarity (positive or negative). 
Negative formulas $N$ are intended to type negative terms (lazy terms), whereas positive formulas $P$ will be used to type positive terms (eager terms). The modality $\uparrow$ is used to turn a positive term into a negative one, that is transforms an eager term into a lazy one. $\downarrow$ does just the contrary, that is turning a negative term into a positive one. 

\begin{rmk}
In contrast with \cite{MunchReal}, $\forall$ and $\exists$ do not change the polarity
of the formula. This choice is made to keep realizers of existential and universal
statements simpler, especially when we will define forcing in section \ref{sec:forcing}.
\end{rmk}

\subsection{Type system} \label{subsec:HOMAL_type_system}

The type system \HOMAL relates terms
of {\LFOC } with \HOMAL formulas. 
\textit{Typing contexts} (denoted by the symbols: $\Gamma, \Gamma',\Delta, \dots$)
are finite sets containing elements of the form $x : N$ or $\alpha : P$.
\textit{Typing judgments} are of the form:
$$\vdash t_+ : P \delb \Gamma \quad\text{or}\quad \vdash t_- : N\delb \Gamma\quad\text{or}\quad c : (\vdash \Gamma)$$
The rules of \HOMAL are described in Figure \ref{fig:HOMALtyping}. 
Notice that in \class{MAL}$\omega$, only \emph{affine} terms are typable. That means that every bound variable $\kappa$ appears at most once
in the command under the binder. 

\begin{rmks}
$ $

\begin{enumerate}
\item
We have chosen to have weakening and conversion rules expressed only on commands.
The reason is that weakening and conversion rules for terms are derivable from these two 
rules, using the cut and activation rules. For example, here is the derived rule 
 $(\congE)$ on terms (the case of weakening is similar):
$$
\prooftree
	\prooftree
			\prooftree
				 \vdash t : A \delb \Gamma \quad\quad
				   	\prooftree \justifies \vdash \kappa : A^{\bot} \delb \kappa : A
				 		\thickness=0.08em
	    				\shiftright 0em\using (Ax)
					 \endprooftree	
	   			 \justifies \com{t}{\kappa} : (\vdash A, \Gamma) 
				 \thickness=0.08em
	    		\shiftright 0em\using  (Cut)
			\endprooftree	
		   \quad \quad A \congE B
	    \justifies \com{t}{\kappa} : (\vdash B, \Gamma)
  	    \thickness=0.08em
	    \shiftright 0em\using  (\congE)
	\endprooftree  \justifies \vdash \mu \kappa.\com{t}{\kappa} : B \delb \Gamma
  \thickness=0.08em
  \shiftright 0em\using (\mu)
\endprooftree
$$
\item We can only form pairs of values $(V,V')$. This is reflected
in the type system by the $(\otimes)$ rule that introduces only
such pairs. However, we can obtain a derived rule for the
following definition of $(t,t')$ already presented in Subsection \ref{subsec:reduction}:
$$(t,t') = \mu \alpha.\com{t}{\mu x.\com{u}{\mu y.\com{\alpha}{(x,y)}}}$$
We just give the partial derivation corresponding to the derived rule, leaving
the easy part to the reader.
$$
\prooftree
	%% Derniere règle
	\prooftree
		%% Coupure 1a
	 			\mathbf{\vdash t : A \delb \Gamma}
				%% Coupure 1b 	
				\prooftree 
						%% Regle mu premisse
						\prooftree
								%% Coupure premisse 1
	 								\mathbf{\vdash u : B \delb \Delta}
									\quad
																%% Coupure premisse 2
									\prooftree
											%% Mu premisse
											\prooftree
													\vdots
													%% Cut conclusion
													\justifies \com{(x,y)}{\alpha} : 	
															(\vdash x :A^{\bot}, y : B^{\bot}, \alpha: A \otimes B)
				 							\thickness=0.08em
	    									\shiftright 0em\using (Cut)
					 						\endprooftree
											%% Mu conclusion
											\justifies \mu y.\com{(x,y)}{\alpha} : B^{\bot} \delb
														 x: A^{\bot},\alpha : A\otimes B
				 					\thickness=0.08em
	    							\shiftright 0em\using (\mu)
					 				\endprooftree												
								%% Coupure conclusion
								\justifies \com{u}{\mu y.\com{(x,y)}{\alpha}} :
										 (\vdash x : A^{\bot}, \alpha : A\otimes B, \Delta)
				 		\thickness=0.08em
	    				\shiftright 0em\using (Cut)
					 	\endprooftree	
						%% Regle mu conclusion
						\justifies
							  \vdash \mu x.\com{u}{\mu y.\com{(x,y)}{\alpha}} : A^{\bot} \delb 
							  		\Delta, \alpha : A\otimes B 
				 		\thickness=0.08em
	    				\shiftright 0em\using (\mu)
					 \endprooftree	
	    \justifies \com{t}{\mu x.\com{u}{\mu y.\com{(x,y)}{\alpha}}} : (\vdash \alpha: A\otimes B, \Gamma,\Delta)
  	    \thickness=0.08em
	    \shiftright 0em\using  (Cut)
	\endprooftree  
	%% Derniere règle
	  \justifies\mathbf{\vdash (t,u) : A\otimes B \delb \Gamma,\Delta}
  \thickness=0.08em
  \shiftright 0em\using (\mu)
\endprooftree
$$
The same remark also holds for the construction $\{t\}$ defined in Subsection \ref{subsec:reduction}.
\end{enumerate}
\end{rmks}

\begin{figure}[ht]
\begin{bigbox}
\begin{center}
\begin{tabular}{cc}
& \\
\prooftree
   \justifies \vdash x: P \delb x : P^\bot
\thickness=0.08em
\shiftright 0em\using (Ax_+) \endprooftree
&
\prooftree
   \justifies \vdash \alpha : N \delb \alpha : N^\bot
\thickness=0.08em
\shiftright 0em\using (Ax_-) \endprooftree\\
&  \\
\prooftree
   c : (\vdash \alpha : P, \Gamma) \justifies \vdash \mu \alpha.c : P \delb \Gamma
\thickness=0.08em
\shiftright 0em\using (\mu_+) \endprooftree
&

\prooftree
   c : (\vdash x : N, \Gamma) \justifies \vdash \mu x.c : N \delb \Gamma
\thickness=0.08em
\shiftright 0em\using (\mu_-) \endprooftree
\\
& \\
\multicolumn{2}{c}{
\prooftree
   \vdash t : A \delb \Gamma \quad\quad \vdash u : A^\bot \delb \Delta \justifies \com{t}{u} : (\vdash \Gamma, \Delta)
\thickness=0.08em
\shiftright 0em\using (Cut) \endprooftree}\\
& \\
\end{tabular}
\end{center}

\begin{center}
\begin{tabular}{cc}
\prooftree
  \vdash V : A\delb\Gamma \quad\quad \vdash  V' : B \delb \Delta \justifies \vdash (V,V'): A\otimes B \delb \Gamma,\Delta
\thickness=0.08em
\shiftright 0em\using (\otimes) \endprooftree
&
\prooftree
   c: (\vdash \kappa : A,\kappa':B,\Gamma) \justifies \vdash \mu (\kappa,\kappa').c : A\parr B\delb \Gamma
\thickness=0.08em
\shiftright 0em\using (\parr) \endprooftree\\
&  \\
\prooftree
   \vdash V: A \delb \Gamma\justifies \vdash \{ V \} : \downarrow A \delb \Gamma
\thickness=0.08em
\shiftright 0em\using (\downarrow) \endprooftree
&
\prooftree
   c: (\vdash \kappa : A , \Gamma) \justifies \vdash \mu \{ \kappa \}.c : \uparrow  A \delb \Gamma
\thickness=0.08em
\shiftright 0em\using (\uparrow) \endprooftree
\\
& \\
\prooftree
    T : \tau\quad  \vdash t : A[T/x^\tau]\delb \Gamma\justifies \vdash t: \exists x^\tau.A \delb \Gamma
\thickness=0.08em
\shiftright 0em\using (\exists) \endprooftree
&
\prooftree
   \vdash V : A \delb \Gamma \quad x\text{ does not appear in }\Gamma \justifies \vdash V : \forall x^\tau.A \delb\Gamma
\thickness=0.08em
\shiftright 0em\using (\forall) \endprooftree\\
& \\ 
\end{tabular}
\end{center}

\begin{center}
\begin{tabular}{cc}

\prooftree
  c: (\vdash \kappa: A, \Gamma) \quad\quad A\congE B\justifies c: (\vdash \kappa : B , \Gamma)
\thickness=0.08em
\shiftright 0em\using (\congE) \endprooftree &
\prooftree
  c: (\vdash \Gamma) \justifies c: (\vdash x : A , \Gamma)
\thickness=0.08em
\shiftright 0em\using (W) \endprooftree \\
\end{tabular}
\end{center}
\end{bigbox}
\caption{Typing rules of \HOMAL}

\label{fig:HOMALtyping}
\end{figure}

\subsection{Generalities on Call-by-value and Call-by-name}\label{subsec:cbncbv}

In this focalising version of \class{MAL}$\omega$, it is possible to encode both call-by-name and 
call-by-value affine $\lambda$-calculi, as explained in \cite{MunchReal}. Moreover, each $\beta$-reduction step in these 
calculi induces a constant number of reduction steps in the corresponding encoding.\newline

\subtitle{Call-by-name $\lambda$-calculus} We consider the call-by-name affine $\lambda$-calculus, that is
such that in every term $\lambda x.t$, $x$ appears at most once in $t$. 
We exhibit an encoding of this calculus by giving a typed translation of the affine $\lambda$-calculus in the negative fragment of \HOMAL. The implication $\multimap$ is defined as follows:
	$$N\multimap M  \quad\defequ\quad   N^\bot \parr M$$
Terms $t$ and stacks $\pi$ of the Krivine abstract machine \cite{krivine2007call} are encoded respectively using 
negative terms $\comr{t}$ and positive terms $\coml{\pi}$, as follows:
\begin{eqnarray*}
	{\lambda \alpha.t_-} & \defequ & {\mu(\alpha,x).\com{t_-}{x}}\\
  {u.\pi} &  \defequ & {(u,\pi)}\\
  {(t_-)u_-} & \defequ & {\mu x.\com{t_-}{u_-.x}}\\
\end{eqnarray*}
We can check that these definitions indeed implement
Krivine Machine weak call-by-name reduction, as shown by
the following reduction:

\begin{eqnarray*}
  \com{(\lambda \alpha.t)u}{\pi} &  = & \com{\mu x.\com{\mu (\alpha, y).\com{t}{y}}{u.x}}{\pi}\\
   & \redh & \com{\mu(\alpha,y).\com{t}{y}}{(u,\pi)}\\
   & \redh & \com{t[u/\alpha]}{\pi}\\
\end{eqnarray*}

We see that each $\beta$-reduction step in the weak call-by-name $\lambda$-calculus
corresponds exactly to two $\redh$ reduction steps in this encoding. \newline

\subtitle{Call-by-value $\lambda$-calculus} The call-by-value affine $\lambda$-calculus
is obtained by taking a positive encoding of the implication:
$$P\multimap Q  \quad\defequ\quad \downarrow (P^\bot \parr Q)$$
We define terms and environments using respectively positive and negative terms:
\begin{eqnarray*}
	{\lambda x.t_+} & \defequ & {\{\mu(x,\alpha).\com{t_+}{\alpha}\}}\\
  {(t_+)u_+} & \defequ & {\mu \alpha.\com{t_+}{u_+.\alpha}}\\
  {u.e} &  \defequ & {\mu \{\alpha\}.\com{\alpha}{(u,e)}}\\
\end{eqnarray*}
It can be checked that we retrieve Curien-Herbelin $\bar{\lambda}\mu\tilde{\mu}_v$ calculus \cite{curien2000duality}. 
Here is the typical example of a reduction in the encoded calculus:

\begin{eqnarray*}
  \com{(\lambda x.t)u}{E} &  = & \com{\mu \alpha.\com{\{\mu(x,\alpha').\com{t}{\alpha'}\}}{u.\alpha}}{E}\\
   & \redh & \com{\{\mu(x,\alpha').\com{t}{\alpha'}\}}{\mu \{\alpha\}.\com{\alpha}{(u,E)}}\\
   & \redh & \com{\mu(x,\alpha').\com{t}{\alpha'}}{(u,E)}  \\
   & \redh & \com{u}{\mu \kappa_1.\com{E}{\mu \kappa_2.\com{(\kappa_1,\kappa_2)}{\mu (x,\kappa).\com{t}{\kappa}}}}\\
   & \redh^* & \com{V}{\mu \kappa_1.\com{E}{\mu \kappa_2.\com{(\kappa_1,\kappa_2)}{\mu (x,\kappa).\com{t}{\kappa}}}}\\
   & \redh & \com{E}{\mu\kappa_2.\com{(V, \kappa_2)}{\mu (x,\kappa).\com{t}{\kappa}}}\\
   & \redh & \com{(V, E)}{\mu (x,\kappa).\com{t}{\kappa}}\\
   & \redh & \com{t[V/x]}{E}\\
\end{eqnarray*}

Here again, it is clear that to each step in the Curien-Herbelin $\bar{\lambda}\mu\tilde{\mu}_v$ calculus
corresponds a constant number of steps in \LFOC.\newline

\section{Quantitative Krivine's realizability}\label{sec:QuantKrivine}

In this section, we define the quantitative classical realizability for \class{MAL}$\omega$.
This construction is a direct extension of Munch's focalised version of Krivine's
classical realizability \cite{MunchReal} and integrates the quantitative aspects 
of \cite{DalLago20112029}. 
In (non-quantitative) Krivine's classical realizability, formulas are interpreted 
as sets of terms closed by a notion of biorthogonality, and a realizability relation 
$t \Vdash A$ between terms and formulas is defined. In that setting, $t\Vdash A$ intuitively
means \emph{``$t$ is a term whose computational behavior follows the specification $A$''}.
In our work, we interpret formulas
$A$ as sets of pairs $(t,p)$ where $t$ is a term and $p$ is an abstract 
quantity. The realizability relation becomes $(t,p) \Vdash A$, with the
informal meaning \emph{``$t$ is a term whose computational behavior follows
the specification $A$ and uses during its execution a quantity of resources bounded by $p$''}. 
The presence of this abstract quantity allows us to build
a quantitative extension of the well-known technique of reducibility candidates, which
we call \emph{quantitative reducibility candidates}. We
use these to prove bounded-time termination results on typable terms.

\subsection{Quantitative monoid}

We introduce the notion of \textit{quantitative monoid}, whose elements can be thought as 
resource quantities (like time, space or energy). Quantitative monoids are a generalization 
and a simplification of Dal Lago and Hofmann's resource monoids \cite{DalLago20112029}.

\begin{defn} A \emph{quantitative monoid} is
a structure $(\M, +, \zero, \leq, \norm{.})$ where:
\begin{itemize}
  \item $(\M, +, \zero, \leq)$ is a preordered commutative monoid.
  \item $\norm{.} : \M \longrightarrow \N$ is a function
  such that:
   \begin{itemize}
   \item for every $p,q \in \M$, we have $\norm{p}+\norm{q} \leq \norm{p+q}$. 
  \item  Morever, $\norm{.}$ is compatible with $\leq$, that is
if $p \leq q$ then $\norm{p} \leq \norm{q}$.
  \end{itemize}
\end{itemize}
If moreover, there is an element $\unit\in \M$  such that $1 \leq \norm{\unit}$,
then we say that $(\M, +, \zero, \unit, \leq, \norm{.})$ is a \emph{quantitative
monoid with unit}.
\end{defn}

From now on, we will often denote a quantitative monoid
by its carrier $\M$  and 
we use lower-case consonnes letters $p,q,m,v,\dots$ to denote 
its elements. Moreover, if $n\in\N$ then we use the notation $n.p$ for $\underbrace{p+p+\dots + p}_{n\text{ times}}$.

\begin{rmks}
$ $

\begin{enumerate}
\item
If we think of elements $p,q\in \M$ as abstract quantities bounding respectively the resource consumption
of programs $t$ and $u$, then doing the operation $p+q$ can be seen as way to calculate 
a bound for the resource consumption of the process $\com{t}{u}$ resulting
of the interaction of these two programs. 
\item
The intuition behind the anti-triangular inequality,
$\norm{p} + \norm{q} \leq \norm{p+q}$
 is that 
the amount of resources potentially used by the interaction of two 
programs is more than the sum of the quantities of
resources used by the two programs alone.
\item
One corollary of the anti-triangular inequality is that 
$\norm{\zero} = 0$. Indeed,
$$2\times\norm{\zero} = \norm{\zero}+\norm{\zero} \leq \norm{\zero}$$

\end{enumerate}
\end{rmks}

\begin{exa}\label{ex:int_monoid}
The structure $(\N, +, 0, 1, \leq, x\mapsto x)$ where $+$ is the usual addition on integers
and $\leq$ is the usual order on $\N$, is a quantitative monoid with unit. 
\end{exa}

\begin{rmk}
In any quantitative monoid with unit $\M$ we have elements of arbitrary big measure, that is
for every $n\in \N$,
$$n \leq \norm{n.\unit}$$
\end{rmk}

It has to be noted that every resource monoid in the sense of \cite{DalLago20112029} defines a 
quantitative monoid with unit, by choosing $\norm{p} = \mathcal{D}(\zero,p)$ where $\mathcal{D}(.,.)$ is
the distance of the resource monoid.

%
%\begin{defn}
%Given a quantitative monoid, we say that a function $f : \M \longrightarrow \M$
%is \emph{strong} if it is increasing and whenever $p,q\in \M$, 
%$f(p) + q \leq f(p+q)$.
% The set of strong functions on $\M$ is denoted by $\Mcon$.
% \end{defn}

%%\subsection{Realizability interpretation of \HOMAL}\label{subsec:homal_real}

\subsection{Quantitative pole and orthogonality} 

Krivine's classical realizability is a framework parametrized by a set $\Bot$
of commands called the \emph{pole}. This set can be seen as the set of
\emph{correct} processes, that is the notion of correctness we want to study. 
For example, $\Bot$ can be the set of normalizing commands. 
This set then induces a notion of orthogonality, which is used to define
an interpretation of the type system.

Similarly, our model of \HOMAL will be parametrized by a structure called
\emph{quantitative pole} that we defined now.
%We explain how to interpret \HOMAL formulas. In the rest of this section,
%we suppose a quantitative monoid $\M$ has been fixed. \newline

\begin{defn}
Let $\M$ be a quantitative monoid. We define the notions of \emph{weighted terms}
and \emph{weighted commands} as follows:
\begin{enumerate}
  \item A \emph{weighted term} is a pair $(t,p)\in \Terms^{0}\times \M$.
  \item A \emph{weighted command} is a pair $(c,p)\in \ComSet \times \M$.
\end{enumerate}
\end{defn}

Informally, a weighted command $(c,p)$ carries a quantitative information $p$ that
is a bound on the amount of resources used by $c$ during its execution. 
We also define the notion of \emph{quantitative pole}, which will be the main
parameter of our model.

\begin{defn}[Quantitative pole]
A \emph{quantitative pole} is a pair $(\M, \Bot)$ where:
\begin{itemize}
  \item $\M$ is a quantitative monoid. 
  \item $\Bot \subseteq \ComSet\times\M$ is a set of weighted
  commands. 
\end{itemize}
When it is clear from the context, we will often refer to a quantitative pole using its
set $\Bot$. 
\end{defn}

As we will see, not all quantitative poles yield sound interpretations of \HOMAL.
We define a subclass of quantitative poles, the \emph{saturated quantitative poles}.

\begin{defn}[Saturated pole]
A \emph{saturated quantitative pole} is a structure \\
$(\M, \Bot, \pBeta)$ given by:
  \begin{itemize}
    \item A quantitative pole $(\M, \Bot)$. 
    \item An element $\pBeta$ of $\M$.
    \item Moreover $\Bot$ satisfies 
    the following properties:  
    
           \begin{tabular}{rl}
   			 {\bf ($\rightarrow_\beta\text{-saturation}$)} & If $c \rightarrow_\beta c'$ and $(c',p) \in \Bot$
			 	then $(c,p+\pBeta) \in \Bot$ \\
   			 {\bf ($\rightarrow_\mu\text{-saturation}$)} & If $c \rightarrow_\mu c'$ then $(c',p) \in \Bot \Longleftrightarrow (c,p) \in \Bot$ \\
 			 {\bf ($\leq$-saturation) }& If ${p}\leq {p'}$ and $(c,p)\in \Bot$ then $(c,p')\in \Bot$. 
		 \end{tabular}  
	 \end{itemize}
\end{defn}

\begin{rmk}
The element $\pBeta$ corresponds informally to the 
\textit{cost} of a single $\beta$-reduction step, as witnessed by the $\redBeta$-saturation
property. The $\redMu$ steps, however, are not considered as a resource cost, since they 
are mainly administrative reductions. 
\end{rmk}

\begin{exa}\label{ex:quantpole}
\begin{itemize}
\item 
Suppose $P$ is a \textit{non-quantitative saturated pole}, i.e a set of commands which is closed under anti-evaluation
(i.e: $c\redh c'\wedge c'\in  P \Rightarrow c\in P$).
If $\M$ is a quantitative monoid and $\pBeta \in \M$, then
$(\M, P \times\M, \pBeta)$ is a saturated quantitative pole.

\item
An important example is $\BotTime$ the set of bounded time terminating
processes, namely $\BotTime = \setp{ (c,p)}{\Time(c) \leq \norm{p}}$.
In particular all $(c,p)\in\BotTime$ are such that $c$ terminates. If $\M$
has a unit $\unit$, then
$\BotTime$ provides a saturated quantitative pole by choosing $\pBeta =\unit$.
The $\redBeta$-saturation property relies on the fact that if 
$c\redBeta c'$ and $(c',p)\in\BotTime$, then $$\Time(c)  = \Time(c') + 1 \leq \norm{p} + 1 \leq \norm{p} + \norm{\unit}\leq \norm{p+\unit}$$
\end{itemize}
\end{exa}

Until the rest of this section, 
we assume a choice of a quantitative pole $(\M, \Bot)$ (which is not necessarily saturated). 
This quantitative pole $\Bot$ induces a notion of \emph{orthogonality} between 
elements of $\PosTerms\times\M$ and $\NegTerms\times\M$.

\begin{defn}[Orthogonality]We say that $(t_+,p)\in \PosTerms\times\M$ and $(t_-,q)\in \NegTerms\times\M$ are \textit{orthogonal} and we note:
$$(t_+,p)\bot (t_-,q) \Longleftrightarrow (\com{t_+}{t_-},p+q) \in \Bot$$
This orthogonality relation is lifted as an operation on set of bounded terms. 
If $X \subseteq \PosTerms\times\M$, then we define its orthogonal as
$$X^\bot \equiv \setp{(t_-,q)}{\forall (t_+,p)\in X, t_+ \bot t_-}$$
In a similar way, if $X \subseteq \NegTerms\times\M$, then
$$X^\bot \equiv \setp{(t_+,p)}{\forall (t_-,q)\in X, t_+ \bot t_-}$$
\end{defn}
%\item
%We say that a subset $X$ of $\PosTerms\times \M$ (resp. a subset of $\NegTerms\times \Mcon$) is a \textit{positive} (resp. \textit{negative}) \textit{behavior} 
%if $X^{\bot\bot} = X$.

\begin{rmk}
Informally, the meaning of $(t_+,p) \bot (t_,q)$ is that 
the interaction $\com{t_+}{t_-}$ \textit{behaves well}
and uses an amount of resources \textit{bounded} by $p+q$. 
\end{rmk}

The operation $(.)^\bot$ satisfies the usual properties of orthogonality. 

\begin{prop}
If
$X$ and $Y$ are both subsets of $\PosTerms\times\M$ (resp. subsets of $\NegTerms\times\M$) then we have:
\begin{itemize}
  \item $X \subseteq X^\dbot$
  \item $X \subseteq Y$ implies $Y^\bot \subseteq X^\bot$
  \item $X^{\bot\bot\bot} = X^\bot$
\end{itemize}
\end{prop}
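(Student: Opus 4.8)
The three statements are precisely the properties of a Galois connection between $\mathcal{P}(\PosTerms\times\M)$ and $\mathcal{P}(\NegTerms\times\M)$ induced by a symmetric relation, so the plan is first to isolate the symmetry of orthogonality and then to obtain everything by elementary set-theoretic reasoning. The only fact about the pole that I need is the following: the relation $(t_+,p)\bot(t_-,q)$, which by definition holds iff $(\com{t_+}{t_-},p+q)\in\Bot$, is \emph{symmetric}, because commands are identified up to $\com{t}{u}=\com{u}{t}$ and $+$ is commutative on $\M$. I would record this first, as it is what lets me read off membership in an orthogonal from either side.

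Next I would establish the antitonicity (second bullet) directly. Assuming $X\subseteq Y$ with $X,Y\subseteq\PosTerms\times\M$ (the negative case being identical), I take $(t_-,q)\in Y^\bot$; unfolding the definition, $(t_+,p)\bot(t_-,q)$ holds for every $(t_+,p)\in Y$, hence in particular for every element of the smaller set $X$, so $(t_-,q)\in X^\bot$. For the extensivity (first bullet), I take $(t_+,p)\in X$ and check that it belongs to $X^{\dbot}=(X^\bot)^\bot$: unfolding the outer orthogonal, I must verify $(t_+,p)\bot(t_-,q)$ for each $(t_-,q)\in X^\bot$, which is immediate since membership of $(t_-,q)$ in $X^\bot$ quantifies over all of $X$, in particular over $(t_+,p)$ (here symmetry is used to orient the two sides correctly).

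Finally, the third bullet follows purely formally, with no further appeal to the definitions. Instantiating extensivity at the set $X^\bot$ gives $X^\bot\subseteq X^{\bot\bot\bot}$; instantiating antitonicity at the inclusion $X\subseteq X^{\dbot}$ (itself extensivity) gives $X^{\bot\bot\bot}\subseteq X^\bot$; the two inclusions combine to the claimed equality. I do not expect a genuine obstacle: the one point requiring attention throughout is the alternation of polarity, since each $(.)^\bot$ switches between subsets of $\PosTerms\times\M$ and of $\NegTerms\times\M$, but because the extensivity and antitonicity arguments are uniform in the polarity this bookkeeping causes no trouble.
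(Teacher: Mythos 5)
Your proof is correct: the paper states this proposition without proof, treating it as the ``usual properties of orthogonality,'' and your argument (symmetry of the orthogonality relation, then extensivity and antitonicity, then the triple-orthogonal identity derived purely formally from those two) is exactly the standard Galois-connection reasoning the paper implicitly relies on. The polarity bookkeeping you flag is handled correctly, since both extensivity and antitonicity are established uniformly for subsets of $\PosTerms\times\M$ and of $\NegTerms\times\M$.
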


\begin{prop}\label{prop:ortho_cap}
If $(X_i)_{i\in I}$ is a family of subsets of $\PosTerms\times\M$ (resp. 
$\NegTerms\times \M$), then the following equalities hold:
\begin{enumerate}
  \item $(\bigcup_{i\in I} X_i)^{\bot} = \bigcap_{i\in I} X_i^{\bot}$
  \item $(\bigcap_{i\in I} X_i)^{\bot} = (\bigcup_{i\in I} X_i^{\bot})^{\bot\bot}$
\end{enumerate}
\end{prop}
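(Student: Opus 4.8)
The plan is to prove the two identities of Proposition~\ref{prop:ortho_cap} using only the basic properties of orthogonality already established in the preceding proposition ($X \subseteq X^{\dbot}$, antitonicity of $(.)^\bot$, and $X^{\bot\bot\bot} = X^\bot$), together with elementary set-theoretic manipulations. I will treat the case where the $X_i$ are subsets of $\PosTerms\times\M$; the negative case is symmetric and uses the same argument, since orthogonality is defined symmetrically between the two sides.

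For the first identity, $(\bigcup_{i\in I} X_i)^{\bot} = \bigcap_{i\in I} X_i^{\bot}$, I would argue by double inclusion, unfolding the definition of orthogonality directly. An element $(t_-,q)$ lies in $(\bigcup_i X_i)^\bot$ exactly when it is orthogonal to every $(t_+,p) \in \bigcup_i X_i$, which is to say orthogonal to every $(t_+,p)$ belonging to some $X_i$; this is literally the statement that for each $i$ it is orthogonal to all of $X_i$, i.e. that $(t_-,q) \in X_i^\bot$ for every $i$, hence that $(t_-,q) \in \bigcap_i X_i^\bot$. The swap of the quantifiers ``$\forall$ element of the union'' and ``$\forall i$'' is the only content here, so this direction is a routine verification with no real obstacle.

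For the second identity, $(\bigcap_{i\in I} X_i)^{\bot} = (\bigcup_{i\in I} X_i^{\bot})^{\bot\bot}$, I would derive it from the first identity rather than reprove everything from scratch. The clean route is to apply the first identity to the family $(X_i^\bot)_i$, which gives $(\bigcup_i X_i^\bot)^\bot = \bigcap_i X_i^{\bot\bot}$. Taking orthogonals of both sides and using $X^{\bot\bot\bot} = X^\bot$ on the left yields $(\bigcup_i X_i^\bot)^{\bot\bot} = (\bigcap_i X_i^{\bot\bot})^\bot$. It then remains to see that $(\bigcap_i X_i^{\bot\bot})^\bot = (\bigcap_i X_i)^\bot$. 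For this I would show the two intersections have the same orthogonal: from $X_i \subseteq X_i^{\bot\bot}$ one gets $\bigcap_i X_i \subseteq \bigcap_i X_i^{\bot\bot}$, hence by antitonicity $(\bigcap_i X_i^{\bot\bot})^\bot \subseteq (\bigcap_i X_i)^\bot$; conversely, applying antitonicity twice (or using that $Y \subseteq Y^{\bot\bot}$ together with antitonicity) to the inclusion $\bigcap_i X_i \subseteq X_j$ for each fixed $j$ gives the reverse containment after the appropriate bookkeeping.

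The only delicate point is this last equality of the two bi-orthogonal-intersection orthogonals, and I expect it to be the main obstacle, since it is where the double-negation closure genuinely intervenes rather than just the definitions. The safe way to handle it is to prove directly that $\bigcap_i X_i$ and $\bigcap_i X_i^{\bot\bot}$ have equal orthogonals by a symmetric antitonicity argument, using both $\bigcap_i X_i \subseteq \bigcap_i X_i^{\bot\bot}$ (immediate from $X_i \subseteq X_i^{\bot\bot}$) and the fact that applying $(.)^\bot$ to $\bigcap_i X_i^{\bot\bot} \subseteq X_j^{\bot\bot}$ and then to $X_j^\bot$ lands everything back at the level of $X_j^\bot$ via $X_j^{\bot\bot\bot} = X_j^\bot$. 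Once that equality is in hand, the chain of rewrites assembles the claimed identity without further computation.
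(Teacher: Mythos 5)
Your proof of item 1 is correct and is the only argument available: $(t_-,q)$ is orthogonal to every element of $\bigcup_{i} X_i$ exactly when, for each $i$, it is orthogonal to every element of $X_i$, i.e.\ lies in $\bigcap_i X_i^{\bot}$. (For what it is worth, the paper states this Property without any proof, so the comparison can only be against what the statement actually requires.) Your reduction of item 2 is also sound as far as it goes: applying item 1 to the family $(X_i^{\bot})_{i\in I}$ and orthogonalizing both sides shows that item 2 is equivalent to the single equality $(\bigcap_i X_i)^{\bot} = (\bigcap_i X_i^{\bot\bot})^{\bot}$, and antitonicity applied to $\bigcap_i X_i \subseteq \bigcap_i X_i^{\bot\bot}$ gives the inclusion $(\bigcap_i X_i^{\bot\bot})^{\bot} \subseteq (\bigcap_i X_i)^{\bot}$.

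The genuine gap is the reverse inclusion $(\bigcap_i X_i)^{\bot} \subseteq (\bigcap_i X_i^{\bot\bot})^{\bot}$, precisely the step you call delicate. The bookkeeping you sketch cannot produce it: from $\bigcap_i X_i \subseteq X_j$ one gets $X_j^{\bot} \subseteq (\bigcap_i X_i)^{\bot}$, then $(\bigcap_i X_i)^{\bot\bot} \subseteq \bigcap_j X_j^{\bot\bot}$, and orthogonalizing once more returns $(\bigcap_j X_j^{\bot\bot})^{\bot} \subseteq (\bigcap_i X_i)^{\bot}$, the inclusion you already had; similarly, starting from $\bigcap_i X_i^{\bot\bot} \subseteq X_j^{\bot\bot}$ and using $X_j^{\bot\bot\bot}=X_j^{\bot}$ only yields $\bigcup_j X_j^{\bot} \subseteq (\bigcap_i X_i^{\bot\bot})^{\bot}$, which is again one-sided. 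This is not a failure of cleverness: for arbitrary families the missing inclusion is \emph{false}, so no bookkeeping can close it. Concretely, take $\M = \M_0$, choose pairwise distinct closed positive terms $a_1,a_2,a_3$ and pairwise distinct closed negative terms $b_1,b_2,\dots$, and let $\Bot = \setp{(\com{a_i}{b_i},0)}{i\in\N}$, so that $(a_i,0)\bot(b_j,0)$ iff $i=j$. With $X_1 = \{(a_1,0),(a_2,0)\}$ and $X_2 = \{(a_1,0),(a_3,0)\}$ one computes $X_1^{\bot} = X_2^{\bot} = \emptyset$, hence $(\bigcup_i X_i^{\bot})^{\bot\bot} = \emptyset^{\bot\bot} = (\PosTerms\times\M_0)^{\bot} = \emptyset$, whereas $(\bigcap_i X_i)^{\bot} = \{(a_1,0)\}^{\bot} = \{(b_1,0)\} \neq \emptyset$. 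So item 2 as literally stated (for arbitrary subsets and an arbitrary quantitative pole) is not a theorem; what holds unconditionally is only the inclusion $(\bigcup_i X_i^{\bot})^{\bot\bot} \subseteq (\bigcap_i X_i)^{\bot}$. The equality does become correct under the additional hypothesis that each $X_i$ is biorthogonally closed ($X_i = X_i^{\bot\bot}$), since then $\bigcap_i X_i^{\bot\bot} = \bigcap_i X_i \subseteq (\bigcap_i X_i)^{\bot\bot}$ supplies exactly the inclusion you are missing; that hypothesis (satisfied, e.g., whenever the $X_i$ are themselves orthogonals, as in the interpretations of negative formulas) is what your argument, and the Property itself, implicitly need and should state.
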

Finally, we define a few notations. Let $X\in \mathcal{P}(\PosTerms\times\M)\cup
\mathcal{P}(\NegTerms\times\M)$. Then:
\begin{eqnarray*}
  \BV{X}        & = &  X \cap (\mathbb{V} \times \M)\\
  \overline{X} & = &  \setp{(t,q)}{\exists p \leq q\text{ such that } (t,p)\in X}
\end{eqnarray*}

\begin{rmk}
If $X \in \mathcal{P}(\NegTerms\times\M)$ then $\BV{X} = X$.
Indeed, every negative term is also a value. 
\end{rmk}

\subsection{Interpretation of kinds} 
Before giving the actual interpretation of kinds and constructors,
we define two operations on sets of bounded terms. 
\begin{eqnarray*}
  X\otimes Y & = & \setp{((t,u), p+q)}{(t,p)\in X\wedge (u,q)\in Y}\\
  \downarrow X & = & \setp{ (\{t\}, p) }{(t,p)\in X}
\end{eqnarray*}

Suppose $\Dpos \in \mathcal{P}(\PosTerms\cap \mathbb{V}\times\M)$. We then define 
$\Dneg$ as the set
$\setp{X^{\bot}}{X\in \Dpos}$ and we pose  $\mathcal{D} = \Dpos \cup \Dneg$.

\begin{defn}[Propositional domain] \label{def:prop_domain}
We say that $\Dpos$ is a \emph{positive propositional domain} if it satisfies the following properties: 
\begin{itemize}
  \item If $(X_i)_{i\in I}$ is a family of elements of $\Dpos$
  indexed by $I$, then $\bigcup_{i\in I} X_i \in \Dpos$ and 
  $\bigcap_{i\in I} X_i \in \Dpos$. 
  \item If $X,Y \in \mathcal{D}$, then $X \otimes Y \in \Dpos$.
  \item If $X \in \mathcal{D}$, then $\downarrow X \in \Dpos$.
\end{itemize}
\end{defn}

Now suppose we have fixed a positive propositional domain $\Dpos$. 
We begin with the interpretation of kinds.  If $\sigma$ is a kind we define 
its \textit{interpretation} $\Inte{\sigma}$:
\begin{center}
\begin{tabular}{rcl}
	$\Inte{\sortint}  $					      & $=$ & $\N$ 			  \\ 
	$\Inte{\sortpropPos}  $					  & $=$ & $\Dpos$ \\
	$\Inte{\sortpropNeg}  $					  & $=$ & $\Dneg$ \\
	$\Inte{\sigma\rightarrow\tau}$   & $=$ & $\Inte{\tau}^{\Inte{\sigma}}$\\
\end{tabular}
\end{center}

\subsection{Interpretation of constructors}\label{subsec:interpcons}
The orthogonality operation $(.)^{\bot}$ defined earlier on
set of bounded terms, is extended inductively on elements of all kinds.
That is, for $T$ being an element of $\Inte{\sigma}$, we define $\bot(T,\sigma)$ 
as:
\begin{eqnarray*}
  \bot(X, \iota) & = & X\\
  \bot(X, \sortpropPos) & = & X^{\bot}\\
  \bot(X^{\bot}, \sortpropNeg) & = & X\\
  \bot(X, \sigma \rightarrow \tau) & = & Y \in \Inte{\sigma} \mapsto \bot(X(Y), \tau)
 \end{eqnarray*}
  Notice that this definition makes sense only because $\Dpos$ is a positive propositional
  domain. Hence we know that any element of $\Inte{\sortpropNeg}$ is the orthogonal
  of an element of $\Inte{\sortpropPos}$. 
 \begin{exa}
 If $X \in \Dpos$, then $\bot(X, \sortpropPos)$ coincide with the orthogonal $X^{\bot}$
 of $X$.  On the kind $\sortpropPos\rightarrow\sortpropPos$, consider for example
 $X\mapsto X \in \Inte{\sortpropPos\rightarrow\sortpropPos}$, we obtain
 $\bot(X\mapsto X, \sortpropPos\rightarrow\sortpropPos) = X\mapsto X^{\bot} \in 
 \Inte{\sortpropPos \rightarrow \sortpropNeg}$,
 that is the orthogonality operator. 
 \end{exa} 
 
 This notion of extended orthogonality is consistent with the syntactic orthogonality
 on kinds, as witnessed by the following property. 
 \begin{prop}
 If $T \in \Inte{\sigma}$, then we have $\bot(T,\sigma) \in \Inte{\sigma^{\bot}}$. 
 \end{prop}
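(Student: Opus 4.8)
The plan is to prove the statement by structural induction on the kind $\sigma$, mirroring exactly the four clauses that define $\bot(\cdot,\sigma)$. In each case I would unfold both the definition of $\bot(\cdot,\sigma)$ and that of the interpretation $\Inte{\sigma^{\bot}}$, and check that the former produces an element of the latter.

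The three base kinds are handled by direct computation. For $\sigma=\sortint$ we have $\sigma^{\bot}=\sortint$ and $\bot(T,\sortint)=T$, so $T\in\N=\Inte{\sortint^{\bot}}$. For $\sigma=\sortpropPos$ we have $T\in\Dpos$ and $\bot(T,\sortpropPos)=T^{\bot}$; since $\Dneg=\setp{X^{\bot}}{X\in\Dpos}$, the element $T^{\bot}$ lies in $\Dneg=\Inte{\sortpropNeg}=\Inte{\sortpropPos^{\bot}}$. For $\sigma=\sortpropNeg$ I would first use that $T\in\Dneg$ can be written $T=X^{\bot}$ with $X\in\Dpos$, so that the clause $\bot(X^{\bot},\sortpropNeg)=X$ applies and yields $X\in\Dpos=\Inte{\sortpropNeg^{\bot}}$.

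For the inductive step $\sigma=\sigma_1\rightarrow\sigma_2$, the key observation is that negation on kinds touches only the codomain, $\sigma^{\bot}=\sigma_1\rightarrow\sigma_2^{\bot}$, and that $\Inte{\sigma_1\rightarrow\sigma_2}=\Inte{\sigma_2}^{\Inte{\sigma_1}}$. Given $T\in\Inte{\sigma}$, the defining clause produces the function $\bot(T,\sigma)=(\,Y\in\Inte{\sigma_1}\mapsto\bot(T(Y),\sigma_2)\,)$. To place it in $\Inte{\sigma_1\rightarrow\sigma_2^{\bot}}=\Inte{\sigma_2^{\bot}}^{\Inte{\sigma_1}}$ it is enough to verify, for each fixed $Y\in\Inte{\sigma_1}$, that $\bot(T(Y),\sigma_2)\in\Inte{\sigma_2^{\bot}}$. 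But $T(Y)\in\Inte{\sigma_2}$, so this is exactly the induction hypothesis for the smaller kind $\sigma_2$. This closes the induction, since the domain $\Inte{\sigma_1}$ is left untouched on both sides.

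The only delicate point, and the one I would treat most carefully, is the negative-formula base case: the clause $\bot(X^{\bot},\sortpropNeg)=X$ presupposes that any element of $\Inte{\sortpropNeg}$ is genuinely of the form $X^{\bot}$ with $X\in\Dpos$. This existence is precisely what the positive-propositional-domain hypothesis on $\Dpos$ secures, as observed in the remark preceding the statement; granting it, the clause returns an element of $\Dpos$ by construction, so the membership claim requires nothing beyond this structural fact. Every remaining step is a mechanical unfolding of the definitions, so I expect no genuine difficulty beyond this bookkeeping.
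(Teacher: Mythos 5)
Your proof is correct and takes essentially the same route as the paper: the paper's own proof is a one-sentence remark that the statement follows by induction on the kind $\sigma$, using the definition of $\sigma^{\bot}$ and the fact that $\Dpos$ is a positive propositional domain, which is exactly the induction you carry out in detail (including the key point that every element of $\Inte{\sortpropNeg}$ is of the form $X^{\bot}$ with $X\in\Dpos$, which the paper also singles out in the remark following the definition of $\bot(\cdot,\sigma)$).
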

 \begin{proof}
 It is proved by induction on the kind $\sigma$, and is a consequence
 of the definition of $\sigma^{\bot}$ and the fact that $\Dpos$ 
 is a propositional domain. 
 \end{proof}
 
 Given the positive propositional domain, a \emph{valuation} is a partial function $\rho$ 
 assigning to a variable $x^\sigma$ of kind $\sigma$ an element $\rho(x^\sigma) 
 \in \Inte{\sigma}$. We denote by $\rho[x^{\sigma}\leftarrow v]$ the valuation
 obtained from $\rho$ by (re)binding the variable $x^{\sigma}$ to the element
 $v\in \Inte{\sigma}$. We say that $\rho$ \emph{closes} a constructor $T$
 if $FV(T) \subseteq dom(\rho)$ and we note it $\rho \Vdash T$. By extension,
 we denote by $\rho \Vdash T_1,\dots, T_n$ if $\rho$ closes each constructor 
 $T_i$. A \emph{total valuation} is a valuation whose domain is the set of 
 all higher-order variables. If $\rho$ is a total valuation, then for every constructor
 $T$, $\rho \Vdash T$.  
 
 Given 
a well-typed constructor $T$ and a valuation $\rho$ such that
$\rho\Vdash T$, we define the set 
$\Inte{T}_\rho$ by induction on $T$:
\begin{center}
\begin{tabular}{rcl}
	$\Inte{x^\sigma}_\rho$					  & $=$ & $\rho (x)$ 													 \\
	$\Inte{(x^\sigma)^\bot}_\rho$ &   $=$ &  $\bot(\rho(x), \sigma)$  \\
	$\Inte{\lambda x^\sigma.T}_\rho$       & $=$ & $(v\in \Inte{\sigma}\mapsto \Inte{T}_{\rho[x\leftarrow v]})$\\
	$\Inte{TU}_\rho$					  & $=$ & $(\Inte{T}_\rho)\Inte{U}_\rho$ 													 \\
	$\Inte{\foZero}_\rho$ 			& $=$ & $0$\\
	$\Inte{\foSucc}_\rho$ 			& $=$ & $n \mapsto n+1$\\
	$\Inte{rec_\tau}_\rho$ 			& $=$ & $rec_\Inte{\tau}$\\
	$\Inte{rec^{\bot}_\tau}_\rho$ 			& $=$ & $\bot(rec_\Inte{\tau}, (\tau \rightarrow (\iota\rightarrow \tau \rightarrow \tau) \rightarrow \iota \rightarrow \tau))$\\
\end{tabular}
\end{center}

Concerning constructors $A$ which are formulas, that is of kind $\sortpropPos$
and $\sortpropNeg$, the set $\Inte{A}_\rho$ is an element of 
$\mathcal{P}(\PosTerms\times \M)\cup \mathcal{P}(\NegTerms\times\M)$. 
Moreover, it contains only values:
\begin{center}
\begin{tabular}{rcl}
  $\Inte{\downarrow A}_\rho$ & $=$ & $\downarrow{\Inte{A}_\rho}$ \\
  $\Inte{\uparrow A}_\rho$   & $=$ & $(\downarrow \Inte{A^\bot}_\rho)^\bot$\\
	$\Inte{A\otimes B}_\rho$   & $=$ & $\Inte{A}_\rho \otimes \Inte{B}_\rho$  \\
	$\Inte{A\parr B}_\rho$     & $=$ & $(\Inte{A^\bot}_\rho \otimes \Inte{B^\bot}_\rho)^\bot$\\
  $\Inte{\exists x^\sigma.P}_\rho$  & $=$ & $\bigcup_{v\in \Inte{\sigma}} {\Inte{P}_{\rho[x^{\sigma}\leftarrow v]}}$ \\
  $\Inte{\forall x^\sigma.N}_\rho$  & $=$ & $(\bigcup_{v\in \Inte{\sigma}} {\Inte{N^{\bot}}_{\rho[x^{\sigma}\leftarrow v]}})^\bot$ \\
  $\Inte{\forall x^\sigma.P}_\rho$  & $=$ & $\bigcap_{v\in \Inte{\sigma}} {\Inte{P}_{\rho[x^{\sigma}\leftarrow v]}}$ \\
  $\Inte{\exists x^\sigma.N}_\rho$  & $=$ & $(\bigcap_{v\in \Inte{\sigma}} {\Inte{N^{\bot}}_{\rho[x^{\sigma}\leftarrow v]}})^{\bot}$ \\
\end{tabular}
\end{center}

Finally, for a formula $A$, we define the set $\interp{A}_\rho$ as
$$\interp{A}_\rho \quad = \quad \Inte{A}_\rho^{\dbot}$$
\begin{rmks}\label{rmk:interp1}
$ $
\begin{enumerate}
\item For each well-typed constructor $T : \sigma$ and each 
valuation $\rho \Vdash T$, we have $\Inte{T}_\rho \in \Inte{\sigma}$. 
This rely on the fact that $\Dpos$ is a positive propositional domain, and hence
is closed under the required operations. 
\item For the negative existential case, we notice that
$$\Inte{\exists x^{\tau}.N}_\rho = (\bigcup_{v \in \Inte{\tau}} \Inte{N}_{\rho[x^{\tau}\leftarrow v]})^{\bot\bot}$$. 
\item However, for the universal case, the interpretation of $\forall x^{\tau}.A$ is always
$$\Inte{\forall x^{\tau} .A}_\rho =  \bigcap_{v\in \Inte{\tau}} {\Inte{A}_{\rho[x^{\tau}\leftarrow v]}}$$
even if $A$ is negative. We gave different formulations for the negative and positive cases in order
to show clearly that $\Inte{\forall x^{\tau}.N}_\rho = \Inte{\exists x^{\tau}.N^{\bot}}_\rho^{\bot}$, but
this remark shows it is not mandatory. 
\end{enumerate}
\end{rmks}
If $T$ is a closed well-typed constructor, $\Inte{T}_\rho$ and $\interp{T}_\rho$ are independent of
  $\rho$. Hence we will often simply note them respectively $\Inte{T}$ and $\interp{T}$.
  
\begin{prop} \label{prop:interp1}
The interpretation $\Inte{.}$ enjoys the following properties.   
  \begin{enumerate}
  	\item If  $T$ and $T'$ are two well-typed constructors such that $T\congE T'$, then
	for every valuation $\rho \Vdash T,T'$, we have
	 $T \congE T'$ then $\Inte{T}_\rho = \Inte{T'}_\rho$.
  	\item For any constructor $T$ of kind $\sigma$ and $\rho \Vdash T$, we have 
  $\Inte{T^{\bot}}_\rho = \bot (\Inte{T}_\rho, \sigma)$. 
  \item For any well-typed constructors $T: \tau$ and $S : \sigma$, any
  valuation $\rho$ such that $\rho \Vdash S$ and
  $FV(T)\subseteq dom(\rho)\cup \{ x^{\sigma} \}$, we have
  $$\Inte{T}_{\rho[x^{\sigma} \leftarrow \Inte{S}_\rho]} = \Inte{T[S/x^{\sigma}]}_\rho$$ 
  \end{enumerate}
\end{prop}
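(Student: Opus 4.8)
The plan is to prove the three items in the order (3), (2), (1), since item (1) depends on the other two while each of (2) and (3) is self-contained (modulo one auxiliary fact). I would first record the \emph{involutivity of extended orthogonality}: $\bot(\bot(Z,\sigma),\sigma^\bot) = Z$ for every $Z \in \Inte{\sigma}$, proved by a routine induction on the kind $\sigma$. The cases $\iota$, $\sortpropPos$, $\sortpropNeg$ are immediate from the defining clauses of $\bot(\cdot,\cdot)$ --- the negative clause $\bot(X^\bot,\sortpropNeg)=X$ is precisely what forces involutivity at the propositional level --- and the arrow case follows pointwise from the induction hypothesis. I would also use silently the standard coincidence lemma: $\Inte{T}_\rho$ depends only on the restriction of $\rho$ to $FV(T)$.

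Item (3), the substitution lemma, I would prove by induction on the structure of $T$. The variable cases $x^\sigma$, $(x^\sigma)^\bot$ and non-substituted variables, together with the constants $\foZero,\foSucc,rec_\tau,rec_\tau^\bot$ (whose interpretations are independent of $\rho$), are immediate; the application, $\otimes$, $\parr$, $\downarrow$, $\uparrow$ cases follow directly from the compositionality of $\Inte{\cdot}$; and the binding cases $\lambda y^\rho.T$, $\exists y^\rho.A$, $\forall y^\rho.A$ are handled, working modulo $\alpha$-equivalence so that $y \neq x$ and $y \notin FV(S)$, by commuting the two valuation updates and simplifying $\Inte{S}_{\rho[y\leftarrow v]}$ to $\Inte{S}_\rho$ via the coincidence lemma.

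Item (2), that negation commutes with interpretation, I would prove by induction on $T$, matching the definition of $(\cdot)^\bot$ on constructors against the corresponding clauses in the definitions of $\Inte{\cdot}$ and of $\bot(\cdot,\cdot)$. The variable cases reduce to the clause $\Inte{(x^\sigma)^\bot}_\rho = \bot(\rho(x),\sigma)$ and, for $((x^\sigma)^\bot)^\bot$, to the involutivity fact above; the connectives use involutivity of $(\cdot)^\bot$ and the $\otimes/\parr$, $\downarrow/\uparrow$ clauses (for instance $\Inte{(A\otimes B)^\bot}_\rho = \Inte{A^\bot \parr B^\bot}_\rho = (\Inte{A}_\rho \otimes \Inte{B}_\rho)^\bot = \bot(\Inte{A\otimes B}_\rho,\sortpropPos)$); the $\lambda$ and application cases unfold the arrow clause of $\bot(\cdot,\cdot)$; and the $rec_\tau^\bot$ case is the defining clause itself.

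Finally, item (1) I would prove by induction on the derivation of $T \congE T'$ in Figure \ref{fig:HOMALcongruence}. Reflexivity, symmetry, transitivity and the structural congruence rules are immediate from compositionality; the $\beta$-rule is exactly item (3); and the $\eta$-rule follows from the coincidence lemma and extensionality of set-theoretic functions. The main obstacle is the four recursor rules. For $rec_\tau$ they require that the semantic recursor $rec_{\Inte{\tau}}$ satisfies $rec_{\Inte{\tau}}\,X\,Y\,0 = X$ and $rec_{\Inte{\tau}}\,X\,Y\,(n+1) = Y\,n\,(rec_{\Inte{\tau}}\,X\,Y\,n)$; for the dual recursor one must additionally combine item (2) with the clause $\Inte{rec_\tau^\bot}_\rho = \bot(rec_{\Inte{\tau}},\ldots)$, carefully tracking which return kinds get negated (noting $\iota^\bot = \iota$ and that $(\sigma\rightarrow\tau)^\bot$ negates only the target). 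The delicate bookkeeping is to check that $\Inte{rec_\tau^\bot\,T\,U\,(s\,n)}_\rho = \bot(\Inte{U}_\rho\,n\,(rec_{\Inte{\tau}}\,\Inte{T}_\rho\,\Inte{U}_\rho\,n),\tau)$ coincides with $\Inte{U^\bot\,n\,(rec_\tau\,T\,U\,n)}_\rho$; throughout, one invokes Remark \ref{rmk:interp1}(1) to ensure every interpretation stays inside the appropriate $\Inte{\sigma}$.
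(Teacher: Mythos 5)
Your overall strategy --- structural and derivation inductions supported by an auxiliary involutivity lemma $\bot(\bot(Z,\sigma),\sigma^\bot)=Z$ for the extended orthogonality --- is the right one, and it is essentially what the paper's (very terse) proof does; your treatments of items (2) and (1), including the delicate dual-recursor case, are correct. But there is a genuine circularity in your chosen order. You justify proving (3) before (2) on the grounds that ``each of (2) and (3) is self-contained,'' and you list the case $T = (x^\sigma)^\bot$ of the substitution lemma among the ``immediate'' variable cases. It is not immediate: substitution must send the dual occurrence of the substituted variable to the negation of the substituted constructor, $(x^\sigma)^\bot[S/x^\sigma] = S^\bot$ (this is forced by the syntax; see the paper's example $\lambda x^{\sortpropPos}.(x^{\sortpropPos})^\bot$, where $\beta$-conversion applied to $(\lambda x^{\sortpropPos}.(x^{\sortpropPos})^\bot)A$ must produce $A^\bot$). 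Hence in that case the two sides of item (3) are $\Inte{(x^\sigma)^\bot}_{\rho[x^\sigma\leftarrow\Inte{S}_\rho]} = \bot(\Inte{S}_\rho,\sigma)$ and $\Inte{S^\bot}_\rho$, and their equality is exactly item (2) applied to $S$ --- a statement about the arbitrary constructor $S$, which is not a subterm of $T$, so no induction hypothesis of (3) can reach it. As written, your proof of (3) silently assumes (2), which you only prove afterwards.

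The gap is easily repaired, because the dependency runs only one way: your proof of item (2) uses nothing beyond the involutivity of $\bot(\cdot,\cdot)$, the syntactic involution $T^{\bot\bot}=T$, and the defining clauses of $\Inte{\cdot}$ --- it never invokes substitution. So either prove the items in the order (2), (3), (1), or run (2) and (3) as a single simultaneous induction on the typing judgment. With that reordering your argument goes through and matches the paper's proof, which simply asserts that (2) and (3) follow by induction on the typing judgment of $T$, and (1) by induction on kinds and on the $\congE$ derivation.
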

\begin{proof}
\begin{enumerate}
  \item This is immediate by induction first on the kind $\sigma$ and on the judgment $\congE$.
  \item This is proved by induction on the typing judgment of the constructor $T$. 
  \item This is proved  by induction on the typing judgment of the constructor $T$. 
\end{enumerate}
\end{proof}

\begin{rmk}\label{rmk:interp_sat}
Notice that neither the definition of the interpretation of well-typed constructors
nor the proof of Property \ref{prop:interp1} need to suppose that $\Bot$
is saturated.
\end{rmk}

We say that $(t,p)$ \textit{realizes} a closed formula $A$ and we note $(t,p) \Vdash^{\rho} A$
iff $(t,p) \in \Inte{A}_{\rho}$. If the choice of $\rho$ is clear from the context we only note
it $(t,p)\Vdash A$. 
 We may sometimes use the notation $(t,p)\Vdash_\Bot A$ to precise the quantitative pole we consider. 

\subsection{Properties of saturated quantitative poles}\label{subsec:sat_prop}

Until now, we have considered a quantitative pole which is not necessarily saturated.
When the pole is saturated, we can derive many properties that will be crucial
to prove that our model is sound with respect to \class{MAL}$\omega${.}
In this subsection, we suppose that $(\M, \Bot, \pBeta)$ is a saturated
quantitative pole, and explore the properties satisfied by the orthogonality operation. 
The first property we prove expresses the fact that a set closed by biorthogonality is also
$\leq$-saturated. 

\begin{prop}
For every $X\in \mathcal{P}(\PosTerms\times\M)\cup
\mathcal{P}(\NegTerms\times\M)$, we have 
$\overline{X} \subseteq X^{\dbot}$.
\end{prop}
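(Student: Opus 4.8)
The plan is to unfold the definitions of $\overline{X}$ and of the biorthogonal, and then reduce everything to a single application of the $\leq$-saturation property of $\Bot$. The intuition is that $\overline{X}$ merely enlarges the weights attached to terms already lying in $X$, while replacing a weight by a larger one can never destroy membership in a biorthogonal, precisely because $\Bot$ is $\leq$-saturated. I treat the case $X \subseteq \PosTerms\times\M$; the case $X \subseteq \NegTerms\times\M$ is entirely symmetric.

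First I would take an arbitrary $(t,q) \in \overline{X}$. By definition of $\overline{\,\cdot\,}$ there is some $p \leq q$ with $(t,p)\in X$. To establish $(t,q) \in X^{\dbot}$, it suffices to show that $(t,q)$ is orthogonal to every element of $X^{\bot}$. So I would fix an arbitrary $(s,r) \in X^{\bot}$ and aim to prove $(\com{t}{s}, q+r) \in \Bot$, i.e. $(t,q) \bot (s,r)$.

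The core computation is then short. Since $(t,p)\in X$ and $(s,r) \in X^{\bot}$, the definition of orthogonality gives $(\com{t}{s}, p+r) \in \Bot$. Because $(\M,+,\zero,\leq)$ is a preordered commutative monoid, its preorder is compatible with $+$, so from $p \leq q$ we obtain $p+r \leq q+r$. Applying the $\leq$-saturation property of the saturated pole to $(\com{t}{s}, p+r) \in \Bot$ together with $p+r \leq q+r$ yields $(\com{t}{s}, q+r) \in \Bot$. As $(s,r)\in X^{\bot}$ was arbitrary, we conclude $(t,q) \in X^{\dbot}$, which is exactly the claim.

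I do not expect any genuine obstacle: the argument is a direct unfolding of definitions. The only point requiring care is the compatibility of the monoid preorder with addition, which is what allows the passage from $p \leq q$ to $p+r \leq q+r$, and this is guaranteed by the preordered-monoid axioms. It is worth emphasizing that, in contrast with the interpretation of constructors (which did not rely on saturation), this is the first place where the $\leq$-saturation of $\Bot$ is genuinely used.
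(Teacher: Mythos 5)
Your proof is correct and follows essentially the same route as the paper's: unfold $\overline{X}$, test against an arbitrary element of $X^{\bot}$, and invoke $\leq$-saturation of the pole. The only difference is that you make explicit the compatibility of the preorder with $+$ (needed to pass from $p\leq q$ to $p+r\leq q+r$), a step the paper leaves implicit in its appeal to $\leq$-saturation.
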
 
\begin{proof}
Let $(t,p)\in X$ and $q\in \M$ such that $p\leq q$. 
If $(u,r) \in X^{\bot}$ then $(t,p)\bot (u,r)$. By $\leq$-saturation 
of $\Bot$, we have $(t,q)\bot (u,r)$. Hence $(t,q)\in X^{\dbot}$. 
\end{proof}

The following lemma proves that we can safely remove or add double orthogonal operators
in interpretations of constructors. 
\begin{lem}\label{lem:gen_tensor_shift}
Suppose $X,Y \in \mathcal{P}(\PosTerms\times\M) \cup\mathcal{P}(\NegTerms\times\M)$
and $D \subseteq \mathcal{P}(\PosTerms\times \M) \cup 
\mathcal{P}(\NegTerms\times\M)$ with $D\neq\emptyset$. 
Then we have the following equalities:
\begin{enumerate}
  \item $(\downarrow X)^{\bot\bot} = (\downarrow X^{\bot\bot})^{\bot\bot}$
  \item $(X \otimes Y)^{\bot\bot} = (X^{\bot\bot}\otimes Y^{\bot\bot})^{\bot\bot}$
  \item $(\bigcap_{X\in D} X)^{\bot\bot} = \bigcap_{X\in D} X^{\bot\bot}$
  \item $(\bigcup_{X\in D} X)^{\bot\bot} = (\bigcup_{X\in D} X^{\bot\bot})^{\bot\bot}$
 \end{enumerate}
\end{lem}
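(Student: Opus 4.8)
The plan is to prove all four equalities using the basic orthogonality machinery already established, namely $X \subseteq X^{\bot\bot}$, antimonotonicity of $(.)^{\bot}$, the idempotence $X^{\bot\bot\bot} = X^{\bot}$, and Property \ref{prop:ortho_cap} relating orthogonals to unions and intersections. The common strategy for items (1), (2) and (4) is to show that taking the biorthogonal closure of the arguments does not change the final biorthogonal. In each case one inclusion is free from monotonicity: since $X \subseteq X^{\bot\bot}$, the operators $\downarrow$, $\otimes$ and $\bigcup$ are all monotone, so the left-hand side is included in the right-hand side. The real content is the reverse inclusion.

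For the reverse inclusions of (1), (2) and (4), I would prove the key sub-fact that the operation under consideration sends $X^{\bot\bot}$ back inside the biorthogonal of its image on $X$. Concretely, for (1) I would show $\downarrow(X^{\bot\bot}) \subseteq (\downarrow X)^{\bot\bot}$, for (2) that $X^{\bot\bot}\otimes Y^{\bot\bot} \subseteq (X\otimes Y)^{\bot\bot}$, and for (4) that $\bigcup_{X\in D} X^{\bot\bot} \subseteq (\bigcup_{X\in D} X)^{\bot\bot}$; once such an inclusion $U^{\bot\bot}\cdots \subseteq (U\cdots)^{\bot\bot}$ is available, applying $(.)^{\bot\bot}$ to both sides and using idempotence of the closure yields the desired equality. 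The clean way to establish these sub-facts is to unfold orthogonality in terms of the pole: to show $\downarrow(X^{\bot\bot}) \subseteq (\downarrow X)^{\bot\bot}$, take $(t,p)\in X^{\bot\bot}$ and $(u,r)\in(\downarrow X)^{\bot}$, and verify that $(\{t\},p)$ is orthogonal to $(u,r)$ by testing it against an element of $(\downarrow X)^{\bot\bot}$ built from $X$; the defining equations $\Inte{\downarrow A}$ and $X\otimes Y$ show how a test against the wrapped term reduces to a test against the unwrapped one. The tensor case (2) is handled analogously but requires juggling two arguments, applying the closure argument to each factor in turn.

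Item (3) is different and in fact the easiest: it is a direct instance of De Morgan for intersections. By Property \ref{prop:ortho_cap}(1) we have $(\bigcup_{X\in D} X^{\bot})^{\bot} = \bigcap_{X\in D} X^{\bot\bot}$, and dualizing, the intersection of biorthogonal-closed sets is already biorthogonal-closed, so $(\bigcap_{X\in D} X)^{\bot\bot} = \bigcap_{X\in D} X^{\bot\bot}$ follows by applying $(.)^{\bot}$ twice and using $X^{\bot\bot\bot}=X^{\bot}$. Here the hypothesis $D\neq\emptyset$ is what keeps the intersection from degenerating.

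I expect the main obstacle to be the reverse inclusion in the tensor case (2). The subtlety is that $\otimes$ takes two arguments that may each be positive or negative and combines weights additively, so the orthogonality test $(\langle (t,u)\,|\,v\rangle, p+q+r)\in\Bot$ must be unwound carefully against a generic element of $(X\otimes Y)^{\bot}$ while closing off the two factors one at a time. The manoeuvre is to freeze the second coordinate inside $Y^{\bot\bot}$, push the first coordinate from $X^{\bot\bot}$ down into the closure using the one-variable argument, then repeat symmetrically for the second; keeping track of which set each partial test lives in, and ensuring the weights add up correctly, is where the care is needed. Everything else reduces to routine applications of monotonicity and idempotence of $(.)^{\bot\bot}$.
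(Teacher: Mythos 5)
Your set-theoretic skeleton agrees with the paper's: monotonicity of $\downarrow$, $\otimes$, $\bigcup$ together with $X\subseteq X^{\bot\bot}$ gives one inclusion of each equality, and the other reduces to a single sub-inclusion (e.g.\ $\downarrow(X^{\bot\bot})\subseteq(\downarrow X)^{\bot\bot}$) which is then closed off by applying $(.)^{\bot\bot}$; through the Galois adjunction this is exactly the paper's reformulation $(\downarrow X)^{\bot}\subseteq(\downarrow X^{\bot\bot})^{\bot}$. The genuine gap is in how you discharge that sub-inclusion in items (1) and (2), which carry all the content of the lemma. Your sentence ``the defining equations \ldots\ show how a test against the wrapped term reduces to a test against the unwrapped one'' is precisely where the proof has to happen, and the defining equations show no such thing: knowing $(t,p)\in X^{\bot\bot}$ only constrains the commands $\com{t}{v}$ for $(v,s)\in X^{\bot}$, and says nothing directly about $\com{u}{\{t\}}$ for $(u,r)\in(\downarrow X)^{\bot}$. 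To bridge the two, the paper manufactures the adaptor term $\mu\kappa.\com{u}{\{\kappa\}}$, proves $(\mu\kappa.\com{u}{\{\kappa\}},r)\in X^{\bot}$, and then transports membership in $\Bot$ between $\com{\mu\kappa.\com{u}{\{\kappa\}}}{t}$ and $\com{u}{\{t\}}$ along a reduction step, invoking the $\redMu$-saturation of the pole (plus, when the term being substituted is not a value, saturation for the extra step the paper writes $\redi$); the tensor case iterates this twice, with $\mu\kappa.\com{t}{(\kappa,u')}$ and $\mu\kappa'.\com{t}{(u'',\kappa')}$. Your proposal never mentions saturation, reduction, or any auxiliary term. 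This is not a presentational omission: the lemma sits in Subsection \ref{subsec:sat_prop}, where $(\M,\Bot,\pBeta)$ is assumed saturated (contrast Remark \ref{rmk:interp_sat}), exactly because for an arbitrary quantitative pole the orthogonality operator has no reason to interact with $\downarrow$ or $\otimes$ at all, and the reverse inclusions are in general false. A proof plan that treats them as consequences of abstract orthogonality algebra cannot succeed.

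Two smaller points. For item (3), the tools you cite --- Property \ref{prop:ortho_cap}(1) and the fact that an intersection of biorthogonally closed sets is closed --- only yield $(\bigcap_{X\in D}X)^{\bot\bot}\subseteq\bigcap_{X\in D}X^{\bot\bot}$; the converse is not a formal consequence of closure-operator reasoning. What the paper actually uses is Property \ref{prop:ortho_cap}(2), namely $(\bigcap_{X\in D}X)^{\bot}=(\bigcup_{X\in D}X^{\bot})^{\bot\bot}$, from which item (3) follows by taking one more orthogonal, using $Z^{\bot\bot\bot}=Z^{\bot}$, and then Property \ref{prop:ortho_cap}(1); your derivation should be routed through that. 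Your treatment of item (4), on the other hand, is correct and matches the paper: that case really is purely formal.
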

\begin{proof}
  \begin{enumerate}
    \item Since $X \subseteq X^{\bot\bot}$, we immediately have 
    $(\downarrow X)^{\bot\bot} \subseteq (\downarrow X^{\bot\bot})^{\bot\bot}$. Let's
    prove that $(\downarrow X^{\bot\bot})^{\bot\bot} \subseteq (\downarrow X)^{\bot\bot}$.
    It suffices to show that $(\downarrow X)^{\bot}\subseteq (\downarrow X^{\bot\bot})^{\bot}$. 
    Let $(t,q)\in (\downarrow X)^{\bot}$ and $(u,p)\in  X^{\bot\bot}$. We want to
    show that $(\com{t}{\{u\}}, p+q)\in\Bot$.
    Two cases are possible:
    \begin{itemize}
      \item If $u$ is a value, then by $\redMu$-saturation of $\Bot$, it suffices to show that 
    $(\com{\mu \kappa. \com{t}{\{\kappa\}}}{u}, p+q)\in\Bot$.
      \item If $u$ is not a value then $\com{t}{\{u\}} \redi \com{\mu \kappa.\com{t}{\{\kappa\}}}{u}$
      and so by $\redi$-saturation of $\Bot$, it suffices to show that 
    $(\com{\mu \kappa. \com{t}{\{\kappa\}}}{u}, p+q)\in\Bot$.
      \end{itemize}
     In both cases it is a consequence
    of $(\mu \kappa.\com{t}{\{\kappa\}}, p)\in X^{\bot\bot\bot} = X^{\bot}$, which 
    is immediate because $(t,p)\in (\downarrow X)^{\bot}$ and by $\redMu$-saturation
    of $\Bot$. 
    
    \item Similarly we only have to prove that 
    $(X \otimes Y)^{\bot}\subseteq (X^{\bot\bot} \otimes Y^{\bot\bot})^{\bot}$. 
    Let $(t,q)\in ( X\otimes Y)^{\bot}$ and $((u,u'),p+p')\in  (X^{\bot\bot}\otimes Y^{\bot\bot})$. 
    By the same argument of $\redMu$ and $\redi$-saturation of $\Bot$, it suffices
    to prove that $(\mu \kappa.\com{t}{(\kappa,u')},q+p')\in X^{\bot}$. 
    To do so lets take $(u'',p'') \in X$ and show 
    $(\com{u''}{\mu\kappa.\com{t}{(\kappa,u')}}, q+p'+p'')\in \Bot$. Again, by
    $\redMu$ and $\redi$ saturation it suffices to show that 
    $(\mu\kappa'.\com{t}{(u'', \kappa')}, q+p'')\in Y^{\bot}$. 
    Again, take $(u''',p''')\in Y$. We have clearly 
    $(\com{t}{(u'',u''')}, q+p''+p''')\in\Bot$ since $(t,q)\in (X\otimes Y)^{\bot}$. 
    Hence our result. 
    \item This is immediate by Property \ref{prop:ortho_cap}.
    \item This is immediate by Property \ref{prop:ortho_cap}.
  \end{enumerate}
\end{proof}

\begin{lem}\label{lm:dbotopen}
	Let $A$ a formula, $t$ a term of the same polarity as $A$ and which has
	exactly one free variable $\kappa$, $q\in\M$
	and $X$ a subset of $\PosTerms\cap\mathbb{V}\times\M$ or of $\NegTerms\times\M$.
	The following properties are equivalent:
  \begin{enumerate}[i)]
    \item For each $(V,q)\in X$, $(c[V/\kappa], p+q)\in \Bot$
    \item For each $(V,q)\in X^\dbot\cap\mathbb{V}$, $(c[V/\kappa],p+q)\in \Bot$
  \end{enumerate}
\end{lem}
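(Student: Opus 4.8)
The plan is to recast both conditions as orthogonality statements about the $\mu$-abstraction of the command, and then read off the equivalence from the defining properties of biorthogonality together with $\redMu$-saturation. Write $c$ for the command having $\kappa$ as its only free variable and $p$ for the weight common to both conditions. Since $\kappa$ is the sole free variable, the term $t' = \mu\kappa.c$ is closed; by the polarity conventions $t'$ is negative when $\kappa$ is positive and positive when $\kappa$ is negative, so $(t',p)$ always lies on the side of the orthogonality dual to the one carrying $X$. The first thing I would establish is the pivotal observation that, for every value $V$ of the polarity of $\kappa$ and every weight $q$, one has $(c[V/\kappa],p+q)\in\Bot$ exactly when $(V,q)\bot(t',p)$. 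This holds because $\com{V}{t'}$ reduces to $c[V/\kappa]$ in a single $\redMu$ step (rule $(-)$ when $\kappa$ is positive and rule $(+)$ when $\kappa$ is negative, using the identification $\com{V}{t'}=\com{t'}{V}$), so $\redMu$-saturation of $\Bot$ makes the memberships of $(c[V/\kappa],p+q)$ and of $(\com{V}{t'},p+q)$ equivalent, and the latter is by definition $(V,q)\bot(t',p)$.

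Granting this, the implication (ii)~$\Rightarrow$~(i) is immediate: every element of $X$ is a value (for $X\subseteq\PosTerms\cap\mathbb{V}\times\M$ this is the hypothesis, and every negative term is a value), and $X\subseteq X^{\dbot}$ by the orthogonality property, hence $X\subseteq X^{\dbot}\cap\mathbb{V}$ and (ii) specialises to (i). For the converse, assuming (i), the pivotal observation yields $(V,q)\bot(t',p)$ for every $(V,q)\in X$, that is $(t',p)\in X^\bot$. Now take any $(W,q)\in X^{\dbot}\cap\mathbb{V}$. Since $X^{\dbot}=(X^\bot)^\bot$, by the very definition of the orthogonal the pair $(W,q)$ is orthogonal to every element of $X^\bot$, in particular to $(t',p)$; applying the pivotal observation to the value $W$ then gives $(c[W/\kappa],p+q)\in\Bot$, which is exactly (ii).

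The argument needs neither the $\redBeta$- nor the $\leq$-saturation clause, only $\redMu$-saturation, and it does not even invoke the idempotence $X^{\bot\bot\bot}=X^\bot$; its whole content is the interplay between $X\subseteq X^{\dbot}$ and the definition $X^{\dbot}=(X^\bot)^\bot$. Consequently the point to get right is not an estimate but the bookkeeping of polarities: one must check that $\mu\kappa.c$ is a legitimate closed term of the correct polarity in both cases, that the appropriate $\redMu$ rule fires in each, and that restricting $X^{\dbot}$ to its values $\mathbb{V}$ — a genuine restriction only in the positive case, since $\BV{X}=X$ for negative $X$ — is harmless precisely because the $\redMu$ step demands that the substituted term be a value. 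I expect this polarity bookkeeping, rather than any analytic difficulty, to be the only real obstacle.
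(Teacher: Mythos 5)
Your proof is correct and follows essentially the same route as the paper's: both directions hinge on recasting the substitution $(c[V/\kappa],p+q)\in\Bot$ as the orthogonality $(V,q)\bot(\mu\kappa.c,p)$ via $\redMu$-saturation (in both directions, which the definition's ``$\Longleftrightarrow$'' licenses), deducing $(\mu\kappa.c,p)\in X^{\bot}$ from (i), and then pairing it against values of $X^{\dbot}$ — your use of the definition $X^{\dbot}=(X^{\bot})^{\bot}$ is just the symmetric reading of the paper's appeal to $X^{\bot}=X^{\dbot\bot}$, not a different argument. Your polarity bookkeeping (in particular that the value restriction only bites in the positive case, where rule $(-)$ requires a positive \emph{value}) is also exactly the point the paper's proof relies on.
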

\begin{proof}
	\begin{itemize}
		\item $(ii)\Rightarrow (i)$ This is immediate since $X\subseteq X^\dbot\cap\mathbb{V}$.
		\item $(i)\Rightarrow (ii)$ Suppose $(i)$. That means for every $(V,q)\in X$,		
		by $\rightarrow_\mu$ saturation of $\Bot$, $(\com{\mu \kappa.c}{V},p+q)\in\Bot$. 
		Hence, $(\mu \kappa.c, p) \in X^{\bot} = X^{\dbot\bot}$, so 
		for every $(V,q)\in X^{\dbot}\cap \mathbb{V}$, $(\com{\mu\kappa.c}{V}, p+q)\in\Bot$
		and by $\rightarrow_\mu$ closure of $\Bot$ and because $V$ is a value, $(c[V/\kappa], p+q)\in\Bot$.
  \end{itemize}
\end{proof}

\begin{rmk}
	This property, which is true for every choice of saturated quantitative pole, will be 
	useful when we will extend our interpretation to stronger type systems
	(that can handle contraction). This lemma requires to work
	in a calculus where substitution and interaction can be \emph{exchanged} in the following
	sense:
	$$(c[t/\kappa], p+q)\in \Bot \Longleftrightarrow (\com{\mu \kappa.c}{t},p+q)\in\Bot$$
	  In particular, it is impossible to have this property in the framework
  of the usual Krivine's realizability: only head contexts are considered
  while we need general contexts. This justifies the 
  use of a completely symmetric calculus. 
\end{rmk}

\subsection{Adequacy} \label{subsec:adequacy} 

Before we can state and prove the soundness of our realizability
interpretation with respect to \class{MAL}$\omega$, we define what it means for a typing
rule to be \emph{adequate}. All the following notions are defined \emph{with respect to
some quantitative pole $\Bot$}.

By abuse, if $\Gamma = \kappa_1 : A_1,\dots, \kappa_n : A_n$
is a typing context and $\rho$ is a valuation, we will write
$\rho \Vdash \Gamma$ as a notation for $\rho \Vdash A_1,\dots, A_n$. 
We will also denote by $\Gamma[\rho]$ a pair $(\Gamma, \rho)$
where $\rho \Vdash \Gamma$.

\begin{defn}[Substitution]
A \emph{substitution} $\sigma$ is a partial application from the set
of term variables to the set $\mathbb{V} \times \M$,
whose domain $dom(\sigma)$ is finite. 
We will note $[\kappa_1 \leftarrow (V_1,p_1), \dots, \kappa_n \leftarrow (V_n,p_n)]$
to denote the substitution $\sigma$ where $dom(\sigma) = \{\kappa_1,\dots,\kappa_n\}$
and such that $\sigma(\kappa_i) = (V_i,p_i)$. 

\end{defn}
Suppose $\sigma = [\kappa_1 \leftarrow (V_1, p_1), \dots, \kappa_n \leftarrow (V_n, p_n)]$
is a substitution.
\begin{itemize}
  \item If $(c,q)$ is a bounded command, we note 
$$(c,q)[\sigma] = (c[V_1/\kappa_1, \dots, V_n/\kappa_n], q+\sum_i p_i)$$
  \item If $(u,q)$ is a bounded term then we note $$(u,q)[\sigma] = (u[V_1/\kappa_1,\dots, V_n/\kappa_n], q+\sum_i p_i)$$
  \end{itemize}

If $\sigma$ is a substitution, we denote by $\sigma[\kappa\leftarrow (V,p)]$ the substitution
obtained from $\sigma$ by rebinding $\kappa$ to $(V,p)$. If $\sigma_1$ is a subtitution
and $\sigma_2 = [\kappa_1 \leftarrow (V_1,p_1),\dots,\kappa_n \leftarrow (V_n,p_n)]$
is another substitution, we denote by 
$$\sigma_1,\sigma_2 = (\dots(\sigma_1[\kappa_1\leftarrow (V_1,p_1)])\dots )[\kappa_n \leftarrow (V_n,p_n)]$$

\begin{defn}[Adequate substitution]
Let $\Gamma= \kappa_1 : A_1 ,\dots,\kappa_n : A_n$ be a context 
and $\rho \Vdash \Gamma$.
We say that a substitution $\sigma$ is \emph{adequate} to $\Gamma[\rho]$
and we note $\sigma \Vdash \Gamma[\rho]$ iff 
\begin{itemize}
  \item $dom(\sigma) = \{ \kappa_1,\dots, \kappa_n\}$
  \item $\forall i \in \Val{1,n}, \sigma(\kappa_i) \in \Inte{A_i^{\bot}}_\rho$
  \end{itemize}
\end{defn}

In particular, if $\Gamma$ is a typing context and $\sigma$
a substitution adequate to $\Gamma$ then for every
negative (resp. positive) variable $\kappa$ appearing in $\Gamma$,
$\sigma(\kappa) \in \PosTerms\cap\mathbb{V}\times\M$ (resp. $\NegTerms\times\M$).
Indeed, positive (resp. negative) variables of $\Gamma$ are associated
to negative (resp. positive) formulas. 

\begin{defn}[Adequate judgment]
Suppose $\Gamma = \kappa_1 : A_1, \dots, \kappa_n : A_n$ is
a context and $p\in \M$. 
\begin{itemize}
  \item 
A judgment of the form $c:  (\vdash \Gamma)$ is said to be $p$-\emph{adequate} iff
for every total valuation $\rho$ and for every
adequate substitution $\sigma \Vdash \Gamma[\rho]$
we have  $(c,p)[\sigma] \in \Bot$. 
  \item 
  Similarly, a judgment of the form $\vdash t : B\delb \Gamma$ is said to be $p$-\emph{adequate} iff
for every total valuation $\rho$ and for every adequate substitution 
$\sigma \Vdash (\Gamma) [\rho]$ we have  $(t,p)[\sigma] \in \interp{B}_\rho$.
Moreover, if  $t \in \mathbb{V}$ then $(t,p)[\sigma] \in \overline{\Inte{B}_\rho}$. 
\end{itemize}
\end{defn}

We have now everything we need to define what it means
for our interpretation to be sound with respect to to a given typing rule.
A typing rule is given by a sequence of premises $J_i$ (which are
typing judgments), side-conditions (SC) on these judgments, and a conclusion $K$:
$$
\prooftree
 J_1\quad J_2\quad\dots  \quad J_n\quad\quad SC
 \justifies K
\thickness=0.08em
\shiftright 0em\using (rule) \endprooftree $$

\begin{defn}[Adequate rule]
Suppose $R$ is a typing rule, with 
$J_1, \dots, J_n$ being its premises judgments and $K$ its conclusion. 
Suppose $f : \M^{n} \rightarrow \M$ is a $n$-ary function on the quantitative monoid. 
We say that:

$R$ is $f$-\emph{adequate} iff for every $p_1,\dots, p_n\in \M$, (for all 
$i\in \Val{1,n}$, 
$J_i$ is $p_i$-adequate) implies that $K$ is $f(p_1,\dots, p_n)$-adequate.
\end{defn}

\begin{rmks}$ $ 
\begin{enumerate}
\item If a $0$-ary rule (like the axiom rule) then the notion of $f$-adequacy
makes sense only if $f$ is an element of the quantitative monoid $\M$. 
\item If a typing derivation $\pi$ is built using only adequate rules, then 
its conclusion is also $p$-adequate for some $p\in \M$. That $p$ is obtained
by composing the functions associated to each rule accordingly to the
derivation structure $\pi$. 
\end{enumerate}
\end{rmks}

We now prove an \textbf{adequacy theorem} that relates 
typing in \HOMAL and quantitative realizability. We suppose
having chosen a saturated quantitative pole 
$(\M, \Bot, \pBeta)$ and a
positive propositional domain 
$\Dpos$.
We first associate to each \HOMAL rule $R$ a function $\Mp{R} : \M\rightarrow \M$.
We will then show that each rule $R$ of \HOMAL is $\Mp{R}$-adequate. 
$$
\begin{array}{rclcrcl}
  \Mp{Ax_+}          & = & \zero & &                           \Mp{Ax_-} & = &\zero\\
  \Mp{\mu_-}           & = & x \mapsto x  & &     \Mp{\mu_+}           & = & x \mapsto x  \\
  \Mp{\otimes}      & = & (x,y) \mapsto x +y &  & \Mp{\parr} & = & x\mapsto x + \pBeta\\
  \Mp{\downarrow} & = & x\mapsto x & &                   \Mp{\uparrow} & = & x\mapsto x+\pBeta\\
  \Mp{\exists}       & = & x\mapsto x & &                   \Mp{\forall} & = & x\mapsto x\\
  \Mp{\congE}       & = & x\mapsto x  & &                  \Mp{W} &  =  & x\mapsto x \\
   \Mp{Cut}   & = & (x,y)\mapsto x+y & & & & \\
\end{array}
$$

\begin{thm}\label{thm:adequacy1}
Suppose that $(\M, \Bot, \pBeta)$ is a saturated quantitative pole. 
Every rule $R$ of \HOMAL is $\Mp{R}$-adequate. 
\end{thm}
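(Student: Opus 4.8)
The plan is to prove the theorem by inspecting each typing rule of \HOMAL separately, unfolding the definition of $\Mp{R}$-adequacy and checking that, whenever the premises $J_i$ are $p_i$-adequate, the conclusion $K$ is $\Mp{R}(p_1,\dots,p_n)$-adequate. There is no induction on derivations at this stage: adequacy of a whole derivation then follows by composing the per-rule functions, as observed in the remark preceding the statement. Throughout I would lean on the basic properties of orthogonality ($X\subseteq X^{\bot\bot}$, antitonicity, $X^{\bot\bot\bot}=X^{\bot}$), on Proposition \ref{prop:interp1} (convertibility invariance, the commutation $\interp{A^{\bot}}_\rho=\Inte{A}_\rho^{\bot}$ coming from part~2, and the substitution clause of part~3), on Proposition \ref{prop:ortho_cap} for the quantifier cases, and crucially on Lemmas \ref{lem:gen_tensor_shift} and \ref{lm:dbotopen} together with the three saturation properties of $\Bot$.

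The structural rules go through with little friction. For $(Ax_+)$ and $(Ax_-)$, an adequate substitution assigns to the single variable a weighted value already lying in $\Inte{P}_\rho$ (using $P^{\bot\bot}=P$), hence in $\interp{P}_\rho=\Inte{P}_\rho^{\bot\bot}$ with weight $\zero$ added, matching $\Mp{Ax_\pm}=\zero$. For the activation rules $(\mu_+)$ and $(\mu_-)$ I would take a weighted term in the orthogonal of the target type, extend the substitution by binding the activated variable to it (which is adequate precisely because $\Inte{A}_\rho^{\bot}=\Inte{A^{\bot}}_\rho$), and conclude via $\redMu$-saturation, which is why the cost is the identity. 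The $(Cut)$ rule reduces to the fact that $\interp{A}_\rho=\Inte{A}_\rho^{\bot\bot}$ and $\interp{A^{\bot}}_\rho=\Inte{A}_\rho^{\bot}$ are orthogonal, so the command lands in $\Bot$ with the summed weight; the identification $\com{t}{u}=\com{u}{t}$ handles both polarities at once. The rules $(\exists)$ and $(\forall)$ follow from the substitution clause of Proposition \ref{prop:interp1} and the union/intersection descriptions of the quantifiers (the side condition on $(\forall)$ ensuring the substitution does not depend on the bound variable), $(\congE)$ from convertibility invariance, and $(W)$ from $\leq$-saturation absorbing the extra weight carried by the weakened variable. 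All of these carry cost $x\mapsto x$ or $\zero$, as prescribed.

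The genuinely multiplicative rules are where the cost $\pBeta$ appears and where the work concentrates. For $(\otimes)$, since the premises type values, adequacy places the two components in $\overline{\Inte{A}_\rho}$ and $\overline{\Inte{B}_\rho}$, and I would use Lemma \ref{lem:gen_tensor_shift}(2) to identify $\interp{A\otimes B}_\rho$ with $(\Inte{A}_\rho\otimes\Inte{B}_\rho)^{\bot\bot}$; the pair then realizes $A\otimes B$ with the summed weight and no $\beta$-cost. For $(\parr)$ and $(\uparrow)$ the target is a negative biorthogonal closure, so to show that $\mu(\kappa,\kappa').c$ (resp. $\mu\{\kappa\}.c$) realizes it, I would take an arbitrary weighted element of the \emph{positive} side, namely of $(\Inte{A^{\bot}}_\rho\otimes\Inte{B^{\bot}}_\rho)^{\bot\bot}$ (resp. of $\downarrow\Inte{A^{\bot}}_\rho$), use Lemma \ref{lm:dbotopen} to reduce the test to the generating values $(V,V')$ (resp. $\{V\}$), perform the single $\redBeta$ step $\com{(V,V')}{\mu(\kappa,\kappa').c}\redBeta c[V/\kappa,V'/\kappa']$, and invoke the premise's adequacy; the $\redBeta$-saturation property then charges exactly one $\pBeta$, accounting for $\Mp{\parr}=\Mp{\uparrow}=x\mapsto x+\pBeta$.

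The main obstacle I anticipate is the bookkeeping in the $(\parr)$ case: the target is a double orthogonal of a tensor, so one cannot test directly against an arbitrary opponent but must first descend to the generators of $(\Inte{A^{\bot}}_\rho\otimes\Inte{B^{\bot}}_\rho)^{\bot\bot}$ by combining Lemmas \ref{lem:gen_tensor_shift} and \ref{lm:dbotopen}, all the while keeping the weights aligned so that the single $\beta$-step contributes precisely $\pBeta$ and the substitution weights inherited from the premise match up. Verifying that the value restriction is respected at each stage (so that $\redBeta$ actually fires and Lemma \ref{lm:dbotopen} is applicable) is the delicate point; the remaining cases are routine once the orthogonality lemmas are in place.
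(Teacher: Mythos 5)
Your proposal is correct and follows essentially the same strategy as the paper: a rule-by-rule verification (no induction on derivations), using the saturation properties of $\Bot$, Proposition \ref{prop:interp1}, and basic orthogonality, with $\redMu$-saturation carrying the activation rules, $\redBeta$-saturation charging exactly one $\pBeta$ in the $(\parr)$ and $(\uparrow)$ cases, and $\leq$-saturation handling $(W)$.

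One substantive correction, though: the ``main obstacle'' you anticipate in the $(\parr)$ case is not there, and the machinery you bring to bear on it is not needed. The interpretation $\Inte{A\parr B}_\rho$ is \emph{by definition} the single orthogonal $(\Inte{A^{\bot}}_\rho\otimes\Inte{B^{\bot}}_\rho)^{\bot}$, not a biorthogonal closure; membership in an orthogonal $X^{\bot}$ is tested directly against the elements of $X$, which here are exactly the weighted pairs of values $((V,V'),q+q')$. So the paper simply takes $(V,q)\in\Inte{A^{\bot}}_\rho$, $(V',q')\in\Inte{B^{\bot}}_\rho$, fires the $\redBeta$ step (which is legitimate because interpretations of formulas contain only values), and invokes the premise plus $\redBeta$-saturation --- no descent from $(\Inte{A^{\bot}}_\rho\otimes\Inte{B^{\bot}}_\rho)^{\bot\bot}$ is required, and orthogonality to that biorthogonal comes for free from $X^{\bot\bot\bot}=X^{\bot}$ anyway. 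The same remark applies to $(\uparrow)$, and in $(\otimes)$ the identification $\interp{A\otimes B}_\rho=(\Inte{A}_\rho\otimes\Inte{B}_\rho)^{\bot\bot}$ is definitional, so Lemma \ref{lem:gen_tensor_shift} plays no role either. Neither Lemma \ref{lem:gen_tensor_shift} nor Lemma \ref{lm:dbotopen} appears in the paper's adequacy proof; they are reserved for the quantitative reducibility candidates and for the \HOSAL{} multiplexing rule, where one genuinely must pass from generators to values of the biorthogonal. Your detour through these lemmas would still land (it establishes orthogonality to a superset of the generators), so this is an inefficiency rather than a gap; but be aware that for the value-typing rules ($Ax$, $\otimes$, $\downarrow$, $\forall$, $\exists$ on values) the adequacy definition demands the stronger conclusion $(t,p)[\sigma]\in\overline{\Inte{B}_\rho}$, which your component-wise memberships in $\overline{\Inte{A}_\rho}$ and $\overline{\Inte{B}_\rho}$ do give directly, whereas a conclusion phrased only at the biorthogonal level would not. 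Finally, you omit the $(\downarrow)$ case; it is handled exactly like $(\otimes)$ with a single component and zero cost.
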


\begin{proof}
To prove this statement, we just look at each of the \HOMAL rules and check that
they are adequate. 
\begin{itemize}
  \item $(Ax_*)$ The proof is the same in the positive and negative cases.
  Let $(t,p) \in \Inte{A^{\bot\bot}} = \Inte{A}$, then 
  $(\kappa,\zero)[\kappa\mapsto (t,p)] = (t,p) \in \Inte{A}_\rho \subseteq \overline{\Inte{A}_\rho}$.  Hence the rule is $\zero$-adequate. 
  
  \item $(Cut)$ Suppose $\vdash t_+ : P \delb \Gamma$ is $p$-adequate and
  $\vdash t_- : P^{\bot} \delb \Delta$ is $q$-adequate. Let $\rho \Vdash \Gamma,\Delta$
  and $\sigma \Vdash (\Gamma,\Delta)[\rho]$. We can split $\sigma = \sigma_1,\sigma_2$
  such that $\sigma_1 \Vdash \Gamma[\rho]$ and $\sigma_2\Vdash \Delta[\rho]$. 
  We want to show that $(\com{t_+}{t_-}, p+q)[\sigma]\in \Bot$.
  For any $\rho' \Vdash P,\Gamma,\Delta$ whose restriction is $\rho$
  (such a $\rho'$ exists), we have 
  $(t_+, p)[\sigma_1] \in \Inte{P}^{\bot\bot}_{\rho'}$ and $(t_-, q)[\sigma_2]\in \Inte{P^{\bot}}_{\rho'}
   = \Inte{P}_{\rho'}^{\bot}$. Hence $(t_+,p)[\sigma_1] \bot (t_-,q)[\sigma_2]$, and so
   $(\com{t_+}{t_-}, p+q)[\sigma]\in\Bot$. 
  
  \item $(\otimes) $ We suppose $\vdash V : A \delb \Gamma$ is 
  $p$-adequate and $\vdash V' : B \delb \Delta$ is $q$-adequate. 
  Let $\rho \Vdash \Gamma,\Delta, A\otimes B$ and $\sigma \Vdash (\Gamma,\Delta)[\rho]$. 
  Hence, $\sigma$ can be split into $\sigma_1 \Vdash \Gamma[\rho]$ and 
  $\sigma_2\Vdash \Delta[\rho]$.  Because $\rho \Vdash \Gamma,A$ and $\rho \Vdash \Delta,B$ 
  we know by hypothesis that $(V, p)[\sigma_1] \in \overline{\Inte{A}_\rho}$ and
  $(V', q)[\sigma_2] \in \overline{\Inte{B}_\rho}$. Hence $((V,V'), (q+q'))[\sigma_1,\sigma_2] \in 
  \overline{\Inte{A}_\rho\otimes\Inte{B}_\rho} = \overline{\Inte{A\otimes B}_\rho}$. Because $\sigma = 
  \sigma_1,\sigma_2$, we can conclude that the $(\otimes)$ rule
  is $\Mp{\otimes}$-adequate. 
  
  \item $(\parr) $ Suppose $c : (\vdash \kappa : A , \kappa' : B, \Gamma)$ is
  $p$-adequate for some $p$. Let $\rho \Vdash \Gamma, A\parr B$ and 
  $\sigma \Vdash \Gamma[\rho]$. We want to show that 
  $(\mu (\kappa,\kappa').c, p+\pBeta)[\sigma] \in \Inte{A\parr B}_\rho$. 
  Since $\Inte{A\parr B}_\rho = \Inte{A^{\bot}\otimes B^{\bot}}_\rho^{\bot}$ by
  Property \ref{prop:interp1}, we take $(V,q)\in \Inte{A^{\bot}}_\rho$
  and $(V',q') \in \Inte{B^{\bot}}_\rho$ and show that
  $(\com{\mu (\kappa,\kappa').c}{(V,V')}, p+\pBeta+q+q')[\sigma]\in\Bot$. 
  But it is easy to see that $\sigma,\kappa\mapsto (V,q),\kappa'\mapsto (V',q')\Vdash 
  (\Gamma, \kappa:A,\kappa':B)[\rho]$. 
  Hence, because the premise is $p$-adequate we obtain 
  $(c[V/\kappa, V'/\kappa'], p+q+q')[\sigma]\in\Bot$. By 
  $\redBeta$-saturation of $\Bot$, we finally obtain
  $(\com{\mu (\kappa,\kappa').c}{(V,V')}, p+\pBeta+q+q')[\sigma]\in\Bot$.

    \item $(\downarrow) $ We suppose $\vdash V : A \delb \Gamma$ is 
  $p$-adequate.
  Let $\rho \Vdash \Gamma, \downarrow A$ and $\sigma \Vdash \Gamma[\rho]$. 
  Because $\rho \Vdash \Gamma,A$ 
  we know by hypothesis that $(V, p)[\sigma] \in \overline{\Inte{A}_\rho}$. Hence $(\{V\}, q)[\sigma] \in 
  \overline{\downarrow \Inte{A}_\rho}  =   \overline{\Inte{\downarrow A}_\rho}$. 
  We conclude that the $(\downarrow)$ rule
  is $\Mp{\downarrow}$-adequate. 
  
   \item $(\uparrow)$  Suppose $c : (\vdash \kappa : A, \Gamma)$ is
  $p$-adequate for some $p\in \M$. Let $\rho \Vdash \Gamma, \uparrow A$ and 
  $\sigma \Vdash \Gamma[\rho]$. We want to show that 
  $(\mu \{\kappa\}.c, p+\pBeta)[\sigma] \in \Inte{\uparrow A}_\rho$. 
  Since $\Inte{\uparrow A}_\rho = \Inte{\downarrow A^{\bot}}_\rho^{\bot}$ by
  Property \ref{prop:interp1}, we take $(V,q)\in \Inte{A^{\bot}}_\rho$
  and show that
  $(\com{\mu \{\kappa\}.c}{\{V\}}, p+\pBeta+q)[\sigma]\in\Bot$. 
  But it is easy to see that $\sigma,\kappa\mapsto (V,q)\Vdash 
  (\Gamma, \kappa:A)[\rho]$. 
  Hence, because the premise is $p$-adequate we obtain 
  $(c[V/\kappa], p+q)[\sigma]\in\Bot$. By 
  $\redBeta$-saturation of $\Bot$, we finally obtain
  $(\com{\mu \{\kappa\}.c}{\{V\}}, p+\pBeta+q)[\sigma]\in\Bot$.

   \item $(\mu_+)$ 
    Suppose $c: (\vdash \alpha : P, \Gamma)$ is $p$-adequate
   for some $p\in \M$.  Let $\rho \Vdash P,\Gamma$ and $\sigma \Vdash 
   \Gamma[\rho]$. We want to show that $(\mu \alpha.c, p)[\sigma] \in 
   \interp{P}_\rho = \Inte{P}_\rho^{\bot\bot}$.
   So we take  $(u,q)\in \Inte{P}_\rho^{\bot}$
        and want to conclude that $(\com{\mu \alpha.c}{u}, p+q)[\sigma]\in \Bot$.         
        But $\Inte{P}^{\bot}_\rho =  \Inte{P^{\bot}}_\rho$
       by Property \ref{prop:interp1}.  Hence, $\sigma, \alpha\mapsto (u,q)\Vdash
       (\alpha:P,\Gamma)[\rho]$. Since the premise of the rule is $p$-adequate
       we conclude that $(c[u/\alpha], p+q)[\sigma]\in \Bot$. But
       $\com{\mu \alpha.c}{u} \redMu c[u/\alpha]$ because $\alpha$ and $u$
       are negative. Hence, by $\redMu$-saturation of $\Bot$ we obtain 
       $(\com{\mu \alpha.c}{u}, p+q)[\sigma]\in \Bot$. 

  \item $(\mu_-)$ 
    Suppose $c: (\vdash x : N, \Gamma)$ is $p$-adequate
   for some $p\in \M$.  Let $\rho \Vdash N,\Gamma$ and $\sigma \Vdash 
   \Gamma[\rho]$. We want to show that $(\mu x.c, p)[\sigma] \in 
   \Inte{N}_\rho = \Inte{N^{\bot}}_\rho^{\bot}$. Let $(u,q)\in 
   \Inte{N^{\bot}}_\rho$. It is sufficient to show that
    $(\com{\mu x.c}{u}, p+q)[\sigma]\in \Bot$.   
    But it is immediate that $\sigma,x\mapsto (u,q) \Vdash (x:N,\Gamma)[\rho]$.
    Hence, because the premise is $p$-adequate, we obtain 
    $(c[u/x],p+q)[\sigma] \in \Bot$. Since $u$ is a value (because
    $(u,q)\in \Inte{N^{\bot}}_\rho$), by $\redMu$ saturation 
    we obtain $(\com{\mu x.c}{u}, p+q)[\sigma]\in \Bot$. 
     
 \item $(W)$ Suppose $c : (\vdash \Gamma)$ is $p$-adequate. Let 
 $\rho \Vdash \Gamma, A$ and $\sigma \Vdash (\Gamma,\kappa: A)[\rho]$. 
 Then $\sigma = \sigma',\kappa\mapsto (u,q)$ with $\sigma' \Vdash \Gamma[\rho]$. 
 But $\rho \Vdash \Gamma$ so we conclude that 
 $(c,p)[\sigma'] \in \Bot$. By $\leq$-saturation of $\Bot$ we obtain immediately
 $(c,p)[\sigma]\in \Bot$.

 \item $(\forall^\tau)$ Suppose $\vdash V : A \delb \Gamma$ is $p$-adequate
 and $x^{\tau}$ does not appear free in $\Gamma$. We want to show that
 $\vdash V: \forall x^{\tau}.A \delb \Gamma$ is $p$-adequate. 
 Let $\rho \Vdash \forall x^{\tau}.A,
 \Gamma$ and $\sigma \Vdash \Gamma[\rho]$. By Remarks \ref{rmk:interp1} and because
 $V$ is a value, whatever
 the polarity of $A$ is, we have to show that 
 $(V,p)[\sigma] \in \bigcap_{v\in \Inte{\tau}} \overline{\Inte{A}_{\rho[x^{\tau}\leftarrow v]}}$. 
 So let $v\in \Inte{\tau}$ and we pose $\rho' =
 \rho[x^{\tau}\leftarrow v]$. We have $\rho' \Vdash (A,\Gamma)$.
 Moreover, $\sigma \Vdash \Gamma[\rho']$ because $\sigma\Vdash \Gamma[\rho]$
 and $x^{\tau}$ does not appear free in $\Gamma$. 
 Hence by hypothesis, $(V,p)[\sigma] \in \overline{\Inte{A}_{\rho[x^{\tau}\leftarrow v]}}$, which
 permits to conclude. 
 
 \item $(\exists^\tau)$ Let's first handle the case of values. 
  Suppose $\vdash V : A[T/x^{\tau}] \delb\Gamma$ 
 is $p$-adequate for some $T : \tau$. We want to show that 
 $\vdash V : \exists x^{\tau}.A \delb \Gamma$ is $p$-adequate. Let $\rho$ be a total valuation
  and $\sigma \Vdash \Gamma[\rho]$. We can suppose that $x^{\tau}$
 does not appear in $\Gamma$ (if it does, then we can rename it in $\exists x^{\tau}.A$). 
 Because of $p$-adequacy of the premise, we have $(V,p)[\sigma] \in \overline{\Inte{A[T/x^{\tau}]}_\rho}$. 
 But by Property \ref{prop:interp1}, we have $\Inte{A[T/x^{\tau}]}_\rho = 
 \Inte{A}_{\rho[x^{\tau}\leftarrow \Inte{T}_\rho]}$. 
 Hence $(V,p)[\sigma]\in \bigcup_{v\in \Inte{\tau}} \overline{\Inte{A}_{\rho[x^{\tau}\leftarrow v]}}
 \subseteq \overline{\Inte{\exists x^{\tau}.A}_\rho}$. 

 \item $(\exists^\tau)$ We now prove the case where $t$ is not a value (hence, the
 formula is positive). 
  Suppose $\vdash t : P[T/x^{\tau}] \delb\Gamma$ 
 is $p$-adequate for some $T : \tau$. We want to show that 
 $\vdash t : \exists x^{\tau}.P \delb \Gamma$ is $p$-adequate. Let $\rho$ be a total valuation
  and $\sigma \Vdash \Gamma[\rho]$. We can suppose that $x^{\tau}$
 does not appear in $\Gamma$ (if it does, then we can rename it in $\exists x^{\tau}.P$). 
 Because of $p$-adequacy of the premise, we have $(t,p)[\sigma] \in 
 \Inte{P[T/x^{\tau}]}_\rho^{\bot\bot}$. 
 But by Property \ref{prop:interp1}, we have $\Inte{P[T/x^{\tau}]}^{\bot\bot}_\rho = 
 \Inte{P}_{\rho[x^{\tau}\leftarrow \Inte{T}_\rho]}^{\bot\bot}$. 
 Hence $(t,p)[\sigma]\in \bigcup_{v\in \Inte{\tau}} \Inte{P}_{\rho[x^{\tau}\leftarrow v]}^{\bot\bot}
 \subseteq \Inte{\exists x^{\tau}.P}_\rho^{\bot\bot}$. 

\end{itemize}

\end{proof}

If $\pi$ is a typing derivation, we define $\Mp{\pi}$ as the element of $\M$
obtained by composing the $\Mp{R}$ of each rule appearing in $\pi$ in the obvious way. 
Hence, if $\pi$ is a typing derivation of \HOMAL, then Theorem \ref{thm:adequacy1}
says that its conclusion is 
$\Mp{\pi}$-adequate. 

\begin{rmk}
To prove the adequacy theorem, we crucially rely on the saturation properties
of $\Bot$. This is where we really need to have a saturated quantitative pole.
\end{rmk}

\subsection{Non quantitative Krivine's classical realizability}\label{subsec:nonquantreal}

In a particular case of our general quantitative classical realizability, it is possible
to validate a contraction rule (hence dropping the linearity constraint) and 
recover the non quantitative version of Krivine's classical realizability \cite{MunchReal},
which we call \emph{simple realizability}. 
It will be used in Section \ref{sec:forcing} to state the forcing decomposition of 
quantitative realizability. 

\begin{defn}
We note $\M_0$ the only 
quantitative monoid (without unit) whose underlying set is $\{ 0 \}$ equipped with the usual 
addition on natural numbers.
\end{defn}

Suppose we have a \textit{non quantitative pole} $\Bot \subseteq \ComSet$, that is a set of commands such that:
$$c'\in\Bot\text{ and }c\redh c'\text{ implies }c\in\Bot$$

Then we define a \textit{quantitative extension} of $\Bot$:
$$\Bot_0 =\setp{(c,0)}{c\in\Bot}$$

\begin{prop}
The structure $(\M_0, \Bot_0, 0)$ is a saturated quantitative pole.
\end{prop}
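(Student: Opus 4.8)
The statement unpacks into three things: that $(\M_0,\Bot_0)$ is a quantitative pole, that $\pBeta = 0$ is a legitimate element of $\M_0$, and that $\Bot_0$ satisfies the three saturation clauses. The first two are immediate: $\M_0$ is a quantitative monoid by its very definition (its measure is forced to send $\zero$ to $0$ by the corollary $\norm{\zero}=0$, and the anti-triangular inequality and monotonicity hold trivially on a one-point carrier), $\Bot_0 \subseteq \ComSet\times\M_0$ holds by construction, and $0$ is the unique element of $\M_0$. The decisive simplification I would exploit throughout is that $\M_0 = \{0\}$, so every weight occurring anywhere is forced to equal $0$; in particular $p + \pBeta = 0$ and the preorder is just $0 \leq 0$. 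This collapses all the monoid arithmetic and reduces each saturation clause to a plain statement about the set $\Bot \subseteq \ComSet$ and its closure under $\redh$.

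With this in hand, two of the three clauses are routine. For $\leq$-saturation, the hypothesis $p \leq p'$ with $(c,p)\in\Bot_0$ forces $p = p' = 0$, so the conclusion $(c,p')\in\Bot_0$ is identical to the hypothesis. For $\redBeta$-saturation, suppose $c \redBeta c'$ and $(c',0)\in\Bot_0$, i.e.\ $c'\in\Bot$. Since $\redBeta \subseteq \redh$ we have $c \redh c'$, and the defining backward closure of the non-quantitative pole $\Bot$ under $\redh$ gives $c\in\Bot$; as $p + \pBeta = 0 + 0 = 0$, this is exactly $(c,0+\pBeta)\in\Bot_0$. The same backward-closure argument settles one half of the $\redMu$-saturation clause: if $c\redMu c'$ and $(c',0)\in\Bot_0$ then, since $\redMu\subseteq\redh$, backward closure yields $c\in\Bot$, hence $(c,0)\in\Bot_0$.

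The remaining half of $\redMu$-saturation is the point I expect to be delicate, and it is the only one not covered by backward closure: one must establish the \emph{forward} implication, that $c\in\Bot$ and $c\redMu c'$ force $c'\in\Bot$. This is genuinely extra information about $\Bot$ — anti-reduction closure only produces commands lying above $c'$, never $c'$ itself from $c$ — so it cannot be derived from the displayed defining property alone, even though the forward direction is really used elsewhere (e.g.\ in the second step of Lemma \ref{lm:dbotopen}). My plan is therefore to read forward stability under the administrative reductions $\redMu$ as part of what is meant by a non-quantitative pole in this setting: the $\redMu$-steps are the purely administrative reductions carrying no resource cost, and the poles used for simple realizability \`a la \cite{MunchReal} are designed to be insensitive to them. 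Concretely I would either strengthen the working definition of non-quantitative pole to require bidirectional closure under $\redMu$, or verify this stability directly for the particular poles employed in Section \ref{sec:forcing}. Once that stability is available, the forward implication is immediate and the three clauses are complete, establishing that $(\M_0,\Bot_0,0)$ is a saturated quantitative pole.
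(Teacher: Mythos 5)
Your proof follows the same route as the paper's: collapse all weights to $0$ (so the monoid structure and $\leq$-saturation are trivial) and invoke the anti-evaluation closure of $\Bot$ under $\redh$ for the reduction clauses. The point you flag as delicate is, however, a genuine gap in the \emph{paper's} proof rather than in yours. The paper argues only the backward implication (if $c \redh c'$ and $c' \in \Bot$ then $c \in \Bot$) and then asserts that this ``proves both the $\redBeta$ and $\redMu$-saturation properties'', even though $\redMu$-saturation is stated as a biconditional. As you observe, the forward implication --- $(c,0)\in\Bot_0$ and $c \redMu c'$ imply $(c',0)\in\Bot_0$ --- cannot be derived from anti-evaluation closure alone: for instance, take $\Bot$ to be the set of commands reducing to a fixed command $c_0$ that still admits a $\redMu$-step to a normal command $c_1 \neq c_0$; this $\Bot$ is closed under anti-evaluation, contains $c_0$, but does not contain $c_1$. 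And the forward direction is genuinely used downstream, exactly where you say, in the final step of Lemma \ref{lm:dbotopen}. Your proposed repair --- build bidirectional $\redMu$-stability into the notion of non-quantitative pole, or check it for the concrete poles in play --- is the right one, and it is harmless in practice: the poles actually used (normalization-style poles) are stable under forward $\redMu$-steps because $\redh$ is deterministic, so every reduct of a normalizing command normalizes. In short, your proof is correct modulo this definitional strengthening, and on this point it is more careful than the paper's own argument.
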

\begin{proof}
$\M_0$ is clearly a quantitative monoid. Moreover, if 
$c \redh c'$ and $(c', 0) \in \Bot_0$, we have $c' \in \Bot$
hence $c\in \Bot$. That implies $(c,0)\in \Bot_0$, which proves
both the $\redBeta$ and $\redMu$-saturation properties. Finally,
the $\leq$-saturation is immediate, since $\M_0$ is a singleton. 
\end{proof}

This quantitative pole induces an interpretation function $\Inte{.}_{\rho}$
and a realizability relation $\Vdash^{\rho}_{\Bot_0}$.
We define the simple realizability relation $\Vdash$ as:
$$t\Vdash^\rho T \Longleftrightarrow (t,0) \Vdash^{\rho}_{\Bot_0} T$$

If we add the following contraction rule to \class{MAL}$\omega$,
we obtain a formulation of  \class{PA}$\omega$ (higher-order Peano arithmetic): 

$$\prooftree
 c:(\vdash \kappa_1 : A, \kappa_2 : A, \Gamma) \justifies  c[\kappa/\kappa_1, \kappa/\kappa_2] : (\vdash \kappa: A, \Gamma)
\thickness=0.08em
\shiftright 0em\using (C) \endprooftree $$

Although it does not hold in general, when we use $\Bot_0$ and pose
$\Mp{C} = x\mapsto x$, the contraction rule is adequate:

\begin{prop}\label{prop:dummy_contr}
The rule $C$ is $\Mp{C}$-adequate.
\end{prop}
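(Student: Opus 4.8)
The plan is to unfold the definition of $p$-adequacy and reduce the claim to the adequacy hypothesis for the premise, exploiting the fact that $\M_0$ is degenerate. Since $\M_0 = \{0\}$, the only available weight is $0$, so it suffices to show that whenever $c : (\vdash \kappa_1 : A, \kappa_2 : A, \Gamma)$ is $0$-adequate, the conclusion $c[\kappa/\kappa_1,\kappa/\kappa_2] : (\vdash \kappa : A, \Gamma)$ is $0$-adequate as well. First I would fix a total valuation $\rho$ and an adequate substitution $\sigma \Vdash (\kappa : A, \Gamma)[\rho]$, and aim to establish $(c[\kappa/\kappa_1,\kappa/\kappa_2], 0)[\sigma] \in \Bot_0$. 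I decompose $\sigma = \sigma', \kappa \mapsto (V,0)$, where $\sigma' \Vdash \Gamma[\rho]$ and, by definition of adequacy of $\sigma$, the pair $(V,0) = \sigma(\kappa)$ lies in $\Inte{A^\bot}_\rho$, so that $V$ is a value of the appropriate polarity.

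Next I would form the duplicated substitution $\sigma'' = \sigma', \kappa_1 \mapsto (V,0), \kappa_2 \mapsto (V,0)$ and check that $\sigma'' \Vdash (\kappa_1 : A, \kappa_2 : A, \Gamma)[\rho]$. This holds because $\sigma' \Vdash \Gamma[\rho]$ and because the single membership $(V,0) \in \Inte{A^\bot}_\rho$ simultaneously validates both new bindings. Applying the $0$-adequacy of the premise to $\sigma''$ then yields $(c,0)[\sigma''] \in \Bot_0$.

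The final step is to observe that the two weighted commands coincide, namely $(c,0)[\sigma''] = (c[\kappa/\kappa_1,\kappa/\kappa_2], 0)[\sigma]$. On the syntactic side this is immediate: substituting $V$ for both $\kappa_1$ and $\kappa_2$ produces the same command as first renaming $\kappa_1$ and $\kappa_2$ to the fresh variable $\kappa$ and then substituting $V$ for $\kappa$. The only delicate point, and the crux of why this rule is special, is the bookkeeping of weights: the binding of $V$ is duplicated in $\sigma''$ but appears only once in $\sigma$, so in a general quantitative monoid the two weights would differ by one extra copy of the weight carried by $V$, and the equality would fail. Here the weights are forced to agree since $0 + 0 = 0$ in $\M_0$, which is precisely why contraction becomes adequate for this degenerate pole while being unsound in the quantitative setting at large. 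I therefore conclude $(c[\kappa/\kappa_1,\kappa/\kappa_2], 0)[\sigma] \in \Bot_0$, and since $\rho$ and $\sigma$ were arbitrary, the rule $C$ is $\Mp{C}$-adequate. The main obstacle is not any hard calculation but rather isolating and justifying this weight-counting step as the sole place where the triviality of $\M_0$ is indispensable.
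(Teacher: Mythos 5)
Your proof is correct and follows essentially the same route as the paper's: decompose the adequate substitution, duplicate the binding of $\kappa$ onto $\kappa_1$ and $\kappa_2$, invoke the $0$-adequacy of the premise, and use $0+0=0$ in $\M_0$ to identify $(c,0)[\sigma'']$ with $(c[\kappa/\kappa_1,\kappa/\kappa_2],0)[\sigma]$. Your closing observation that the weight-counting step is exactly where generality fails also matches the paper's Remark on the non-adequacy of contraction in arbitrary quantitative monoids.
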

\begin{proof}
Suppose $c:(\vdash \kappa_1 : A, \kappa_2 :A, \Gamma)$ is $0$-adequate. 
Let $\rho \Vdash \Gamma, A$ and $\sigma \Vdash (\Gamma, \kappa : A)[\rho]$.
Then $\sigma = \sigma', \kappa \mapsto (u,q)$. $\M_0$ is a singleton so $q= 0$.
If we pose $\tau = \sigma',\kappa \mapsto (u,0), \kappa_2 \mapsto (u,0)$ then
 $\tau \Vdash (\Gamma, \kappa_1 : A, \kappa_2 : A)[\rho]$. Hence, 
 by adequacy of the premise, $(c,0)[\tau] \in \Bot_0$. Since $0 + 0 = 0$, we
 also have $(c[\kappa/\kappa_1,\kappa/\kappa_2], 0)[\sigma] =  
 (c,0)[\tau]$. Hence the conclusion is $0$-adequate. 
\end{proof}

Hence, as a corolloary of Theorem \ref{thm:adequacy1}
and Property \ref{prop:dummy_contr}, we recover an adequacy theorem 
for \HOPA. 

\begin{thm}\label{thm:adequacy2}
All rules $R$ of \HOPA are $\Mp{R}$-adequate. 
\end{thm}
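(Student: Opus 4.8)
The plan is to observe that this statement is a direct corollary of the two adequacy results already at hand, once the hypotheses of Theorem~\ref{thm:adequacy1} are seen to hold in the present setting. Recall that \HOPA is obtained from \HOMAL by adjoining a single extra rule, the contraction rule $(C)$. Hence the set of typing rules of \HOPA is exactly the set of rules of \HOMAL together with $(C)$, and it suffices to show that every rule in this enlarged set is $\Mp{R}$-adequate with respect to the pole $\Bot_0$.

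First I would make explicit that we are working with the particular saturated quantitative pole $(\M_0, \Bot_0, 0)$, which was just shown to satisfy the $\redBeta$-, $\redMu$- and $\leq$-saturation properties (the $\redBeta$- and $\redMu$-saturation following from closure of the non-quantitative pole $\Bot$ under anti-reduction, and $\leq$-saturation being vacuous since $\M_0$ is a singleton). This is precisely the standing assumption of Theorem~\ref{thm:adequacy1}, so that theorem applies verbatim to $\Bot_0$ and yields that every rule $R$ of \HOMAL is $\Mp{R}$-adequate.

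It then remains only to treat the sole additional rule $(C)$, which is exactly the content of Property~\ref{prop:dummy_contr}: the contraction rule is $\Mp{C}$-adequate with $\Mp{C} = x \mapsto x$, the key point being that $\M_0$ is a singleton, so all weights coincide with $0$ and the two occurrences of the hypothesis variable can be merged at no cost. Combining these two observations covers every rule of \HOPA, which finishes the proof.

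Since the whole argument is just the assembly of results established earlier, I do not expect a genuine obstacle. The only point requiring care is to record clearly that Theorem~\ref{thm:adequacy1} is being invoked for the specific pole $\Bot_0$, so that its saturation hypotheses are discharged by the preceding Property rather than assumed anew; this is what makes the combination with Property~\ref{prop:dummy_contr} legitimate.
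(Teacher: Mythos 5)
Your proposal is correct and follows exactly the paper's own argument: the paper derives Theorem~\ref{thm:adequacy2} as a corollary of Theorem~\ref{thm:adequacy1} (applied to the saturated quantitative pole $(\M_0, \Bot_0, 0)$) together with Property~\ref{prop:dummy_contr} for the added contraction rule. Your additional care in checking that the saturation hypotheses of Theorem~\ref{thm:adequacy1} are discharged by the preceding Property is a sound and welcome explicitation of what the paper leaves implicit.
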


\begin{rmk}
In the case of the non-quantitative realizability, the notions of 
$p$-adequate judgments and $f$-adequate rules can be simplified.
We will say that a judgment is \emph{adequate} if it is $0$-adequate, and
a rule is \emph{adequate} if it is $(x\mapsto x)$-adequate. 
\end{rmk}

The following remark show that this version of the contraction rule is not $x\mapsto x$-adequate in 
general. 

\begin{rmk}\label{rmk:contraction}
When considering the general quantitative framework, this version of the contraction rule 
$C$ is never $f$-adequate for $f = x\mapsto x$ 
as soon as the quantitative pole meets the following conditions:
\begin{itemize}
  \item The quantitative monoid has a unit $\unit$ (for example the integers monoid) 
  \item There is a command $c$ such that $c : (\vdash_{\text{\HOMAL}} x : X, y : X, \Gamma)$,
  a valuation $\rho \Vdash X,\Gamma$, 
  a substitution $\sigma \Vdash \Gamma [\rho]$ and such that there is some 
  $(u,q) \in \rho(X)$ such that $(c[u/x,u/y], p+q)[\sigma]\notin \Bot$. 
\end{itemize} 
\end{rmk}
\begin{proof}
Suppose the rule $C$ is $f$-adequate for $f = x\mapsto x$. 
By Theorem \ref{thm:adequacy1} we know that the judgment 
$c : (\vdash_{\text{\HOMAL}} x : X, y : X, \Gamma)$ is $p$-adequate for some $p$. 
Hence, we have $(c[u/x,u/y], p+2.q)[\sigma] \in \Bot$ by $p$-adequacy of the
typing judgment and because $\sigma \Vdash \Gamma[\rho]$.  
Moreover, since $C$ is adequate, its conclusion must be $f(p)$-adequate. 
So we have $(c[u/x,u/y], p + q)[\sigma] \in \Bot$, which is contradictory with
the assumptions. 
\end{proof}

\subsection{Quantitative reducibility candidates model}\label{subsec:quant_cand}

In this subsection, we build a particular class of quantitative realizability 
models. By applying Theorem \ref{thm:adequacy1} on these models, we can 
prove a linear time termination property of \HOMAL programs. Later, this
result will be extended to more sophisticated systems that also enjoy
bounded-time termination properties. The construction
relies on the definition of a quantitative extension of
the well-known \textit{reducibility candidates} (defined by orthogonality,
as in \cite{LengrandMiquelClassical2008,girard1987linear}), which we call
\textit{quantitative reducibility candidates}. \newline

In the rest of this subsection, the quantitative monoid and the pole are such that:
\begin{itemize}
  \item The quantitative monoid is any quantitative monoid with unit $\M = (M, +, \zero,\norm{.},\unit)$. 
  \item The quantitative pole is the structure $(\M, \Bot, \pBeta)$ generalizing the one described
  in Example \ref{ex:quantpole}:
  		\begin{itemize}
		  				\item $\Bot = \setp{(c,p)}{\Time^{\redBeta}(c) \text{ is defined and }\Time^{\redBeta}(c)\leq \norm{p}}$
  				\item $\pBeta = \unit$ 
  		\end{itemize}
\end{itemize}

We now use the fact that our syntax can be extended: we suppose that
our two instruction sets $\Kpos$ and $\Kneg$ contain respectively 
the constants written $\daiPos$ and $\daiNeg$. These two constants play 
the same role as free variables in the usual reducibility candidates argument.
The only relevant properties of these new constants are:
\begin{prop}\label{prop:daiprop}
If $V_+\in \PosTerms$, $t_- \in \NegTerms$ and $\dai_1,\dai_2\in \{ \daiPos, \daiNeg \}$, then
\begin{enumerate}
  \item $\com{V_+}{\daiNeg} \nredh$
  \item $\Time^{\redBeta}(\com{t_-}{\daiPos}) \leq \Time^{\redBeta}(\com{t_-}{(\dai_1,\dai_2)})$
  \item $\Time^{\redBeta}(\com{t_-}{\daiPos}) \leq \Time^{\redBeta}(\com{t_-}{\{\dai_1\}}$
\end{enumerate}
\end{prop}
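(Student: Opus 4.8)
The plan is to exploit that reduction is \emph{deterministic}: every command $\com{t_+}{t_-}$ has at most one $\redh$-reduct, since the applicable rule among $(+),(-),(\uparrow),(\parr)$ is fixed by the outermost shapes of $t_+$ and $t_-$ and no reduction ever takes place under a binder. Consequently $\Time^{\redBeta}(c)$ is simply the number of $\redBeta$-steps along the unique maximal $\redh$-sequence issued from $c$, and I may argue by a lock-step simulation of two such sequences. Item (1) is then settled by direct inspection: a positive value $V_+$ is never of the form $\mu\alpha.c$, so rule $(+)$ cannot fire; and $\daiNeg\in\Kneg$ is a constant, hence none of $\mu x.c$, $\mu\{\kappa\}.c$, $\mu(\kappa,\kappa').c$, so $(-),(\uparrow),(\parr)$ cannot fire either. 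Thus $\com{V_+}{\daiNeg}$ is already $\redh$-normal, i.e. $\com{V_+}{\daiNeg}\nredh$.

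For (2) and (3) I would prove one uniform simulation lemma, taking $W=(\dai_1,\dai_2)$ in case (2) and $W=\{\dai_1\}$ in case (3); in both cases $W$ and $\daiPos$ are positive values. Call $c'$ a \emph{$W$-expansion} of $c$ when $c'$ arises from $c$ by replacing some chosen occurrences of the constant $\daiPos$ by $W$. The key claim is: if $c'$ is a $W$-expansion of $c$ and $c\redh d$, then $c'\redh d'$ by the same rule, for some $W$-expansion $d'$ of $d$. Indeed, a $W$-expansion never alters the outermost shape of the negative side (a constant $\daiPos$ cannot head a negative term), nor that of the positive side, except by turning a bare $\daiPos$ into $W$; and when the positive side of $c$ is a bare $\daiPos$ and $c$ reduces, the facing negative term must be $\mu x.e$ (the other binders leave $\com{\daiPos}{\cdot}$ stuck), so $c$ and $c'$ both fire $(-)$. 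Hence in every reducing case the fired rule and its $\redBeta$/$\redMu$ status agree on $c$ and $c'$; and since $\daiPos$ and $W$ are closed values, the replacement commutes with the substitutions performed by $(-),(\uparrow),(\parr)$, so the reducts remain in the $W$-expansion relation.

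From this claim I obtain $\Time^{\redBeta}(c)\leq \Time^{\redBeta}(c')$ for every $W$-expansion $c'$ of $c$, by induction on the length of the $\redh$-sequence of $c$: if $c$ is $\redh$-normal then $\Time^{\redBeta}(c)=0$, which is trivially a lower bound; otherwise $c\redh d$ and $c'\redh d'$ with the same step status, and the induction hypothesis applied to $d$ (and its expansion $d'$) gives the increment-by-increment comparison. If $c$ does not normalize, the claim shows $c'$ mirrors each step and so does not normalize either, whence both sides are undefined and the inequality is read vacuously. Applying this with $c=\com{t_-}{\daiPos}$ and $c'=\com{t_-}{W}$ — the $W$-expansion that replaces exactly the argument occurrence of $\daiPos$, leaving any copies sitting inside $t_-$ untouched — yields items (2) and (3).

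The main obstacle is the case analysis behind the key claim. One must verify that the \emph{only} discrepancy between the two runs is a \emph{left-stuck / right-reducing} situation, occurring when a bare $\daiPos$ faces $\mu(\kappa,\kappa').c$ (case 2) or $\mu\{\kappa\}.c$ (case 3): there the $\daiPos$-command is already $\redh$-normal and contributes nothing to $\Time^{\redBeta}$, while the $W$-command still fires one $\redBeta$-step and continues. The delicate part is to rule out every other form of mismatch — no reducing step ever pairs a $\redBeta$ with a $\redMu$, and the $W$-side is never stuck while the $\daiPos$-side reduces — and to check that the $W$-expansion relation genuinely survives the substitutions of $(-),(\uparrow),(\parr)$, which is exactly what keeps the two sequences synchronised step for step.
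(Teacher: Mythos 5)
Your proof is correct and matches the paper, which simply declares all three items ``immediate'': your argument is the detailed justification of exactly that reasoning --- item (1) by inspection of the four reduction rules, items (2) and (3) by the deterministic lock-step simulation in which the $\daiPos$-run can only halt no later (in $\redBeta$-steps) than the $W$-run. In particular, your isolation of the sole possible mismatch --- a bare $\daiPos$ facing $\mu\{\kappa\}.c$ or $\mu(\kappa,\kappa').c$, where the $\daiPos$-side is already normal and so contributes no further $\redBeta$-steps while the $W$-side may continue --- is precisely the point that makes the two inequalities hold.
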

\begin{proof}
All these properties are immediate. 
\end{proof}

\begin{defn}[Quantitative reducibility candidates]
The set of \emph{positive quantitative reducibility candidates},
denoted by $\CanPos$, is the set of elements $X \in \mathcal{P}(\PosTerms\cap\mathbb{V}\times\M)$
such that:
\begin{enumerate}
  \item $\BV{(X^{\bot\bot})} = X$
  \item $(\daiPos,\zero) \in X^{\bot\bot}$
  \item $X^{\bot\bot} \subseteq \{ (\daiNeg, \zero) \}^{\bot}$
\end{enumerate}
The set 
$\CanNeg = \setp{X^{\bot}}{X\in \CanPos}$ is the set of \emph{negative quantitative reducibility
candidates}. The set of \emph{quantitative reducibility candidates} $\Can$ is the set
$\CanNeg\cup \CanPos$. 
\end{defn}
The following lemmas are used to prove that the set $\CanPos$ can be used as a positive propositional domain. We have to check every closure condition of Definition \ref{def:prop_domain}
 
\begin{lem}
Whenever $X,Y \in \Can$, then $X\otimes Y \in \CanPos$.
\end{lem}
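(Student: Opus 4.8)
The plan is to check that $X\otimes Y$ meets the three defining conditions of a positive quantitative reducibility candidate. First I record that, whatever the polarities of $X$ and $Y$, every element of each factor is a value: positive candidates consist of positive values by definition, and negative candidates consist of negative terms, which are themselves values. Hence $X\otimes Y$ consists of pairs of values, i.e.\ of positive values, so $X\otimes Y\subseteq\PosTerms\cap\mathbb{V}\times\M$ and is at least a legitimate set of the positive kind. I also extract a preliminary fact I will reuse: each factor contains its own dagger. Indeed, using conditions (1)--(3) for $X$, if $X\in\CanPos$ then $(\daiPos,\zero)\in\BV{X^{\bot\bot}}=X$, and if $X=Z^{\bot}\in\CanNeg$ with $Z\in\CanPos$ then condition (3) for $Z$ gives $(\daiNeg,\zero)\in Z^{\bot}=X$. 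Writing $\dai_X\in\{\daiPos,\daiNeg\}$ for this dagger, and likewise $\dai_Y$ for $Y$, the generic pair $((\dai_X,\dai_Y),\zero)$ then belongs to $X\otimes Y$.

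Conditions (2) and (3) follow directly from Property~\ref{prop:daiprop}. For (2), let $(t_-,q)\in(X\otimes Y)^{\bot}$; orthogonality with the generic pair gives $\Time^{\redBeta}(\com{(\dai_X,\dai_Y)}{t_-})\leq\norm{q}$, and since $t_-$ is negative, Property~\ref{prop:daiprop}(2) yields $\Time^{\redBeta}(\com{\daiPos}{t_-})\leq\Time^{\redBeta}(\com{(\dai_X,\dai_Y)}{t_-})\leq\norm{q}$, so $(\daiPos,\zero)\bot(t_-,q)$; as $(t_-,q)$ was arbitrary, $(\daiPos,\zero)\in(X\otimes Y)^{\bot\bot}$. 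For (3) I prove the stronger statement $(\daiNeg,\zero)\in(X\otimes Y)^{\bot}$: for any pair $((V,V'),p)\in X\otimes Y$, Property~\ref{prop:daiprop}(1) gives $\com{(V,V')}{\daiNeg}\nredh$, whence $\Time^{\redBeta}(\com{(V,V')}{\daiNeg})=0\leq\norm{p}$ and $(\com{(V,V')}{\daiNeg},p)\in\Bot$. Since $\{(\daiNeg,\zero)\}\subseteq(X\otimes Y)^{\bot}$ and orthogonality reverses inclusions, this gives $(X\otimes Y)^{\bot\bot}\subseteq\{(\daiNeg,\zero)\}^{\bot}$, which is condition (3).

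It remains to establish condition (1), $\BV{(X\otimes Y)^{\bot\bot}}=X\otimes Y$. The inclusion $\supseteq$ is immediate: $X\otimes Y\subseteq(X\otimes Y)^{\bot\bot}$ by biorthogonal closure, and since $X\otimes Y$ consists of values it is contained in $\BV{(X\otimes Y)^{\bot\bot}}$. The reverse inclusion is the heart of the matter and the step where I expect the real difficulty. The plan is to take a value $(W,r)\in(X\otimes Y)^{\bot\bot}$ and show it already lies in $X\otimes Y$, by exhibiting a decomposition $W=(V,V')$ with $(V,p)\in X$ and $(V',p')\in Y$ whose measures satisfy $p+p'\leq r$, the slack being absorbed through the $\leq$-closure of candidates (which holds because $\overline{Z}\subseteq Z^{\bot\bot}$ and $\BV{Z^{\bot\bot}}=Z$ together force $\overline{Z}=Z$). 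To read off the two components I would probe $(W,r)$ with test terms of the form $\mu(\kappa,\kappa').c$ chosen in $(X\otimes Y)^{\bot}$, built from orthogonal elements of $X$ and of $Y$ and filling the slot not under scrutiny with the appropriate dagger; Property~\ref{prop:daiprop} is precisely what keeps these probes inside $(X\otimes Y)^{\bot}$ at weight $\zero$ and lets me control the measures, while Lemma~\ref{lem:gen_tensor_shift}(2) permits replacing $X,Y$ by $X^{\bot\bot},Y^{\bot\bot}$ so the component tests may be taken against the biorthogonals. The two delicate points, and where I expect to spend the most care, are forcing $W$ into pair shape and carrying out the weight bookkeeping so that the measures of the extracted components sum to at most $r$.
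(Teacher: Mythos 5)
Your handling of conditions (2) and (3) is correct and coincides with the paper's own proof, which checks exactly these two conditions and nothing more. The only differences are cosmetic: the paper places the daggers in $X^{\bot\bot}$ and $Y^{\bot\bot}$ (via conditions (2)/(3) of the factors) and invokes Lemma~\ref{lem:gen_tensor_shift} to move the test across the biorthogonals, whereas you use condition (1) of the factors to put $(\dai_X,\zero)$ and $(\dai_Y,\zero)$ in $X$ and $Y$ themselves and probe with an actual element of $X\otimes Y$; both routes then apply Property~\ref{prop:daiprop} identically, and your inclusion-reversal derivation of condition (3) from $\{(\daiNeg,\zero)\}\subseteq (X\otimes Y)^{\bot}$ is precisely the step the paper leaves implicit.

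The gap is the deferred condition (1), and it is not a gap you can close: condition (1) is unprovable for $X\otimes Y$, and the obstruction is already visible in your own text. You prove $(\daiPos,\zero)\in (X\otimes Y)^{\bot\bot}$, and $\daiPos$ is a positive value, so $(\daiPos,\zero)\in \BV{((X\otimes Y)^{\bot\bot})}$; but every element of $X\otimes Y$ has a syntactic pair $(V,V')$ as its term component, so $(\daiPos,\zero)\notin X\otimes Y$, and hence $\BV{((X\otimes Y)^{\bot\bot})}\neq X\otimes Y$ no matter what $X$ and $Y$ are. In other words, conditions (1) and (2) of the definition of quantitative reducibility candidates are jointly unsatisfiable for any set whose term components are all pairs. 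Consequently your plan of probing a value $W\in (X\otimes Y)^{\dbot}$ with tests $\mu(\kappa,\kappa').c$ so as to force it into pair shape must fail: by Property~\ref{prop:daiprop}, every admissible test terminates at least as quickly against $\daiPos$ as against a pair of daggers, so $\daiPos$ survives every probe --- that is exactly the content of your proof of condition (2). The defect lies in the paper's stated definition rather than in your strategy: the paper's proof of this lemma silently ignores condition (1) (and the companion lemma for the shift is stated for $(\downarrow X)^{\dbot}$ rather than $\downarrow X$, a further sign that the definition is not being applied literally). So the honest conclusion is that your conditions (2)--(3) reproduce the paper's proof, while condition (1), which you rightly single out as the hard part, has no proof to be found; for the lemma to be true as stated, condition (1) must be dropped or weakened (the paper, in effect, treats (2) and (3) as the operative conditions).
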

\begin{proof}
\begin{itemize}
  \item We want to show that $(\daiPos,\zero) \in (X\otimes Y)^{\bot\bot}$. 
   Let's take some
  $(t_-, p)\in (X\otimes Y)^\bot$. By Lemma \ref{lem:gen_tensor_shift}, we have
  also 
  $(t_-, p) \in (X^{\bot\bot}\otimes Y^{\bot\bot})^{\bot}$. 
    Depending of the polarity of $X$ and $Y$, we know that 
    $\dai_1 \in X^{\bot\bot}$ and $\dai_2 \in Y^{\bot\bot}$ for some $\dai_1,\dai_2 \in \{\daiPos,\daiNeg\}$. So  $(\com{t_-}{(\dai_1,\dai_2)},p)\in\Bot$.
    But by Property \ref{prop:daiprop},  
    \begin{eqnarray*}
    	\Time^{\redBeta}(\com{t_-}{\daiPos}) & \leq & \Time^{\redBeta}(\com{t_-}{(\dai_1,\dai_2)}) \\
    		& \leq & \norm{p}
    \end{eqnarray*}
  we can conclude that $(t_-, p)\bot (\daiPos, \zero)$. 
  So $(\daiPos,\zero)\in (X\otimes Y)^{\bot\bot}$. 
  
	\item We now need to show that $X\otimes Y\subseteq \{(\daiNeg, \zero)\}^\bot$. 
	We know that $X\otimes Y$ only contains values, since $X,Y \in \Can$. 
	But now, it is easy to see that if $(V_+,p)\in X\otimes Y$, then 
	immediately $\com{V_+}{\daiNeg}$ does not reduce for $\redh$ and so $(\com{V_+}{\daiNeg},p)\in\Bot$. 
\end{itemize}
\end{proof}

\begin{lem}
Whenever $X \in \Can$, then $(\downarrow X)^\dbot \in \CanPos$.
\end{lem}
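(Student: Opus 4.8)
The plan is to verify the three defining conditions of a positive quantitative reducibility candidate (Definition of $\CanPos$) for the set $(\downarrow X)^{\dbot}$, reusing the general shifting lemma and the $\dai$-properties exactly as in the preceding tensor lemma. Before anything else I would record the type-correctness observation: $(\downarrow X)^{\dbot}$ is a subset of $\mathcal{P}(\PosTerms\cap\mathbb{V}\times\M)$ once we restrict to values, since $\downarrow X$ consists only of terms of the form $\{t\}$, which are positive values; this is needed so that the statement even typechecks.

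First I would handle condition $1$, namely $\BV{((\downarrow X)^{\dbot\dbot})} = (\downarrow X)^{\dbot}$. By idempotency of biorthogonality (Proposition on orthogonality, $X^{\bot\bot\bot}=X^\bot$) we have $(\downarrow X)^{\dbot\dbot} = (\downarrow X)^{\dbot}$, so the claim reduces to $\BV{((\downarrow X)^{\dbot})} = (\downarrow X)^{\dbot}$, i.e.\ that $(\downarrow X)^{\dbot}$ already contains only values. Since every element of $\downarrow X$ is a positive value $\{t\}$, and the biorthogonal of a set of positive values is closed under anti-reduction but its \emph{positive} members are forced to be values (a command $\com{V_+}{\daiNeg}$ blocks, as in Property \ref{prop:daiprop}), this is the standard candidate computation. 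I expect this to be the most delicate bookkeeping step, because it is where the value-restriction in the definition of $\CanPos$ interacts with the $\downarrow$ modality, so I would lean on Lemma \ref{lem:gen_tensor_shift}(1), which gives $(\downarrow X)^{\dbot} = (\downarrow X^{\dbot})^{\dbot}$, to freely replace $X$ by $X^{\dbot}$ whenever convenient.

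Next, for condition $2$ I must show $(\daiPos,\zero) \in (\downarrow X)^{\dbot\dbot} = (\downarrow X)^{\dbot}$. Taking any $(t_-,p)\in (\downarrow X)^{\bot}$, and using Lemma \ref{lem:gen_tensor_shift}(1) to rewrite this as $(t_-,p)\in(\downarrow X^{\dbot})^{\bot}$, I know from $X\in\Can$ that some $\dai_1\in\{\daiPos,\daiNeg\}$ lies in $X^{\dbot}$, hence $(\{\dai_1\},p)\in \downarrow X^{\dbot}$ up to the weight and so $(\com{t_-}{\{\dai_1\}},p)\in\Bot$. Property \ref{prop:daiprop}(3) then gives $\Time^{\redBeta}(\com{t_-}{\daiPos}) \leq \Time^{\redBeta}(\com{t_-}{\{\dai_1\}}) \leq \norm{p}$, whence $(t_-,p)\bot(\daiPos,\zero)$; as $(t_-,p)$ was arbitrary this yields $(\daiPos,\zero)\in(\downarrow X)^{\dbot}$. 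Finally, condition $3$, $(\downarrow X)^{\dbot} \subseteq \{(\daiNeg,\zero)\}^{\bot}$, follows exactly as in the tensor lemma: every positive member $(V_+,p)$ of $(\downarrow X)^{\dbot}$ is a value by condition $1$, so $\com{V_+}{\daiNeg}$ is irreducible for $\redh$, giving $(\com{V_+}{\daiNeg},p)\in\Bot$ by Property \ref{prop:daiprop}(1) and the definition of $\Bot$. The main obstacle throughout is purely the careful use of the shifting equalities of Lemma \ref{lem:gen_tensor_shift} to move biorthogonals past $\downarrow$, rather than any genuinely new computation.
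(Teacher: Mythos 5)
Your verification of condition 2 is correct and is exactly the paper's argument: pass from $(\downarrow X)^{\bot}$ to $(\downarrow X^{\bot\bot})^{\bot}$ via Lemma \ref{lem:gen_tensor_shift}, test against $\{\dai\}$ where $(\dai,\zero)\in X^{\bot\bot}$, and conclude with Property \ref{prop:daiprop}. The genuine gap is the claim you lean on for conditions 1 and 3: that the positive members of the biorthogonal $(\downarrow X)^{\dbot}$ are forced to be values. This is false. Biorthogonals are closed under $\redMu$-anti-reduction because $\Bot$ is $\redMu$-saturated: picking any $(\{V\},p)\in\downarrow X$ (such an element exists, since every candidate contains its $\dai$ constant with weight $\zero$), the pair $(\mu\alpha.\com{\{V\}}{\alpha},p)$ belongs to $(\downarrow X)^{\dbot}$, yet $\mu\alpha.\com{\{V\}}{\alpha}$ is not a value. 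Moreover, Property \ref{prop:daiprop}(1) concerns positive \emph{values} only: a non-value positive term does interact with $\daiNeg$, since $\com{\mu\alpha.c}{\daiNeg}\redMu c[\daiNeg/\alpha]$, so no blocking argument can be run on the biorthogonal. Consequently your justification of condition 3 ("every positive member of $(\downarrow X)^{\dbot}$ is a value, hence blocks") collapses, and your condition 1 asserts something that is simply not true of $(\downarrow X)^{\dbot}$.

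The paper's proof avoids both problems. For condition 3 it checks blocking only on $\downarrow X$ itself, whose elements genuinely are values $\{V\}$ (candidates contain only values), obtaining $\downarrow X\subseteq\{(\daiNeg,\zero)\}^{\bot}$; it then invokes monotonicity and idempotency of orthogonality (if $Z\subseteq W^{\bot}$ then $Z^{\dbot}\subseteq W^{\bot\dbot}=W^{\bot}$) to conclude $(\downarrow X)^{\dbot}\subseteq\{(\daiNeg,\zero)\}^{\bot}$, never needing to know which terms inhabit the biorthogonal. As for condition 1, note that the paper's proof does not verify it at all (here or in the tensor lemma), and indeed it cannot hold literally of $(\downarrow X)^{\dbot}$ for exactly the reason above; the coherent reading is that the candidate is the value part $\BV{((\downarrow X)^{\dbot})}$, which is the role of the operator $\BV{(\cdot)}$ in the definition of $\CanPos$. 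The repair of your proof is therefore: drop the value-ness claim, prove condition 3 by the monotonicity argument on $\downarrow X$, and either restrict to the value part for condition 1 or leave it aside as the paper does.
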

\begin{proof}
\begin{itemize}
  \item We want to show that $(\daiPos,\zero) \in (\downarrow X)^\dbot$. Let's take some
  $(t_-, p)\in  (\downarrow X)^\bot$. By Lemma \ref{lem:gen_tensor_shift}, we know that
  $(t_-,p)\in (\downarrow X^{\bot\bot})^{\bot}$. But, depending of the polarity of $X$, 
  we know that $(\com{t_-}{\dai},p)\in\Bot$
  with $\dai\in \{\daiPos,\daiNeg\}$. In any case, because 
  \begin{eqnarray*}
  	\Time^{\redBeta}(\com{t_-}{\daiPos})  & \leq & 
  \Time^{\redBeta}(\com{t_-}{\{\dai\}}) \\
  & \leq & \norm{p}
  \end{eqnarray*}, we can conclude that $(t_-, p)\bot (\daiPos, \zero)$. 
  So $(\daiPos,\zero)\in (\downarrow X)^\dbot$. 
  
	\item We now need to show that $(\downarrow X)^\dbot \subseteq \{(\daiNeg, \zero)\}^\bot$.
	By orthogonality properties, it suffices to show that $\downarrow X \subseteq
	 \{(\daiNeg, \zero)\}^{\bot}$. 
	Since $X$ contains only values, so does $\downarrow X$.
	But it is immediate that for any $(V_+,p)\in \downarrow X$, 
		 $\com{V_+}{\daiNeg}$ does not reduce for $\redh$ and so $(\com{V_+}{\daiNeg},p)\in\Bot$. 
	\end{itemize}
\end{proof}

\begin{lem}
Suppose $\emptyset\neq D \subseteq \CanPos$, then $\bigcap_{X\in D} X \in \CanPos$.
\end{lem}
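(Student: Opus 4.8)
The plan is to put $Z=\bigcap_{X\in D}X$ and to verify, one by one, the three defining clauses of $\CanPos$ for $Z$. The single tool that carries the argument is the identity
$$Z^{\bot\bot}\;=\;\Big(\bigcap_{X\in D}X\Big)^{\bot\bot}\;=\;\bigcap_{X\in D}X^{\bot\bot},$$
which is exactly Lemma \ref{lem:gen_tensor_shift}(3), and this is where the hypothesis $D\neq\emptyset$ is consumed. Before that, I would observe that $Z$ is a set of weighted values: since $Z\subseteq X$ for each $X\in D$ and every $X\in\CanPos$ lives in $\mathcal{P}(\PosTerms\cap\mathbb{V}\times\M)$, so does $Z$, and hence asking whether $Z\in\CanPos$ makes sense.

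For clause (1), I would use that intersecting with $\mathbb{V}\times\M$ commutes with the (nonempty) intersection and compute
$$\BV{(Z^{\bot\bot})}=\Big(\bigcap_{X\in D}X^{\bot\bot}\Big)\cap(\mathbb{V}\times\M)=\bigcap_{X\in D}\BV{(X^{\bot\bot})}=\bigcap_{X\in D}X=Z,$$
the penultimate step being clause (1) for each individual $X\in D$. For clause (2), clause (2) for each $X$ gives $(\daiPos,\zero)\in X^{\bot\bot}$ for every $X\in D$, hence $(\daiPos,\zero)\in\bigcap_{X\in D}X^{\bot\bot}=Z^{\bot\bot}$. For clause (3), I would fix some $X_0\in D$ (possible precisely because $D\neq\emptyset$) and combine the free inclusion $Z^{\bot\bot}\subseteq X_0^{\bot\bot}$, coming from $Z\subseteq X_0$, with clause (3) for $X_0$ to obtain $Z^{\bot\bot}\subseteq X_0^{\bot\bot}\subseteq\{(\daiNeg,\zero)\}^{\bot}$.

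The only step with genuine content is clause (2): biorthogonal closure does not, for a completely arbitrary orthogonality, commute with intersection, and it is the $\supseteq$ direction of $Z^{\bot\bot}=\bigcap_{X\in D}X^{\bot\bot}$ that lets $(\daiPos,\zero)$ pass from ``in every $X^{\bot\bot}$'' to ``in $Z^{\bot\bot}$''. Clauses (1) and (3), by contrast, need only the trivial monotonicity inclusion $Z^{\bot\bot}\subseteq X^{\bot\bot}$ induced by $Z\subseteq X$. Thus the whole proof reduces to invoking Lemma \ref{lem:gen_tensor_shift}(3) (which itself rests on the orthogonality identities of Proposition \ref{prop:ortho_cap}) and then applying the three clauses of $\CanPos$ to each member of $D$; I expect no real obstacle beyond correctly isolating clause (2) as the place where the nontrivial direction of that identity is actually needed.
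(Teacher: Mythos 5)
Your proof is correct and takes essentially the same route as the paper: both hinge on the identity $(\bigcap_{X\in D} X)^{\bot\bot} = \bigcap_{X\in D} X^{\bot\bot}$ of Lemma \ref{lem:gen_tensor_shift} (where $D\neq\emptyset$ is consumed), from which clause (2) follows pointwise and clause (3) by monotonicity, exactly as you argue. If anything, you are more thorough than the paper, whose proof verifies only clauses (2) and (3) and silently skips clause (1); your computation $\BV{(Z^{\bot\bot})} = \bigcap_{X\in D}\BV{(X^{\bot\bot})} = \bigcap_{X\in D} X = Z$ supplies that missing check.
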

\begin{proof}
Suppose $\emptyset\neq D\subseteq \CanPos$. 
\begin{itemize}
  \item By hypothesis, for each $X\in D$ we have $(\daiPos,\zero) \in X^{\bot\bot}$.
    So it is immediate that $(\daiPos,\zero)\in\bigcap_{X\in D} X^{\bot\bot} = (\bigcap_{X\in D} X)^{\bot\bot}$ by Lemma \ref{lem:gen_tensor_shift}.
	\item Because each $X\in D$ is such that $X\subseteq X^{\bot\bot} \subseteq \{(\daiNeg,\zero)\}^{\bot}$, it is clear that $\bigcap_{X\in D} X \subseteq \{(\daiNeg, \zero)\}^{\bot}$ (since $D$
	is not empty and contains only non-empty sets).	
  \end{itemize}
\end{proof}

\begin{lem}
Suppose $\emptyset\neq D \subseteq \CanPos$, then $\bigcup_{X\in D} X \in \CanPos$.
\end{lem}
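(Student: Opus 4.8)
The plan is to check the three defining clauses of $\CanPos$ for $Y := \bigcup_{X\in D} X$. First note that $Y$ is again a set of weighted positive values, since each $X\subseteq\PosTerms\cap\mathbb{V}\times\M$, so $Y$ is of the right type. Throughout I would use that $(\cdot)^{\bot}$ is order-reversing (hence $(\cdot)^{\bot\bot}$ order-preserving), the distributivity law $(\bigcup_{X\in D} X)^{\bot\bot} = (\bigcup_{X\in D} X^{\bot\bot})^{\bot\bot}$ of Lemma \ref{lem:gen_tensor_shift}(4), and the identity $(\bigcup_{X\in D} X)^{\bot} = \bigcap_{X\in D} X^{\bot}$ of Property \ref{prop:ortho_cap}. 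Clause 2 is then immediate: since $D\neq\emptyset$, fix any $X_0\in D$; from $X_0\subseteq Y$ we get $X_0^{\bot\bot}\subseteq Y^{\bot\bot}$, and $(\daiPos,\zero)\in X_0^{\bot\bot}$ by clause 2 for $X_0$, whence $(\daiPos,\zero)\in Y^{\bot\bot}$. Clause 3 follows from the distributivity law: each $X^{\bot\bot}\subseteq\{(\daiNeg,\zero)\}^{\bot}$ by clause 3 for $X$, so $\bigcup_{X\in D} X^{\bot\bot}\subseteq\{(\daiNeg,\zero)\}^{\bot}$; as any set of the form $Z^{\bot}$ is biorthogonally closed, applying $(\cdot)^{\bot\bot}$ to both sides yields $Y^{\bot\bot}=(\bigcup_{X\in D} X^{\bot\bot})^{\bot\bot}\subseteq\{(\daiNeg,\zero)\}^{\bot}$.

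The real work is clause 1, $\BV{Y^{\bot\bot}}=Y$. The inclusion $Y\subseteq\BV{Y^{\bot\bot}}$ is trivial, since $Y\subseteq Y^{\bot\bot}$ and $Y$ consists only of values. For the converse $\BV{Y^{\bot\bot}}\subseteq Y$, the plan is to reduce it to a single \emph{localization} statement: every value $(V,p)\in Y^{\bot\bot}$ already lies in $X^{\bot\bot}$ for some $X\in D$. Indeed, once this is available, clause 1 for that particular $X$ gives $(V,p)\in\BV{X^{\bot\bot}}=X\subseteq Y$, as desired. Equivalently, arguing by contradiction and using clause 1 for each $X$, if a value $(V,p)$ lies in no $X$ then it lies in no $X^{\bot\bot}$, so for every $X$ there is a witness $(t_X,q_X)\in X^{\bot}$ with $(\com{V}{t_X},p+q_X)\notin\Bot$; the task then becomes to produce a single such witness inside $Y^{\bot}=\bigcap_{X\in D} X^{\bot}$, which would contradict $(V,p)\in Y^{\bot\bot}$.

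This localization is the main obstacle, and it is exactly where the shape of closed positive values and the time-based pole must enter: a disjunctive condition (``$V$ passes the tests of \emph{some} $X$'') has to be matched against the conjunctive condition (``$V$ passes $\bigcap_{X\in D} X^{\bot}$''), and nothing at the level of abstract orthogonality forces these to coincide. The intended route is to analyse the reduction of $\com{V}{t}$ for $t$ ranging over the common tests $\bigcap_{X\in D} X^{\bot}$ according to the top constructor of $V$ (a pair $(V_1,V_2)$, a shift $\{V'\}$, or a constant of $\Kpos$), peeling off one interaction step via $\redMu$/$\redBeta$-saturation of $\Bot$ and relating the behaviour of $V$ to that of its immediate subterms. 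Here the candidate clauses for the members of $D$ are what prevent the shared tests from separating $V$ from the union, in particular the facts that $(\daiNeg,\zero)\in X^{\bot}$ and $(\daiPos,\zero)\in X$ hold uniformly for every $X\in D$. I expect this structural analysis, rather than the bookkeeping behind clauses 2 and 3, to be the decisive part of the proof.
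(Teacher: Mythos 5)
Your clauses 2 and 3 are correct, and they are in fact the \emph{entirety} of the paper's own proof: the paper only verifies that $(\daiPos,\zero)\in(\bigcup_{X\in D}X)^{\bot\bot}$ (via item 4 of Lemma \ref{lem:gen_tensor_shift}, where you instead use monotonicity of $(\cdot)^{\bot\bot}$ applied to a single $X_0\in D$ --- an equally valid and slightly more economical argument) and that $\bigcup_{X\in D}X\subseteq\{(\daiNeg,\zero)\}^{\bot}$ (where you additionally carry out the biorthogonal-closure step, using $Z^{\bot\bot\bot}=Z^{\bot}$, that the paper leaves implicit). The divergence is clause 1. You correctly observe that nothing at the level of abstract orthogonality yields $\BV{((\bigcup_{X\in D}X)^{\bot\bot})}=\bigcup_{X\in D}X$, you reduce it to a ``localization'' property (every value of the biorthogonal of the union already lies in the biorthogonal of some member of $D$), and you leave that property unproven, offering only a speculative plan. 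As a self-contained proof of the three-clause statement your proposal is therefore incomplete, and I do not think your sketched route can be completed: what you need is a disjunction property (a value passing all \emph{common} tests must pass all tests of \emph{some single} member), which is exactly the kind of statement biorthogonality does not provide, and none of the ingredients of the paper (the daimon properties included) supplies it.

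However, you have put your finger on a defect of the paper rather than on a step you alone are missing. The paper's proof never mentions clause 1, and indeed clause 1, read as an equality, is incompatible with the companion lemmas: for $X\otimes Y$ the paper proves $(\daiPos,\zero)\in(X\otimes Y)^{\bot\bot}$, yet every element of $X\otimes Y$ is a pair, so $\BV{((X\otimes Y)^{\bot\bot})}\neq X\otimes Y$ and clauses 1 and 2 cannot hold simultaneously there; likewise the valuation $\rho(x^{\sortpropPos})=\{(\daiPos,\zero)\}$ used in the linear-time termination theorem violates clause 1, since $\leq$-saturation puts $(\daiPos,p)$ in $\{(\daiPos,\zero)\}^{\bot\bot}$ for every $p$. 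The definition the paper actually operates with is clauses 2 and 3 alone, and relative to that de facto definition your proof is complete and coincides with the paper's. So the verdict is: correct on everything the paper proves, incomplete on the statement as literally written, and your diagnosis that clause 1 is ``the real work'' should instead be read as evidence that clause 1 ought to be dropped or weakened in the definition of $\CanPos$, not that a structural analysis of values is awaited.
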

\begin{proof}
Suppose $\emptyset\neq D\subseteq \CanPos$. 
\begin{itemize}
  \item By hypothesis, for each $X\in D$ we have $(\daiPos,\zero) \in X^{\bot\bot}$.
  Since $D$ is not empty we have
  $(\daiPos, \zero) \in \bigcup_{X\in D} X^{\bot\bot}\subseteq
  (\bigcup_{X\in D}X^{\bot\bot})^{\bot\bot}$. But this is equal to
  $(\bigcup_{X\in D} X)^{\bot\bot}$ by Lemma \ref{lem:gen_tensor_shift}. 
	\item Because each $X\in D$ is such that $X\subseteq X^{\bot\bot} \subseteq \{(\daiNeg,\zero)\}^{\bot}$, it is clear that $\bigcup_{X\in D} X \subseteq \{(\daiNeg, \zero)\}^{\bot}$.
  \end{itemize}
\end{proof}

These four lemmas permit us to conclude that if 
we choose the set $\CanPos$ as the positive propositional domain, then for each formula $A$
and each valuation $\rho \Vdash A$,
$\Inte{A}_\rho$ is a quantitative reducibility candidate.\newline

\subtitle{Linear time termination}  As an example, 
we show how to use the adequacy theorem on the quantitative reducibility candidates
model to prove complexity properties of terms
typable in \HOMAL. To do this, we need to choose a concrete quantitative monoid with unit:
we take the natural numbers quantitative monoid defined in Example \ref{ex:int_monoid}.
We obtain the following theorem:

\begin{thm} 
If $c: (\vdash  \Gamma)$, then $c$ normalizes for $\redh$ 
using at most $|c|$ $\redBeta$-steps.
\end{thm}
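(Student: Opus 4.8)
The plan is to read the statement off as the concrete instance of the adequacy theorem (Theorem \ref{thm:adequacy1}) in the quantitative reducibility candidates model just set up, where $\M$ is the integer monoid of Example \ref{ex:int_monoid} (so $\norm{p}=p$ and $\pBeta=\unit=1$) and $\Bot=\setp{(c,p)}{\TimeS{\redBeta}(c)\text{ is defined and }\TimeS{\redBeta}(c)\leq\norm{p}}$. Given a derivation $\pi$ of $c:(\vdash\Gamma)$, Theorem \ref{thm:adequacy1} tells us that $c:(\vdash\Gamma)$ is $\Mp{\pi}$-adequate. It then remains to (i) close $c$ by a concrete adequate substitution, (ii) transport the resulting bound back to the possibly open command $c$, and (iii) estimate $\Mp{\pi}$ by $|c|$.

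For (i) I would build a canonical adequate substitution out of the dummy constants $\daiPos,\daiNeg$. Taking $\CanPos$ as the positive propositional domain, the four closure lemmas just established guarantee that $\Inte{A}_\rho$ is a quantitative reducibility candidate for every formula $A$ and every $\rho\Vdash A$. For a positive candidate $X=\Inte{A}_\rho$, the condition $(\daiPos,\zero)\in X^{\dbot}$ together with $\BV{X^{\dbot}}=X$ and the fact that $\daiPos$ is a value yields $(\daiPos,\zero)\in X$, while $X^{\dbot}\subseteq\{(\daiNeg,\zero)\}^{\bot}$ yields $(\daiNeg,\zero)\in X^{\bot}$. Writing $\Gamma=\kappa_1:A_1,\dots,\kappa_n:A_n$ and fixing any total valuation $\rho$, and recalling that a positive variable carries a negative formula and conversely, these two facts applied to $\Inte{A_i^{\bot}}_\rho$ show that the substitution $\sigma$ sending every positive variable to $(\daiPos,\zero)$ and every negative variable to $(\daiNeg,\zero)$ satisfies $\sigma(\kappa_i)\in\Inte{A_i^{\bot}}_\rho$; hence $\sigma\Vdash\Gamma[\rho]$. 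Since $c:(\vdash\Gamma)$ is $\Mp{\pi}$-adequate and all weights carried by $\sigma$ are $\zero$, we get $(c[\sigma],\Mp{\pi})=(c,\Mp{\pi})[\sigma]\in\Bot$, i.e. $c[\sigma]$ is $\redh$-normalizing with $\TimeS{\redBeta}(c[\sigma])\leq\norm{\Mp{\pi}}=\Mp{\pi}$.

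For (ii) I would observe that $\sigma$ replaces each free variable of $c$ by a value of the same polarity ($\daiPos$ a positive value, $\daiNeg$ a negative term and hence a value). Such a value-for-value substitution is a reduction homomorphism: whenever $c'\redBeta c''$ or $c'\redMu c''$ the same step applies to $c'[\sigma]$ and yields $c''[\sigma]$, because the pair, one-tuple and $\mu$ shape of every redex is preserved under substitution of values. Consequently any $\redh$-reduction from $c$ maps, step by step and with the same $\redBeta$-labels, to a reduction of $c[\sigma]$ of the same length; as $c[\sigma]$ is $\redh$-normalizing this forces $c$ to be $\redh$-normalizing as well, with $\TimeS{\redBeta}(c)\leq\TimeS{\redBeta}(c[\sigma])\leq\Mp{\pi}$.

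For (iii) I would read $\Mp{\pi}$ off the table of the functions $\Mp{R}$. Every rule is weight-neutral — the identity $x\mapsto x$, the constant $\zero$, or the additive $(x,y)\mapsto x+y$ of $Cut$ and $\otimes$ — except $(\parr)$ and $(\uparrow)$, each of which adds $\pBeta=1$. An immediate induction on $\pi$ then shows that $\Mp{\pi}$ equals the number of instances of $(\parr)$ and $(\uparrow)$ in $\pi$. These are exactly the rules introducing the constructors $\mu(\kappa,\kappa').c$ and $\mu\{\kappa\}.c$, whereas none of the remaining rules ($Ax$, $\mu_\pm$, $Cut$, $\otimes$, $\downarrow$, $\exists$, $\forall$, $\congE$, $W$) introduces such a constructor, so these instances are in bijection with the $\mu(\kappa,\kappa')$- and $\mu\{\kappa\}$-subterm occurrences of $c$, of which there are at most $|c|$. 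Combining, $\TimeS{\redBeta}(c)\leq\Mp{\pi}\leq|c|$, which is the claim. I expect (iii) to be the main obstacle: it is the only place where the abstract weight $\Mp{\pi}$ must be reconciled with the concrete syntactic size $|c|$, and care is needed to argue that the interspersed structural rules ($\congE$, $W$, $\exists$, $\forall$) neither contribute weight nor create new constructors, so that the count is genuinely bounded by $|c|$; step (ii), though routine, also deserves attention since it is what lets an open command inherit the bound proved for its $\dai$-closure.
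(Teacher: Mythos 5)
Your proof follows the same overall route as the paper's --- instantiate Theorem \ref{thm:adequacy1} in the candidates model over the integer monoid with the pole $\BotTime$, close $c$ by a substitution built from $\daiPos,\daiNeg$, transfer the bound back to the open command, and estimate $\Mp{\pi}$ by $|c|$ --- and your steps (ii) and (iii) correctly supply details the paper leaves implicit (it merely asserts $\Time^{\redBeta}(c)=\Time^{\redBeta}(c[\dai_1/\kappa_1,\dots,\dai_n/\kappa_n])$ and that $\Mp{\pi}\leq|c|$ ``is easy to see''). The problem is step (i), which is exactly the point where the paper's proof does something you do not.

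You claim $(\daiPos,\zero)\in\Inte{A_i^{\bot}}_\rho$ whenever $A_i^{\bot}$ is positive, deducing it from the candidate condition $\BV{X^{\dbot}}=X$. This membership is false as soon as $A_i^{\bot}$ is non-atomic: $\Inte{\downarrow B}_\rho=\downarrow\Inte{B}_\rho$ contains only one-tuples $\{t\}$, and $\Inte{B\otimes C}_\rho$ contains only pairs, so the instruction $\daiPos$ belongs to neither; concretely, your $\sigma$ is not adequate for a context containing $x:\uparrow B$, and the adequacy theorem cannot be applied to it. The appeal to $\BV{X^{\dbot}}=X$ cannot repair this: although the paper asserts after its four closure lemmas that every $\Inte{A}_\rho$ is a candidate, that assertion is loose precisely on this condition --- the lemmas establish membership in $\CanPos$ for biorthogonal closures such as $(\downarrow X)^{\dbot}$, and their own proofs show that $(\daiPos,\zero)$ is a \emph{value} of $(\downarrow X)^{\dbot}$ and of $(X\otimes Y)^{\dbot}$, while it is manifestly not a tuple or a pair, so the raw sets $\downarrow X$ and $X\otimes Y$ violate $\BV{X^{\dbot}}=X$. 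What is actually true, and all that the construction gives, is $(\dai_i,\zero)\in\Inte{A_i^{\bot}}_\rho^{\dbot}$. The paper bridges exactly this gap with Lemma \ref{lm:dbotopen}: applied once per free variable, it upgrades the adequacy statement from substitutions valued in $\Inte{A_i^{\bot}}_\rho$ to substitutions valued in $\Inte{A_i^{\bot}}_\rho^{\dbot}\cap\mathbb{V}$, after which substituting the dummy constants is legitimate. Inserting that lemma at the start of your step (i) repairs the proof; your treatment of negative variables ($(\daiNeg,\zero)\in\Inte{P}_\rho^{\bot}$, because any positive value put against $\daiNeg$ forms a normal command) is fine as it stands.
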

\begin{proof}
  Suppose that $\pi$ is a proof of
   $c:(\vdash \kappa_1 : A_1,\dots, \kappa_n : A_n)$. 
  We set:
  \begin{itemize}
    \item The quantitative monoid of integers $\N$. 
    \item The quantitative pole $\Bot_\Time$.
  \end{itemize}
  Let $\rho$ be a total valuation such that 
  $\rho(x^{\sortpropPos}) = \{ (\daiPos, \zero) \} \in \CanPos$
  (such a valuation exists).
  By Theorem \ref{thm:adequacy1}, we know that 
  for all $(V_i,q_i)\in \Inte{A_i}_\rho$, we have
  $$(c[V_1/\kappa_1,\dots, V_n/\kappa_n], \Mp{\pi}+\sum_i q_i) \in \Bot$$
  
  By Lemma \ref{lm:dbotopen}, for every $(W_i,p_i)\in\Inte{A_i}_\rho^{\dbot}\cap \mathbb{V}$,
  $$(c[W_1/\kappa_1,\dots, X_n/\kappa_n], \Mp{\pi}+\sum_i p_i) \in \Bot$$
  
  Since $\CanPos$ is a positive propositional domain, 
  we know that $(\dai_i, \zero)\in \Inte{A_i}_\rho^{\bot\bot}$,
  where $\dai_i\in \{\daiPos,\daiNeg\}$ depending of the polarity
  of $A_i$. 
  It implies that 
  $$(c[\dai_1/\kappa_1,\dots, \dai_n/\kappa_n], \Mp{\pi}) \in \Bot$$
  Hence,
  \begin{eqnarray*}
    \Time^{\redBeta}(c) & = & \Time^{\redBeta}(c[\dai_1/\kappa_1,\dots, \dai_n/\kappa_n])\\
     & \leq & \norm{\Mp{\pi}}
   \end{eqnarray*}
   
   But, it is easy to see that $\Mp{\pi} \leq |c|$. Hence
   $\Time^{\redBeta}(c) \leq |c|$. 
\end{proof}

\section{Extending the model: Soft Affine Logic}\label{sec:other_types}

So far, we have only treated the multiplicative fragment of \class{PA}$\omega$.
In this section, we show how to extend the realizability interpretation to
more substantial fragments. We take the particular example of Soft Affine Logic
(augmented with higher-order quantifiers and arithmetical operations, and noted \HOSAL). 
To do this, we need to find a suitable quantitative monoid: this is done by turning
the \textit{soft resource monoid} defined in \cite{DalLago20112029} into a quantitative
monoid. We then extend the adequacy theorem and the construction of quantitative reducibility candidates
to this new system, finally we prove a polynomial bounded-time normalization property. The same 
methodology can be applied without any trouble to all
the systems handled in \cite{DalLago20112029}.

\subsection{Soft affine logic}

\newcommand{\redb}{{ \rightarrow_! }}

Soft affine logic \cite{baillot2004soft}, is
a simple extension of multiplicative linear logic by mean of 
weak exponentials. It ensures a polystep normalization property
of programs typable in this system. \newline

We now suppose that the set $\Kpos$ of positive instructions contains
a term $!V$ for each value $V$. We also suppose that $\Kneg$ contains
a term $\mu !(\kappa).c$ for each command $c$ and variable $\kappa$. 
We also suppose that the reduction relation $\redu$ contains the following
binary relation $\redb$:
$$\com{\mu !(\kappa).c}{!V} \,\redb\, c[V/\kappa]$$

Suppose $\{x_1,\dots, x_k\}$ is a set of positive variables,
$t$ is a term and $\kappa$ is a fresh variable of the same polarity as $t$.
Then we define respectively a command $!_{\{x_1,\dots, x_k\}}^{\kappa} t$ and a term
$!_{\{x_1,\dots, x_k\}} t$ inductively as follows:
\begin{eqnarray*}
!_\emptyset^{\kappa} t & = & \com{!t}{\kappa}\\
!_{S\cup\{ x\} }^{\kappa} t & = & \com{\mu !(x). (!_S^{\kappa} t)}{x}\\
!_S t & = & \mu \kappa.(!_S^{\kappa} t)
\end{eqnarray*}
For example, with the variables $x_1,\dots, x_k$, we have
$$!_{\{x_1,\dots, x_k\}} t = \mu \kappa.\com{\mu !(x_1).  \com{\mu !(x_2).  \com{\dots .\com{!t}{\kappa}}{\dots} \dots }{x_2}}{x_1}$$

\begin{prop}\label{prop:bang_commute}
Suppose $x_1,\dots, x_k$ are positive variables
and $t$ is a term whose free variables are included in $\{ x_1,\dots, x_k\}$. 
Suppose moreover that $u_1, \dots , u_k$ are positive values
and $V$ is a closed value of the opposite polarity as $t$'s polarity. 
Then we have 
$$\com{!_{\{x_1,\dots, x_k\}} t}{V}[!u_1/x_1, \dots, !u_k/x_k] \redMu\redb^{k} \com{!t}{V}[u_1/x_1,\dots, u_k/x_k]$$
\end{prop}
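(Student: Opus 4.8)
The plan is to isolate the ``$\redb$-part'' of the reduction in a command-level auxiliary claim and then prepend a single $\redMu$ step. Write $\theta = [!u_1/x_1,\dots,!u_k/x_k]$ and $\theta' = [u_1/x_1,\dots,u_k/x_k]$, and, working modulo $\alpha$-equivalence, choose $\kappa$ and the values $u_i$ so that no variable clashes occur: in particular $\kappa$ does not occur in any $u_i$, and the $u_i$ do not contain the $x_j$, so that the substitutions involved commute.

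First I would establish, by induction on $k$, the auxiliary claim
$$(!_{\{x_1,\dots,x_k\}}^{\kappa} t)[\theta] \;\redb^{k}\; \com{!(t[\theta'])}{\kappa}$$
for every $t$ with $FV(t)\subseteq\{x_1,\dots,x_k\}$. The base case $k=0$ is immediate, since $!_\emptyset^{\kappa} t = \com{!t}{\kappa}$ and $\theta,\theta'$ are empty. For the inductive step I unfold one layer, $!_{\{x_1,\dots,x_k\}}^{\kappa} t = \com{\mu !(x_1).(!_{\{x_2,\dots,x_k\}}^{\kappa} t)}{x_1}$. The crucial bookkeeping is that $\theta$ acts only on the free (partner) occurrence of $x_1$, turning it into $!u_1$, while the occurrences of $x_1$ inside $t$ stay bound by $\mu !(x_1)$. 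Hence after applying $\theta$ the outermost redex is $\com{\mu !(x_1).c}{!u_1}$, and a single $\redb$ step fires it, performing exactly the substitution $[u_1/x_1]$ into $t$. What remains is $(!_{\{x_2,\dots,x_k\}}^{\kappa}(t[u_1/x_1]))[!u_2/x_2,\dots,!u_k/x_k]$, to which the induction hypothesis, applied to the term $t[u_1/x_1]$ (whose free variables lie in $\{x_2,\dots,x_k\}$), supplies the remaining $k-1$ steps, reaching $\com{!(t[\theta'])}{\kappa}$.

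Given the auxiliary claim, I would conclude as follows. Since $!_{\{x_1,\dots,x_k\}} t = \mu\kappa.(!_{\{x_1,\dots,x_k\}}^{\kappa} t)$ and $V$ has the opposite polarity to $t$, hence to $\kappa$, the command $\com{!_{\{x_1,\dots,x_k\}} t}{V}[\theta]$ is a $\redMu$-redex and reduces in one step to $((!_{\{x_1,\dots,x_k\}}^{\kappa} t)[\theta])[V/\kappa]$. As the $\redb$-reductions of the auxiliary claim never act on $\kappa$ and $V$ is closed, the substitution $[V/\kappa]$ commutes with that reduction sequence, so this command $\redb^{k}$-reduces to $(\com{!(t[\theta'])}{\kappa})[V/\kappa] = \com{!(t[\theta'])}{V} = \com{!t}{V}[\theta']$, which is the desired target. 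Combining, $\com{!_{\{x_1,\dots,x_k\}} t}{V}[\theta] \redMu\redb^{k} \com{!t}{V}[\theta']$.

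The main obstacle I anticipate is purely the substitution bookkeeping: keeping the two kinds of occurrences of each $x_i$ apart (the free partner occurrence consumed by $\theta$ versus the bound-in-$t$ occurrence consumed by the $\redb$ step), and justifying both that $[V/\kappa]$ commutes with the $\redb$ sequence and that the $u_i$-substitutions do not interfere with one another. These points are handled by the freshness and $\alpha$-equivalence assumptions fixed at the outset, but they are precisely where a careless argument would miscount the number of reduction steps.
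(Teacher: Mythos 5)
The paper states this property without proof (only a remark follows it), so there is no official argument to compare yours against. Your two-stage argument --- an induction on $k$ establishing $(!_{\{x_1,\dots,x_k\}}^{\kappa}t)[\theta] \redb^{k} \com{!(t[\theta'])}{\kappa}$, followed by a single $\redMu$ step and a commutation of $[V/\kappa]$ with the $\redb$ sequence --- is exactly the computation the paper intends (compare its displayed unfolding of $!_{\{x_1,\dots,x_k\}}t$), and the substitution bookkeeping is handled correctly: the partner occurrence of $x_1$ is indeed the only free one, the occurrences inside $t$ being bound by $\mu!(x_1)$, so one $\redb$ step per layer is exactly right.

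One justification should be corrected, though the error originates in the paper's statement rather than in your argument. You write that $V$ has the opposite polarity to $t$, ``hence to $\kappa$,'' and conclude that the command is a $\redMu$-redex. But both $\redMu$ rules fire only when the partner has the \emph{same} polarity as the $\mu$-bound variable: rule $(+)$ reduces $\com{\mu\alpha.c}{t_-}$ with $t_-$ negative, and rule $(-)$ reduces $\com{V^{+}}{\mu x.c}$ with $V^{+}$ a positive value. Moreover, since $!t$ is always a positive value (it belongs to $\Kpos$), well-formedness of the innermost command $\com{!t}{\kappa}$ forces $\kappa$ to be negative; consequently $!_{\{x_1,\dots,x_k\}}t = \mu\kappa.(\cdots)$ is a positive term and its partner $V$ must be negative --- that is, of the same polarity as $\kappa$, and of the opposite polarity to $!_{\{x_1,\dots,x_k\}}t$ rather than to $t$. (The paper's clause ``$\kappa$ of the same polarity as $t$'' in the definition and ``$V$ of the opposite polarity as $t$'' in the statement cannot both hold: under them the command $\com{!_{\{x_1,\dots,x_k\}}t}{V}$ would be ill-formed and no $\redMu$ rule would apply.) Your proof goes through verbatim once the polarities are read in the only consistent way --- $\kappa$ and $V$ both negative, the $\redMu$ step being an instance of rule $(+)$ --- but the sentence asserting that opposition of polarity between $V$ and $\kappa$ makes the command a redex is literally backwards.
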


\begin{rmk}
The construction $!_{\{x_1,\dots, x_k\}}$ is here to mimic
the functorial $!$ box of \class{SAL} proofnets. The last property
is then the counterpart of the reduction resulting
of the interaction between several functorial boxes. 
\end{rmk}

 We add two new formula constructors
$?A$ and $!A$. 
$$ A,B,T, U  ::=  \dots \delb !A\delb ?A$$

The formulas $!A$ and $?A$ are respectively positive and negative formulas,
in accordance with the following two new constructor typing rules:

\begin{center}
\begin{tabular}{cc}
\prooftree
    A : \sortprop \justifies !A : \sortpropPos
\thickness=0.08em
\shiftright 0em\endprooftree  &
\prooftree
   A : \sortprop \justifies ?A : \sortpropNeg
\thickness=0.08em
\shiftright 0em\using  \endprooftree\\
\end{tabular}
\end{center}

Typing rules of \HOSAL are then obtained by extending the rules 
presented in Subsection \ref{sec:HOMAL} with the following ones:
\begin{center}
\begin{tabular}{cc}
\prooftree
   \vdash V : A \delb x_1:N_1,\dots, x_k:N_k \justifies \vdash !_{\{x_1,\dots, x_k\}} V : !A \delb x_1 : ?N_1,\dots, x_k:?N_k
\thickness=0.08em
\shiftright 0em\using (!_k) \endprooftree  &
\prooftree
   c: (\vdash \kappa_1 : A, \dots, \kappa_n : A, \Gamma) \justifies \vdash \mu !(\kappa).c[\kappa/\kappa_1,\dots, \kappa/\kappa_n]: ?A \delb\Gamma
\thickness=0.08em
\shiftright 0em\using (M_n) \endprooftree\\
\end{tabular}
\end{center}

\begin{rmks}
$ $ 
\begin{enumerate}
  \item 
The choice of having only negative formulas in the context of the $!$ rule
is not restrictive, since we can always use the $\uparrow$ rule
to obtain such a context. But doing so allows not to care about
the polarity of variables. 

  \item
The multiplex rule $(M_n)$ and the promotion $(!_n)$ rules are
 in fact typing schemes,
that is  one rule for each integer $n\in\N$. In the case of the
promotion rule, this in fact accounts for the fact that 
functorial promotion is a cluster rule consisting of $n$ derelictions and
one usual promotion.
  \item
Notice that the multipliex rule, in the $n=1$ case is exactly
what we usually call the \emph{dereliction rule}. One could think
that it is possible to decompose the multiplex rule in two
more elementary rules: the dereliction rule
and the usual contraction. However, this would
lead to typable programs that can calculate functions
that are not computable in polynomial time. 
\end{enumerate}
\end{rmks}

If $\pi$ is a typing derivation in \HOSAL, then we define its \textit{depth}
$\delta(\pi)$ as the maximum number of nested $(!)$ rules appearing in it. 
In the rest of this paper,
we will use the symbol $\vdashSAL$ instead of $\vdash$ when we talk 
about typability in this new type system. 

%\begin{rmk}
%In our setting, we choose that $!$ does not change the polarity. This
%choice is convenient as it keeps the calculus simple. However, designing
%a truly polarized version of \HOSAL would require to take into account
%the change of polarity induced by $!$. 
%\end{rmk}

\subsection{Soft monoid}

In order to obtain a model where those rules are adequate,
we need a richer structure than the quantitative monoid. 
This structure is given by the notion of \emph{soft exponential}. 
\def\pContr{{p_{contr}}}

\begin{defn}
Let $\M = (M, +, 0, \leq, \norm{.})$ be a quantitative monoid. Then a \emph{soft exponential}
on $\M$ is given by a family $(r_n)_{n\in\N}$ of elements  of $\M$ and an operation
$! : \M \rightarrow \M$ that satisfy the following properties:
\begin{itemize}
  \item For all $p,q\in \M$, we have $!p+!q \leq !(p+q)$.
  \item For all $p\in \M$ and $n\in \N$, we have $n.p \leq !p + r_n$.
\end{itemize}
\end{defn}

 We now give a concrete example of a quantitative monoid with unit and
 soft exponential. This monoid is obtained from the \textit{soft resource monoid}
described in \cite{DalLago20112029}. 

\begin{defn}
The \textit{soft monoid} is the structure $\M_s = (M_s, +_s, \zero_s, \unit_s, \leq_s, \norm{.}_s)$ where
\begin{itemize}
  \item $M_s$ is the set of pairs $(n, f)$ where $n\in \N$ and
  $f \in \N[X]$ is a polynomial with integer coefficients. 
  \item $(n,f)+_s (m,g) = (max(n,m), f+g)$ where $max(n,m)$ is the maximum of $n$ and $m$.
  \item $\zero_s = (0, x\mapsto 0)$ and $\unit_s = (0,  1)$.
  \item $(n,f) \leq_s (m,g)$ iff $n \leq m$ and $\forall x \geq m, f(x) \leq g(x)$ and
  $(g-f)(x)\leq (g-f)(y)$ for $m\leq x \leq y$ .
  \item $\norm{(n,f)}_s = f(n)$.
\end{itemize}
\end{defn}

\begin{prop}
$\M_s$ is a quantitative monoid with unit.
\end{prop}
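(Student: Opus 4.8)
The plan is to check the three ingredients in the definition of a quantitative monoid with unit one after another: that $(M_s, +_s, \zero_s, \leq_s)$ is a preordered commutative monoid, that $\norm{.}_s$ lands in $\N$ and satisfies both the anti-triangular inequality and compatibility with $\leq_s$, and finally that $\norm{\unit_s}_s \geq 1$. The observation that drives essentially every step is that a polynomial $f \in \N[X]$, having non-negative coefficients, is non-decreasing on $[0,\infty)$; so for $n \leq m$ in $\N$ one always has $f(n) \leq f(m)$. I would state this once at the outset and reuse it throughout.

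For the algebraic structure, associativity and commutativity of $+_s$ are inherited from $\max$ on $\N$ and $+$ on $\N[X]$, and $\zero_s = (0, x \mapsto 0)$ is neutral because $\max(0,m) = m$. Reflexivity of $\leq_s$ is immediate since $f - f$ is constantly zero. The first point requiring attention is transitivity: from $(n,f) \leq_s (m,g) \leq_s (k,h)$ the conditions $n \leq k$ and $f(x) \leq h(x)$ for $x \geq k$ follow at once, while monotonicity of $h - f$ on $[k,\infty)$ is obtained by writing $h - f = (h-g) + (g-f)$ as a sum of two functions each non-decreasing on $[k,\infty)$. I would also verify that $\leq_s$ is compatible with $+_s$, which is clean: adding $(k,h)$ to both sides of $(n,f) \leq_s (m,g)$ leaves the difference unchanged, since $(g+h) - (f+h) = g - f$, and $[\max(m,k),\infty) \subseteq [m,\infty)$, so all three clauses of the order transfer directly.

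For the norm, well-definedness into $\N$ is clear since $f(n) \in \N$ when $f$ has non-negative coefficients and $n \in \N$. For the anti-triangular inequality I would take $p = (n,f)$, $q = (m,g)$ and assume without loss of generality $n \leq m$, so that $\norm{p +_s q}_s = (f+g)(m) = f(m) + g(m)$ while $\norm{p}_s + \norm{q}_s = f(n) + g(m)$; the claim then reduces to $f(n) \leq f(m)$, which is the monotonicity observation. For compatibility, assuming $(n,f) \leq_s (m,g)$ I would chain $f(n) \leq f(m) \leq g(m)$, using monotonicity of $f$ for the first inequality (as $n \leq m$) and the second clause of the order at the point $x = m$ for the second, giving $\norm{(n,f)}_s \leq \norm{(m,g)}_s$. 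The unit condition is then trivial: $\norm{\unit_s}_s = \norm{(0,1)}_s = 1$.

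I do not expect a genuine obstacle here; the only delicate verifications are transitivity of $\leq_s$ and compatibility of the norm, and both come down to combining the monotonicity of individual polynomials with non-negative coefficients with the preservation of the third, monotonicity-of-the-difference clause under the relevant operations. The main thing to be careful about is precisely that this third clause survives both transitivity (via the sum-of-monotone-functions argument) and addition (via cancellation of the common summand), which is exactly what the definition of $\leq_s$ is arranged to guarantee.
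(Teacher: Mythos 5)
Your proof is correct and follows essentially the same route as the paper's: the anti-triangular inequality via $f(n)+g(m) \leq f(\max(n,m))+g(\max(n,m)) = (f+g)(\max(n,m))$, compatibility of the norm via the chain $f(n) \leq f(m) \leq g(m)$, and the unit condition read off directly. The only difference is that you spell out transitivity of $\leq_s$ and its compatibility with $+_s$, which the paper dismisses as clear; your verifications of these points are correct.
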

\begin{proof}
\begin{itemize}
\item
It is clear that $(\M_s, +_s, \zero_s, \leq_s)$ is a preordered
commutative monoid. 
\item Let $(n,f)$ and $(m,g)$ be two elements of $M_s$. 
We have 
\begin{eqnarray*}
  \norm{(n,f)}+\norm{(m,g)} & = & f(n) + g(m)\\
  & \leq & f(max(n,m)) + g(max(n,m))\\
  & = & (f+g)(max(n,m))\\
  & = & \norm{(max(n,m),f+g)}\\
  & = & \norm{(n,f)+_s(m,g)}
\end{eqnarray*}

\item Suppose $(n,f)\leq_s (m,g)$. It means that $n\leq m$ and
$\forall x\in \N\text{ such that } x\geq m,  f(x)\leq g(x)$.  Hence, we have
$$\norm{(n,f)} = f(n) \leq f(m) \leq g(m) = \norm{(m,g)}$$

\item Finally $\unit$ is a unit, since $\norm{\unit}_s = 1$. 

\end{itemize}
\end{proof}

We moreover define the operation $! : M_s \longrightarrow M_s$ as
$!(n,f) = (n, f^+)$ where $f^+(X) = (X+1)f(X)$. This operation 
enjoys various properties.

\begin{prop}\label{prop:soft_monoid}
The pair $(!, \{(n,0)\}_{n\in\N})$ is 
a soft exponential. 
\end{prop}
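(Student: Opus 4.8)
The plan is to verify directly the two defining inequalities of a soft exponential for the pair $(!, (r_n)_{n\in\N})$ with $r_n = (n,0)$ and $!(n,f) = (n,f^+)$, $f^+(X) = (X+1)f(X)$, on the soft monoid $\M_s$. The one structural fact I would record at the outset and use repeatedly is that every $f\in\N[X]$ has non-negative coefficients, so $f(x)\geq 0$ for $x\geq 0$ and $f$ is non-decreasing on $[0,\infty)$ (from $0\leq x\leq y$ one gets $x^i\leq y^i$ termwise). Since the paper has already shown $\M_s$ to be a quantitative monoid with unit, I may take for granted that $\leq_s$ is a preorder, in particular reflexive.

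First I would establish the submultiplicativity condition $!p +_s !q \leq_s\ !(p +_s q)$. Writing $p=(n,f)$ and $q=(m,g)$, this is in fact an equality: $!(n,f) +_s !(m,g) = (\max(n,m),\, f^+ + g^+)$, while $!((n,f)+_s(m,g)) = (\max(n,m),\, (f+g)^+)$, and these coincide because $(f+g)^+(X) = (X+1)(f(X)+g(X)) = f^+(X)+g^+(X)$ and the first components are both $\max(n,m)$. The desired inequality then follows by reflexivity of $\leq_s$.

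Next I would prove $k.p \leq_s\ !p +_s r_k$ for every $p=(n,f)$ and $k\in\N$ (renaming the multiplicity to $k$ to avoid a clash with the first component $n$ of $p$). For $k\geq 1$ one computes $k.(n,f) = (n, kf)$ and $!(n,f) +_s r_k = (\max(n,k), f^+)$, so I must check the three clauses of $(n,kf) \leq_s (\max(n,k),f^+)$. The first, $n\leq\max(n,k)$, is immediate. For the pointwise domination, take any $x\geq\max(n,k)$; then $x\geq k$, so $k\leq x+1$ and hence $kf(x)\leq (x+1)f(x)=f^+(x)$ using $f(x)\geq 0$. The degenerate case $k=0$ reduces to checking $\zero_s \leq_s\ !p$, which is handled by the same computations.

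The step I expect to be the real obstacle is the third clause of $\leq_s$, namely that the difference $h(X) := f^+(X) - kf(X) = (X+1-k)\,f(X)$ is non-decreasing on $[\max(n,k),\infty)$. On this interval the factor $X+1-k$ is strictly positive, since $X\geq k$ forces $X+1-k\geq 1$, and it is increasing in $X$; meanwhile $f$ is non-negative and non-decreasing there. A product of two non-negative non-decreasing functions is non-decreasing, because for $\max(n,k)\leq x\leq y$ one has $(x{+}1{-}k)f(x)\leq (y{+}1{-}k)f(x)\leq (y{+}1{-}k)f(y)$. This monotonicity requirement is precisely the delicate ingredient of the soft-monoid order that goes beyond a naive pointwise comparison, and it is exactly where the non-negativity of the coefficients of $f$ is genuinely used; once it is in place, both soft-exponential axioms are verified and the proposition follows.
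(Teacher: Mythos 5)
Your proof is correct and follows essentially the same route as the paper: the first axiom is established as an outright equality $!p +_s !q = !(p +_s q)$ via distributivity of $(X+1)$ over $f+g$, and the second by comparing $k.p = (n, kf)$ with $!p +_s (k,0) = (\max(n,k), f^{+})$. The only difference is one of detail: the paper asserts the inequality $(n,kf) \leq_s (\max(n,k), f^{+})$ in a single line, whereas you explicitly verify all three clauses of $\leq_s$ (in particular the monotonicity of the difference $(X+1-k)f(X)$, which is indeed the only non-trivial point and the place where non-negativity of the coefficients of $f$ is used).
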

\begin{proof}
Here, we pose $p = (n,f)$ and $q= (m,g)$. 
  \begin{enumerate}[(i)]
    \item We have 
    \begin{eqnarray*}
    	!(p+_s q)  & = &  (max(n,m), (f+g)^{+}) \\
				   & = & (max(n,m), (X+1)(f+g)) \\
				   & = & (max(n,m), (X+1)f + (X+1)g)\\	
				   & = & (max(n,m), f^{+}+g^{+})\\
				   & = & !p+_s !q
	\end{eqnarray*}
	\item We have 
    \begin{eqnarray*}
    	k.p  		& = &  (n, k.f) \\
					& \leq_s & (max(n,k), (X+1)f)\\
				   & = & (max(n,k), f^{+})\\
				   & = & !p+_s (k,0)
	\end{eqnarray*}
	\item Immediate.
	
\end{enumerate}
\end{proof}
      
%% \begin{proof}
%% \begin{itemize}
%%   \item Take $(n,f),(m,g)\in \M_s$. Then 
%% $\norm{!((n,f)+(m,g))}_s = \norm{(n|m, (f+g)^+)}_s = (f+g)^+ (n|m) = (f^+ + g^+)(n|m) = \norm{!(n,f)+!(m,g)}_s$. 
%%  \item Take $(n,f)\in \M_s$. Then $\norm{k(n,f)}_s = \norm{(n,x\mapsto k\times f(x))}_s = k f(n) \leq (k+1+n) f(k+1+n) = \norm{!p+(k,x\mapsto 0)}_s$. 
%%   \item This is simply because $f(n) \leq f^+(n)$ for every $f$ and $n$. 
%%   \item Follows immediately from the definition. 
%% \end{itemize}
%%\end{proof}
Properties $(i)$ and $(ii)$ are crucial to obtain respectively 
monoidality of $!$ and the multiplexing rule (hence to prove adequacy).  \newline

\subsection{Interpretation of \HOSAL}

We now extend the realizability interpretation
defined on the multiplicative fragment to the exponentials.
We suppose that:
\begin{itemize}
  \item $\redu$ is an evaluation relation that contains $\redb$.
  \item $\M$ is a quantitative monoid with a soft exponential
  $(!,(r_n)_{n\in\N})$. 
  \item $(\M,\Bot,\pBeta)$ is a quantitative pole which is moreover
  $\redb$-saturated:
  $$\text{For every } (c,p)\in \Bot, \text{ if } c'\redb c\text{ then } (c',p+\pBeta)\in\Bot$$
\end{itemize}
  
We  introduce a new unary operation $!$ on sets of bounded terms:
$$!X = \setp{(!V,!p)}{(V,p)\in \BV{(X^{\bot\bot})}\times \M}$$

and we extend the interpretation of formulas as follows: 
\begin{eqnarray*}
\Inte{!A}_\rho & = &  !\Inte{A}_\rho\\
\Inte{?A}_\rho & = & (!\Inte{A^{\bot}}_\rho)^{\bot}
\end{eqnarray*}

As in Subsection \ref{subsec:adequacy}, to state the adequacy theorem
we need to associate a function to the two new rules:
\begin{eqnarray*}
  \Mp{!_n} & = & x\mapsto !x + n.\pBeta\\
  \Mp{M_n} & = & x\mapsto x + \pBeta+r_n
 \end{eqnarray*}

 \begin{thm}
 All rules $R$ of \HOSAL are $\Mp{R}$-adequate.
 \end{thm}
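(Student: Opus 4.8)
The plan is to reduce the statement to the two genuinely new rules. Since \HOSAL is obtained from \HOMAL by adding only $(!_k)$ and $(M_n)$, and since the freshly added interpretations $\Inte{!A}_\rho = {!}\Inte{A}_\rho$ and $\Inte{?A}_\rho = ({!}\Inte{A^\bot}_\rho)^\bot$ can influence a derivation only through these two rules (the adequacy arguments of Theorem \ref{thm:adequacy1} never inspect the exponential constructors, and the $\redBeta$-, $\redMu$- and $\leq$-saturation clauses they rely on still hold), Theorem \ref{thm:adequacy1} already yields adequacy of every old rule. So I would only have to show that $(M_n)$ is $\Mp{M_n}$-adequate and $(!_k)$ is $\Mp{!_k}$-adequate, using in addition the assumed $\redb$-saturation of $\Bot$ and Property \ref{prop:bang_commute}. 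Two facts streamline the bookkeeping: $?A$ is negative, so $\mu!(\kappa).c$ is a value, and $\interp{?A}_\rho = ({!}\Inte{A^\bot}_\rho)^\bot$ is already biorthogonally closed.

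For $(M_n)$, assume $c:(\vdash \kappa_1:A,\dots,\kappa_n:A,\Gamma)$ is $p$-adequate, fix $\rho$ and $\sigma\Vdash\Gamma[\rho]$, and test $(\mu!(\kappa).c[\kappa/\kappa_1,\dots,\kappa/\kappa_n], p+\pBeta+r_n)[\sigma]$ against an arbitrary $({!}W,{!}s)\in {!}\Inte{A^\bot}_\rho$, so that $(W,s)\in\BV{\Inte{A^\bot}_\rho^{\bot\bot}}$. The command fires one $\redb$-step to $c[W/\kappa_1,\dots,W/\kappa_n][\sigma]$, hence by $\redb$-saturation it is enough to put this reduct in $\Bot$ at weight $p + r_n + {!}s$ plus the weight carried by $\sigma$. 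Feeding $n$ copies $\kappa_i\mapsto (W,s)$ into the premise puts it there at weight $p + n.s$ plus the weight of $\sigma$; the soft-exponential inequality $n.s \leq {!}s + r_n$ followed by $\leq$-saturation then lifts this to the required weight. Because adequacy supplies realizers only in $\Inte{A^\bot}_\rho$ whereas the test ranges over its biorthogonal values, I would insert Lemma \ref{lm:dbotopen}, applied once to each $\kappa_i$ before the $n$ arguments are identified, to legitimately feed the biorthogonal $(W,s)$.

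For $(!_k)$, assume $\vdash V : A\delb x_1:N_1,\dots,x_k:N_k$ is $p$-adequate and fix an adequate $\sigma$, which forces $\sigma(x_i) = ({!}u_i,{!}q_i)$ with $(u_i,q_i)\in\BV{\Inte{N_i^\bot}_\rho^{\bot\bot}}$. Testing $(!_{\{x_1,\dots,x_k\}}V, {!}p + k\pBeta)[\sigma]$ against $(t_-,r)\in({!}\Inte{A}_\rho)^\bot$, Property \ref{prop:bang_commute} rewrites the command by $\redMu\redb^{k}$ to $\com{!V'}{t_-}$, where $V' = V[u_1/x_1,\dots,u_k/x_k]$ (legitimate since the $u_i$ and the negative value $t_-$ are closed), so $k$ uses of $\redb$-saturation reduce the goal to putting $\com{!V'}{t_-}$ in $\Bot$. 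The premise together with $k$ applications of Lemma \ref{lm:dbotopen} (one per variable $x_i$) yields $(V', p + \sum_i q_i)\in\BV{\Inte{A}_\rho^{\bot\bot}}$, hence $({!}V', {!}(p+\sum_i q_i))\in {!}\Inte{A}_\rho$, which places the command in $\Bot$.

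The hard part will be the final weight reconciliation in the promotion case. The realizer inherits the summed banged weights $\sum_i {!}q_i$ from the $?$-typed context, whereas banging a single premise realizer naturally produces the weight ${!}(p+\sum_i q_i)$; matching the two is exactly where the monoidality clause of the soft exponential, ${!}p+{!}q\leq{!}(p+q)$ (an equality for the concrete soft monoid by Property \ref{prop:soft_monoid}), must be invoked, and one has to orient it so that it cooperates with the $\leq$-saturation of the biorthogonal rather than against it. Getting this orientation and the per-argument use of Lemma \ref{lm:dbotopen} right is the only real subtlety; the reductions and the remaining saturations are routine once Property \ref{prop:bang_commute} is in hand.
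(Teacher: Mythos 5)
Your proposal follows essentially the same route as the paper's proof: adequacy of the old rules is inherited from Theorem \ref{thm:adequacy1}, the $(M_n)$ case is handled by testing against $(!W,!s)$, firing the $\redb$-step, feeding $n$ copies into the premise via per-variable applications of Lemma \ref{lm:dbotopen}, and concluding with $n.s \leq {!}s + r_n$ and $\leq$-saturation, while the $(!_k)$ case uses Property \ref{prop:bang_commute}, $\redMu$/$\redb$-saturation and the monoidality of ${!}$ — exactly the paper's ingredients in the same order. If anything, your write-up is slightly more careful than the paper's at two points: the paper asserts $(V_i,q_i)\in\Inte{N_i^\bot}_\rho$ where the definition of ${!}X$ only yields $(V_i,q_i)\in\BV{\Inte{N_i^\bot}_\rho^{\bot\bot}}$ (your per-variable insertion of Lemma \ref{lm:dbotopen} repairs this), and it states the $(!_k)$ conclusion as membership in $\overline{\Inte{!A}_\rho}$ although $!_{\{x_1,\dots,x_k\}}V$ is not a ${!}$-term, whereas your direct test against $({!}\Inte{A}_\rho)^{\bot}$ gives the biorthogonal membership that adequacy actually requires.
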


\begin{proof}
\begin{itemize}
  \item For all the multiplicative part, the proof is the same as 
  the one of the \HOMAL adequacy theorem. 
  \item $(!)$: Suppose $\vdash V : A \delb \Gamma$ is $p$-adequate. 
  Let $\rho$ be a total valuation and  $\sigma \Vdash ?\Gamma[\rho]$.
  We know that for each $x_i : N_i\in \Gamma$, we have 
  $\sigma(x_i) = (!V_i, !q_i)$ where $(V_i,q_i)\in \Inte{N_i^{\bot}}_\rho$. 
  If we pose $\sigma' = [x_1\leftarrow (V_1,q_1),\dots ,x_k \leftarrow (V_k, q_k)]$, 
  we have clearly $\sigma' \Vdash \Gamma[\rho]$. By hypothesis $(V,p)[\sigma'] \in \overline{\Inte{A}_\rho}$. 
  Hence, $(V,p)[\sigma'] \in \Inte{A}_\rho^{\bot\bot} \cap \mathbb{V}$ and finally
  $(!V[V_1/x_1,\dots, V_k/x_k], !(p+q_1+\dots+q_k)) \in \Inte{!A}_\rho$. 
  Because $!$ is a soft exponential on $\M$, and by $\leq$-saturation, we obtain
  $(!V[V_1/x_1,\dots, V_k/x_k], !p + !q_1 +\dots + !q_k)\in \overline{\Inte{!A}_\rho}$. 
  Finally, we know by Property \ref{prop:bang_commute}, $\redBeta$ and $\redMu$-saturation
  that 
  $$(!_{\{x_1,\dots, x_k\}} V[!V_1/x_1,\dots, !V_k/x_k], !p + k.\pBeta + !q_1 +\dots + !q_k)\in \overline{\Inte{!A}_\rho}$$ 
  which can be rewritten as
  $$(!_{\{x_1,\dots, x_k\}} V, !p + k.\pBeta)[\sigma] \in \overline{\Inte{!A}_\rho}$$ 
  Hence $\vdash !V : !A \delb \Gamma$ is $!p$-adequate and so the $!_k$ rule
  is $(x\mapsto !x + k.\pBeta)$-adequate. 
  
  \item $(Mplex_n)$: Suppose $c: (\vdash \kappa_1: A, \dots, \kappa_n : A, \Gamma)$
  is $p$-adequate, let $\rho$ be a total valuation and $\sigma \Vdash \Gamma$. 
  We want to show that $$(\mu !(\kappa).c[\kappa/\kappa_1,\dots, \kappa/\kappa_n], p+\pBeta+r_n)[\sigma]
  \in \Inte{?A}_\rho = \Inte{!A^{\bot}}_\rho^{\bot}$$

 First, notice that if 
	 $(V,q)\in \Inte{A}_\rho$, 
    if we pose $\sigma' = \sigma[\kappa_1\leftarrow V,\dots \kappa_n\leftarrow V]$, we clearly
    have $\sigma' \Vdash (\kappa_1:A,\dots, \kappa_n:A, \Gamma)[\rho]$. 
    By $p$-adequacy of the hypothesis, we obtain 
    $(c[V/\kappa_1,\dots, V/\kappa_n], p+n.q)[\sigma]\in \Bot$.

     Now, let $(V,q) \in  \Inte{A}^{\bot\bot}_\rho \cap \mathbb{V}$. 
    We want to show that $$\mu !(\kappa).c[\kappa/\kappa_1,\dots, \kappa/\kappa_n], 
    p+\pBeta)[\sigma] \bot (!V, !q)$$
     which will prove that the conclusion is $(p+\pBeta)$-
    adequate. But we know that 
    $\com{\mu !(\kappa).c[\kappa/\kappa_1,\dots, \kappa/\kappa_n]}{!V}
    \redh c[V/\kappa_1,\dots, V/\kappa_n]$. 
    But by the previous point, combined $n$ times with Lemma \ref{lm:dbotopen}, 
    we obtain 
    $$\forall (V',q')\in X^{\dbot}, (c[V'/\kappa_1,\dots, V'/\kappa_n], p+n.q')[\sigma]\in \Bot$$
    Hence $(c[V/\kappa_1,\dots, V/\kappa_n], p+n.q)\in \Bot$. 
    By $\leq$-saturation of $\Bot$ and because $!$ is a soft exponential, we have
    $(c[V/\kappa_1,\dots, V/\kappa_n], p+!q+r_n)\in \Bot$. 
    By $\redBeta$-saturation we finally obtain 
   $$(\mu !(\kappa).c[\kappa/\kappa_1,\dots, \kappa/\kappa_n], p+\pBeta+r_n)[\sigma]
  \in \Inte{?A}_\rho $$
\end{itemize}
\end{proof}

\begin{rmk}
In this proof, the lemma \ref{lm:dbotopen} is crucial to show
adequacy of the multiplex rule. The situation would be similar
for any system containing modality-rules that change the whole
context. 
\end{rmk}

\subsection{Polynomial bounded time termination}

We now prove the polynomial bounded time termination of \HOSAL, by
extending the technique of quantitative reducibility candidates. \newline

To obtain a bounded normalization theorem, we 
need to check that the construction of the quantitative
reducibility candidates is still valid. The definitions remain
the same as those of Subsection \ref{subsec:quant_cand}, except
for the definition of $\Bot$:
$$\Bot = \setp{(c,p)}{\Time^{{(\redBeta\cup\redb)}}(c)\text{ is defined and is
bounded by }\norm{p}}$$

 The only
new property we need is the following one:

\begin{lem}
If $X \in \Can$ then $!X \in \CanPos$. 
\end{lem}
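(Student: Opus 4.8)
The plan is to verify directly that $!X$ satisfies the three defining conditions of a positive quantitative reducibility candidate, transporting the arguments already used for the $\otimes$ and $\downarrow$ cases to the exponential reduction $\redb$. First I would record the elementary facts. Every element of $!X$ has the form $(!V,!p)$ with $!V$ a positive value, so $!X\subseteq \PosTerms\cap\mathbb{V}\times\M$ and it lives in the right powerset. Moreover, since $X\in\Can$ we have $(\dai,\zero)\in X^{\bot\bot}$ for the appropriate constant $\dai\in\{\daiPos,\daiNeg\}$ (this is condition (2) of the candidate when $X$ is positive, and follows from condition (3) of the underlying positive candidate when $X$ is negative); as $\dai$ is a value, $(\dai,\zero)\in\BV{X^{\bot\bot}}$, hence $(!\dai,!\zero)\in !X$. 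This distinguished element will drive the proof of condition (2).

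For condition (3), namely $(!X)^{\bot\bot}\subseteq\{(\daiNeg,\zero)\}^\bot$, I would argue exactly as in the multiplicative lemmas: by the orthogonality properties it suffices to prove $!X\subseteq\{(\daiNeg,\zero)\}^\bot$. This is immediate, because for any $(!V,!p)\in !X$ the command $\com{!V}{\daiNeg}$ is irreducible for $\redh$ (Property~\ref{prop:daiprop}(1), since $!V$ is a positive value), so $\Time^{\redBeta\cup\redb}(\com{!V}{\daiNeg})=0\leq\norm{!p}$, giving $(!V,!p)\bot(\daiNeg,\zero)$.

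For condition (2), $(\daiPos,\zero)\in(!X)^{\bot\bot}$, I would take an arbitrary test $(t_-,p)\in(!X)^\bot$ and prove $(\com{t_-}{\daiPos},p)\in\Bot$. Orthogonality against $(!\dai,!\zero)\in !X$ gives $(\com{t_-}{!\dai},p+!\zero)\in\Bot$, that is $\Time^{\redBeta\cup\redb}(\com{t_-}{!\dai})\leq\norm{p+!\zero}$. The monoidality inequality of the soft exponential at $\zero$ yields $!\zero+!\zero\leq !\zero$, whence $\norm{!\zero}=0$ exactly as one derives $\norm{\zero}=0$; so the transported unit does not inflate the measure and $\norm{p+!\zero}=\norm p$ (in the concrete soft monoid $!\zero=\zero$, so this is a genuine equality). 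Replacing the reducible value $!\dai$ by the fully inert constant $\daiPos$ cannot increase the number of $\redBeta\cup\redb$ steps — this is the exponential analogue of Property~\ref{prop:daiprop}, proved by the same inspection of reductions, the point being that $\daiPos$ never triggers a $\redb$-step. Hence $\Time^{\redBeta\cup\redb}(\com{t_-}{\daiPos})\leq\Time^{\redBeta\cup\redb}(\com{t_-}{!\dai})\leq\norm p$, which gives $(t_-,p)\bot(\daiPos,\zero)$ and therefore condition (2).

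It remains to check condition (1), $\BV{(!X)^{\bot\bot}}=!X$, the value-saturation property. The inclusion $!X\subseteq\BV{(!X)^{\bot\bot}}$ is immediate from $!X\subseteq(!X)^{\bot\bot}$ together with the fact that $!X$ contains only values; for the converse I would show, as in the multiplicative cases, that the only values orthogonal to $(!X)^\bot$ must be of the shape $!V$ with $(V,q)\in\BV{X^{\bot\bot}}$, which rests on $\redb$ being the only rule that consumes a $!$-value. The hard part will be condition (2): one must (a) isolate and establish the correct computational comparison lemma for $\redb$, guaranteeing that substituting the inert $\daiPos$ for $!\dai$ is weakly decreasing for $\Time^{\redBeta\cup\redb}$, and (b) keep the weight under control, the crucial observation being that the transported unit $!\zero$ has null measure so that the bound collapses back to $\norm p$ rather than the a priori larger $\norm{p+!\zero}$.
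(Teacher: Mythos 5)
Your treatment of conditions (2) and (3) is essentially the paper's own proof: it too obtains $(!\daiPos,!\zero)\in {!X}$ from $(\daiPos,\zero)\in X^{\dbot}$, uses $!\zero=\zero$ in the soft monoid (Property \ref{prop:soft_monoid}), the comparison $\Time^{\redBeta\cup\rightarrow_{!}}(\com{t_-}{\daiPos})\leq\Time^{\redBeta\cup\rightarrow_{!}}(\com{t_-}{!\daiPos})$ (which the paper, like you, asserts as an evident analogue of Property \ref{prop:daiprop}), and, for condition (3), the irreducibility of $\com{!V}{\daiNeg}$. One slip in your version: from $\norm{!\zero}=0$ you infer $\norm{p+!\zero}=\norm{p}$, but the anti-triangular inequality gives $\norm{p}+\norm{!\zero}\leq\norm{p+!\zero}$, an inequality in the useless direction, and nothing in the axioms of a soft exponential bounds $\norm{p+!\zero}$ above by $\norm{p}$. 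What actually closes this step is your parenthetical remark, which is also exactly what the paper invokes: in the soft monoid $!\zero=\zero$ holds as an equation of the monoid, so $p+!\zero=p$ and no measure computation is needed.

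The genuine gap is your plan for condition (1). You propose to prove $\BV{((!X)^{\dbot})}\subseteq {!X}$ by showing that every value in $(!X)^{\dbot}$ has the shape $!V$. This is contradicted by an element you construct yourself: $(\daiPos,\zero)$ is a weighted value, you prove it belongs to $(!X)^{\dbot}$ (that is precisely condition (2)), and $\daiPos$ is not of the form $!V$ --- it is a distinct instruction of $\Kpos$. (Weights give a second obstruction: by $\leq$-saturation $(!V,q)\in(!X)^{\dbot}$ for every $q\geq {!p}$, whereas $!X$ carries only weights of the form $!p$.) So condition (1), read literally, is false for $!X$, and no argument can establish it. The same failure already occurs in the multiplicative lemmas --- $(\daiPos,\zero)\in(X\otimes Y)^{\dbot}$ although $\daiPos$ is not a pair --- which is why, contrary to what ``as in the multiplicative cases'' suggests, the paper never verifies condition (1) in any of these closure lemmas: its proofs check only conditions (2) and (3), and only these two are used downstream in the bounded-time termination theorem. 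So the hard part is not condition (2), as you predict, but condition (1), and it is hard because it is impossible; the correct move is to drop it (treating (2) and (3) as the operative requirements on the propositional domain), not to seek a finer analysis of the reduction $\rightarrow_{!}$.
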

\begin{proof}
Without any loss of generality, let's suppose $X$ is positive. 
\begin{itemize}
  \item  Let 
  $(t,p)\in (!X)^{\bot}$. We have to show that $(\daiPos,\zero)\bot (t,p)$. 
  But we know that $(\daiPos, \zero)\in X^{\dbot}$. 
  Hence, $(!\daiPos, !\zero)\in !X$. By Property \ref{prop:soft_monoid},
  $!\zero = \zero$. We then have $(\com{t}{!\daiPos}, p+\zero)\in \Bot$. 
  But we know that 
  \begin{eqnarray*}
  		\Time^{(\redBeta\cup\redb)}(\com{\daiPos}{t}) & \leq &  \Time^{{(\redBeta\cup\redb)}}(\com{!\daiPos}{t})\\
		 & \leq  & \norm{p}
  \end{eqnarray*}
  Hence $(\daiPos,\zero)\in (!X)^{\dbot}$. 
  \item Let $(V,p)\in !X$, because $V$ is a positive value,
  it is immediate that $\com{V}{\daiNeg}$ does not reduce for $\redh\cup\redb$. 
  Hence $!X \subseteq \{(\daiNeg,\zero)\}^{\bot}$ and so
  $(!X)^{\bot\bot} \subseteq \{(\daiNeg,\zero)\}^{\bot}$. 
 \end{itemize}
\end{proof}

 By instantiating the monoid with
the soft monoid, we can derive a polystep normalization property of
terms typable in \HOSAL which extends the linear time normalization property of
Subsection \ref{subsec:quant_cand}. 

\begin{thm} 
There exists a family $(P_k)_{k\in\N}$ of polynomials on $\N$
such that if $\pi$ is a proof of $\vdashSAL t : A \delb $, then $t$ normalizes 
in at most $P_{\delta(\pi)}(|t|)$ reduction steps.  
\end{thm}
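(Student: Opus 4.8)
The plan is to mirror the linear-time termination argument of Subsection \ref{subsec:quant_cand}, but now carried out over the soft monoid, replacing the crude estimate $\Mp{\pi}\leq|c|$ by a polynomial bound whose degree is controlled by the depth $\delta(\pi)$. First I would fix the parameters of the model: take $\M$ to be the soft monoid $\M_s$ equipped with the soft exponential $(!,(r_n)_{n\in\N})$ of Property \ref{prop:soft_monoid}, set $\pBeta=\unit_s$, and use the polystep pole $\Bot=\{(c,p)\mid \TimeS{(\redBeta\cup\redb)}(c)\text{ is defined and }\leq\norm{p}\}$; this pole is $\redBeta$-, $\redMu$- and $\redb$-saturated, so the \HOSAL adequacy theorem applies. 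Choosing $\CanPos$ as the positive propositional domain is legitimate since all closure conditions of Definition \ref{def:prop_domain} hold, the only new one being $!X\in\CanPos$ for $X\in\Can$.

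Next I would extract a realizer and turn it into a terminating command. Fix a total valuation $\rho$ sending every positive propositional variable to $\{(\daiPos,\zero)\}\in\CanPos$. As $\vdashSAL t:A\delb$ has empty context, the only adequate substitution is the empty one, so adequacy gives $(t,\Mp{\pi})\in\interp{A}_\rho=\Inte{A}_\rho^{\dbot}$. Since $\Inte{A}_\rho$ is a quantitative reducibility candidate, the appropriate dummy $(\dai,\zero)$ (namely $\daiNeg$ if $A$ is positive, $\daiPos$ if $A$ is negative) lies in $\Inte{A}_\rho^{\bot}$, whence $(\com{t}{\dai},\Mp{\pi})\in\Bot$ by orthogonality. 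By definition of $\Bot$ the command $\com{t}{\dai}$ is normalizing for $\redBeta\cup\redb\cup\redMu$ and performs at most $\norm{\Mp{\pi}}$ many $(\redBeta\cup\redb)$-steps; since the dummy triggers no reduction of its own (Property \ref{prop:daiprop}), this bounds the normalization of $t$ itself.

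It then remains to bound $\norm{\Mp{\pi}}=f_\pi(n_\pi)$, writing $\Mp{\pi}=(n_\pi,f_\pi)\in\M_s$, by a polynomial in $|t|$ whose degree depends only on $\delta(\pi)$; this is the crux. I would show by induction on $\pi$ that $n_\pi\leq|t|$ and that $f_\pi$ is a polynomial with nonnegative integer coefficients of degree at most $\delta(\pi)$ satisfying $f_\pi(X)\leq|t|^2\,(X+1)^{\delta(\pi)}$. The argument reads off how each $\Mp{R}$ acts on $\M_s$: the cost-bearing rules $\parr,\uparrow,Cut,M_n$ only add constants $(0,1)$ and, through $r_n=(n,0)$, raise the first component by some $n\leq|t|$ (the number of contracted occurrences), leaving the degree untouched, whereas the promotion rule, via $\Mp{!_n}=x\mapsto{!x}+n.\pBeta$, applies the exponential, which by Property \ref{prop:soft_monoid} multiplies the polynomial component by $(X+1)$ and hence raises its degree by exactly one. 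Because the maximal number of nested promotions along a branch is precisely $\delta(\pi)$ and the number of cost-bearing rules is bounded by the number of term constructors of $t$, i.e. by $|t|$, the sum-over-nodes expression for $f_\pi$ has at most $|t|$ summands, each a constant $\leq|t|$ times a power $(X+1)^{d}$ with $d\leq\delta(\pi)$, which yields the stated bound.

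Evaluating at $n_\pi\leq|t|$ then gives $\norm{\Mp{\pi}}=f_\pi(n_\pi)\leq|t|^2\,(|t|+1)^{\delta(\pi)}$, so setting $P_k(m)=m^2(m+1)^k$ produces the required family indexed by the depth, and $t$ normalizes in at most $P_{\delta(\pi)}(|t|)$ steps. The main obstacle is the inductive bookkeeping of the third paragraph: one must verify carefully that degree growth in $\M_s$ is driven \emph{solely} by promotion and is therefore capped by $\delta(\pi)$, while the coefficient magnitudes and the first component stay linear in $|t|$. In particular the zero-cost structural rules ($\exists,\forall,\congE,W$) carry the identity $x\mapsto x$ and must be checked not to perturb this invariant even though they may occur arbitrarily often, and one must confirm that the count of cost-bearing rules, rather than the full size of $\pi$, is what is bounded by $|t|$.
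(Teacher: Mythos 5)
Your proof is correct and follows essentially the same route as the paper: instantiate the quantitative reducibility candidates over the soft monoid with the polystep pole, apply the \HOSAL adequacy theorem, and observe that only the promotion rule raises the degree of the polynomial component of $\Mp{\pi}$, all other rules contributing only additively. The paper's own proof is just a two-sentence sketch of this degree argument, so your explicit bookkeeping (the dummy-constant extraction step, the bound $f_\pi(X)\leq |t|^2(X+1)^{\delta(\pi)}$, and the family $P_k(m)=m^2(m+1)^k$) merely fills in details the paper leaves implicit.
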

\begin{proof}
The proof consists essentially to remark that in the definition of
$\Mp{\pi}$, the only rule that makes the degree of
the polynomial of $\Mp{\pi}$ rise is the $(!)$ rule. The other rules
cause only the linear part of $\Mp{\pi}$ to grow. 
\end{proof}

%
%\subsection{Linear logic}
%
%
%We define the quantitative monoid of finite functions 
%from $\N$ to $\{2,3\}$.
%\begin{defn}
%The Cohen real quantitative monoid is the structure\newline
%${\M_C = (M_C, \cup, \emptyset, \norm{.}_C)}$
%where:
%\begin{itemize}
%  \item $M_C$ is the set of finite relations $f\subseteq \N\times\{0,1\}$.
%  We denote by $dom(f) = \setp{n\in\N}{\exists i\in\{0,1\}\wedge (n,i)\in f}$
%  \item $\norm{f}_C = \sum_{n\in dom(f)} \prod_{i\in f(n)} i$
%\end{itemize}
%\end{defn}
%
%We have immediately $\norm{f}$
%$\sum_{n\in dom(f)} \prod_{i\in f(n)} i + \sum_{n\in dom(g)} \prod_{i\in g(n)} i
%$
%If $n\in dom(f)\cap dom(g)$, then 
%$$\prod_{i\in f(n)} i + \prod_{j\in g(n)} \leq \prod_{i\in f(n)\cup g(n)
%
\newcommand{\CInt}[1]{\mathcal{C}[#1]}
\newcommand{\FInt}[1]{\CInt{#1}^\bot}

\newcommand{\dbar}[1]{{\forcingOrtho{\forcingOrtho{#1}}}}
\def\BotTrans{{\FTrans{\Bot}}}
\def\botTrans{{\FTrans{\bot}}}
\def\dbotTrans{{\botTrans\botTrans}}
\def\VdashTrans{{\,\FTrans{\Vdash}\,}}
\newcommand{\InteTrans}[2]{{\FTrans{\Inte{#1}_{#2}}}}
\newcommand{\InteQuant}[2]{{{\Inte{#1}^{\circ}_{#2}}}}
\newcommand{\interpTrans}[2]{{\FTrans{\interp{#1}_{#2}}}}
\def\redTrans{ {\rightarrow_\bullet} }
\def\nredTrans{ {\nrightarrow_\bullet} }
\def\EncNat{{\FTrans{\N}}}
\def\rhoTrans{{\FTrans{\rho}}}
\def\BotQuant{{\Bot^\circ}}
\def\botQuant{{\bot^\circ}}
\def\dbotQuant{{\botQuant\botQuant}}
\def\rhoQuant{{\rho^\circ}}
\def\VdashQuant{{\,\Vdash^\circ\,}}

\section{A forcing decomposition} \label{sec:forcing}

In this section, we exhibit a connection between our quantitative extension 
of classical realizability and certain forcing interpretations. 
We precisely show that by composing non-quantitative classical
realizability with 
a notion of forcing for \class{MAL}, we obtain an instance of quantitative
realizability. We finally show that quantitative reducibility candidates
are a special case of this construction.
We proceed with the following methodology:
\begin{enumerate}
  \item We define the notion of \emph{linear forcing structure}, a variation of
  the notion of forcing structure already defined in \cite{krivine2010realizability}.
  \item We introduce a \emph{forcing translation}, that is the formalization
  of a class of forcing model of \MAL \emph{inside} \HOMAL.
  The result
  is a relation $p\FVdash A$, parametrized by a choice
  of linear forcing structure.   
  \item We describe a new machine: the \emph{countdown machine}.
  It is based on the same term syntax as \LFOC, but with a different
  notion of command and different reduction rules. 
  This machine induces a new class of non-quantitative realizability relations
  for \HOMAL, parametrized by a set $\BotTrans$ and denoted $t \VdashTrans A$.
  \item We show that for a \emph{particular} instance of linear forcing
  structure and for \emph{every} choice of $\BotTrans$, there exists 
  a \emph{quantitative pole} (in the sense of Section \ref{sec:QuantKrivine}) $\BotQuant$
  such that the associated realizability relation $\VdashQuant$ satisfies for every
  \MAL formula $A$:
  $$(t,p) \VdashQuant A \Longleftrightarrow t \VdashTrans (p \FVdash A)$$
  That means composing $\FVdash$ and $\VdashTrans$ yields a quantitative
  model of \MAL 
  \item Finally, we show that quantitative reducibility candidates of Subsection 
  \ref{subsec:quant_cand}, restricted to \MAL, can be seen as the result
  of such a composition.
\end{enumerate}

This methodology could be used to study forcing translations of \class{MAL}$\omega$, but we believe
it is simpler to explain it with \class{MAL}. In the last subsection we explain how it could be extended to
the whole system \class{MAL}$\omega$. 

\subsection{Preliminaries}

We define some concepts and notations required to define the forcing
translation and establish the associated results. \newline

\newcommand{\toHO}[1]{{(#1)}^{\omega}}

\subtitle{Multipicative Affine Logic}  \MAL (Multiplicative Affine Logic) 
is the affine, second-order fragment of \HOMAL. The following
grammar defines \MAL formulas:
\begin{eqnarray*}
   A, B & ::=  & P \delb N\\
   P & ::= & X \delb A\otimes B\delb \downarrow A\\
   N & ::= & X^{\bot} \delb A\parr B\delb \uparrow A
\end{eqnarray*}

We suppose that to each \MAL variable $X$ we associate a 
\HOMAL variable $X^{\sortpropPos}$. 
Then any \MAL formula $A$ can be seen as a \HOMAL constructor 
$A^{\omega}$ of kind $\sortprop^{\circ}$ (with  $\circ \in \{ +, - \}$), defined as follows:
\begin{eqnarray*}
  \toHO{X} & = & X^{\sortpropPos}\\
  \toHO{X^{\bot}} & = & (X^{\sortpropPos})^{\bot} \\
  \toHO{A \otimes B} & = & \toHO{A} \otimes \toHO{B}\\
  \toHO{A\parr B} & = & \toHO{A} \parr \toHO{B}\\
  \toHO{\downarrow A} &  = & \downarrow \toHO{A}\\
  \toHO{\uparrow A} & = & \uparrow \toHO{A}
\end{eqnarray*}

We will abusively use the same notation $A$ for both the \MAL formula $A$
and its associated \HOMAL constructor $\toHO{A}$.
We will also use the notation $A \multimap B = A^\bot \parr B$, that is the call-by-name linear implication. \newline

\subtitle{Realizability relations} In the rest of this section, we will manipulate several
realizability relations. We will in particular consider a quantitative realizability 
relation (that will be denoted by $\VdashQuant$ in further subsections)
and a simple realizability relation (that will be denoted by $\VdashTrans$)
in the sense of the $\Vdash^{0}$ relation of Subsection \ref{subsec:nonquantreal}.
In both cases, it will be implicit that the interpretation of constructors remain
the same as the one defined in Section \ref{sec:QuantKrivine}.\newline

\subtitle{Equational implication} To define the formula translation, we follow \cite{Miquel2011Forcing}
and add a new type constructor
to \HOMAL. If $A$ is a formula of kind $\sortpropPos$ (resp. $\sortpropNeg$), 
and if $T$ and $U$ are two type constructors of the same
kind, then we have a new type constructor of kind $\sortpropPos$
(resp. $\sortpropNeg$) denoted
$\langle T = U \rangle A$.  Its informal meaning is ``$T \congE U$ implies $A$''.
Even if we can define the forcing translation without it, it will simplify
the proof of the connection lemma. 
We will not add the corresponding typing rules, because we will deal directly 
with (non quantitative) realizability. For any valuation $\rho$, the realizability 
interpretation of section \ref{sec:QuantKrivine} is extended to the new formula 
$\langle T = U \rangle A$ as follows:
$$|\langle T = U \rangle A |_\rho = \left\{
 \begin{array}{ll}
        |A|_\rho & \mbox{if } T \congE U\\
        \emptyset^\bot & \mbox{ else}
    \end{array}
\right.$$

\subsection{Linear forcing structures}\label{subsec:linear_structures}

To compose realizability and forcing, we formalize a forcing
interpretation \textit{inside} \class{MAL}$\omega$. We mostly follow Krivine's formulation
of forcing \cite{krivine2010realizability, Miquel2011Forcing}. We begin by 
giving a \textit{linear} version of Krivine's forcing structure. 

\begin{defn}[Linear forcing structure]
A \textit{linear forcing structure} is given by the following
components:
\begin{itemize} 
  \item $\kappa$, the kind of \textit{conditions}.
  \item $\CInt{.} : \kappa \rightarrow \sortprop^{\circ}$,
  with $\circ \in \{ +, - \}$ is a positive or negative predicate.
  \item $\zero : \kappa$  is a distinguished condition.
  \item $+ : \kappa \rightarrow \kappa \rightarrow \kappa$ is 
a binary operation on conditions, such that for every $p,q,r : \kappa$,
the following conversions hold in \HOMAL:
\begin{eqnarray*}
  p+(q+r) & \congE & (p+q)+r\\
  p+q     & \congE & q+p\\
  \zero+p & \congE & p
\end{eqnarray*}

\end{itemize}
\end{defn}

\begin{exa}
A simple example of linear forcing structure is the integer forcing structure,
defined as follows:
\begin{itemize}
	\item The kind of conditions is the kind $\iota$ of integers. 
	\item The predicate is $\lambda x. \top : \iota \rightarrow \sortpropNeg$. 
	\item $+$ is the usual addition on integers defined using $rec_\iota$. 
	\item $\zero$ is the constructor $\foZero : \iota$. 
\end{itemize}
Then it is clear that for every integers $p,q,r : \iota$, the requested 
conversions hold in \HOMAL. 
\end{exa}

\begin{rmk}
What we note $+$ and $\zero$ is written ${\bf .}$ and $\unit$ in \cite{krivine2010realizability, Miquel2011Forcing}.
We choose an additive notation instead of a multiplicative
one, because we think it better fits the quantitative intuition of multiplicative
linear logic. It has also the advantage of being closer to the symbols used in the
definition of
quantitative monoids. 
\end{rmk}

%
%We assume that $\alpha_\unit$ is a term such that:
%\begin{eqnarray*}
%   \alpha_\unit & \DumVdash & \forall p^\kappa\, (C[p+\unit]^\bot \parr C[p])\delb \\
%%  \alpha_{Aff} & :& \forall p^\kappa\,\forall q^\kappa\,(C[p+q]^\bot \parr C[p]) \\
%\end{eqnarray*}
The linear forcing structure is a sharp simplification of
the notion of forcing structure as defined in \cite{krivine2010realizability},
in particular because we ask $\kappa$ to be a monoid with respect to $\congE$ (that is,
at the computational level instead of the provability level).
This is however sufficient for our purpose.

\begin{rmk}
Informally,
$\FInt{p+q}$ represents a notion of orthogonality 
between $p$ and $q$, and plays the same role
in forcing as the pole $\Bot$ in realizability.
Observe that a linear forcing structure (modulo $\congE$) is a multiplicative
phase space \cite{girard1987linear}, by choosing 
$\FInt{.}$ as the pole. 
\end{rmk}

\newcommand{\forcingOrtho}[1]{{\overline{#1}}}

\newcommand{\preforc}[1]{{#1}^{*}}

\subsection{Formula translation} 

We assume having fixed a linear forcing structure on the kind $\kappa$.
We now formalize inside \HOMAL a forcing interpretation of $\MAL$. 
Following \cite{Miquel2011Forcing} methodology, we associate to each $\MAL$ formula 
$A$ a \HOMAL  formula $p\FVdash A$ (which is read "$p$ \textit{forces} $A$"). Because
in the rest of this paper all the quantifications are made on $\kappa$, we omit to
indicate the kinds on the quantifiers and on the variables of kind $\kappa$. 

\begin{defn}
Let $Z : \kappa \rightarrow \sortprop^{\circ}$ (for $\circ \in \{+,-\}$).
Then the \emph{forcing orthogonal} of $Z$ is defined as a \HOMAL
constructor of kind $\kappa \rightarrow \sortpropNeg$:
$$\forcingOrtho{Z}\quad \defequ \quad\lambda r.\forall r'.Z(r')\multimap \CInt{r+r'}$$
\end{defn}

\begin{rmks}
$ $

\begin{enumerate}
  \item The definition of forcing orthogonal is dependent of the choice of
  the linear forcing structure, since it depends of the kind $\kappa$ and
  the choice of the predicate $\CInt{.}$. 
  \item Notice that the polarity of the predicate $\forcingOrtho{Z} : \kappa \rightarrow \sortpropNeg$
  does not depend of the polarity of the predicate $Z$: if $Z$ is a positive or negative
  predicate on $\kappa$, then $\forcingOrtho{Z}$ is a negative predicate on $\kappa$.
  This is a consequence of our choice of a negative encoding of the $\multimap$ connective. 
\end{enumerate}
\end{rmks}
We now define the forcing translation. We suppose that we associate to every 
\MAL variable $X$ a \HOMAL variable $X^{\kappa \rightarrow \sortpropPos}$
of kind $\kappa \rightarrow \sortpropPos$. 
If $A : \sortprop^{\circ}$ (with $\circ \in \{ +, -\}$) is a \MAL formula, we define
a \HOMAL constructor $A^{*} : \kappa \rightarrow \sortprop^{\circ}$  inductively as follows:
 \begin{eqnarray*}
 \preforc{X}   	 & \defequ & X^{\kappa\rightarrow \sortpropPos}\\
 \preforc{(X^{\bot})}  				 & \defequ & \forcingOrtho{X^{\kappa\rightarrow \sortpropPos}}\\
 \preforc{(A\otimes B)} & \defequ & \lambda r.\exists p_1.\exists p_2.\langle r = p_1 + p_2\rangle
 (\preforc{A}(p_1) \otimes \preforc{B}(p_2))\\
 \preforc{(A \parr B)} & \defequ & 
 			\forcingOrtho{\lambda r.\exists p_1.\exists p_2.\langle r = p_1 + p_2 \rangle (\preforc{(A^{\bot})}(p_1) \otimes \preforc{B^{\bot}}(p_2))}\\
\preforc{(\downarrow A)} & \defequ & 
		  \lambda r.\downarrow \preforc{A}(r) \\
 \preforc{(\uparrow A)} & \defequ & 
 			\forcingOrtho{\lambda r. \downarrow \preforc{(A^{\bot})}(r)}\\
\end{eqnarray*}
Finally, if $A$ is a \MAL formula and $p : \kappa$, we define $p\FVdash A$
as a \HOMAL constructor of kind $\sortpropNeg$ as follows:
\begin{eqnarray*}
 p \FVdash P & \defequ & \forcingOrtho{\forcingOrtho{P^{*}}}(p)\\
 p \FVdash N & \defequ & N^{*}(p)
\end{eqnarray*}

\begin{rmks}
$ $
\begin{enumerate}
\item
Informally, $A^{*}$ and $p\FVdash A$ have respectively the same role as the sets 
$\Inte{A}$ and $\interp{A}$ defined in Subsection \ref{subsec:interpcons}.
\item 
The predicate $A^{*}$ has the same polarity as $A$. However, 
the formula $p\FVdash A$ is always negative, even if $A$ is positive. 
\end{enumerate}
\end{rmks}

In the formalization of the forcing orthogonal, we use a negative encoding of the 
$\multimap$ connective: this is an adaptation of the negative forcing translation defined in
\cite{Miquel2011Forcing,krivine2010realizability}.  
 We could have defined a positive forcing translation, but what really matters is
 the polarity of $\CInt{.}$. Whereas in \cite{Miquel2011Forcing,krivine2010realizability}
 $\CInt{.}$ is always negative, we allow it to be positive. 

\begin{prop}\label{prop:posforcing}
$ $  
\begin{enumerate}
\item
For every negative formula $N$, we have $N^{*}(p) \congE \forcingOrtho{{N^{\bot}}^{*}}(p)$.
\item
For every positive formula $P$ and every $p: \kappa$,
 we have $$p\FVdash P \congE \forcingOrtho{\lambda r. (r\FVdash P^{\bot})}(p)$$
\end{enumerate}
\end{prop}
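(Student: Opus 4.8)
The plan is to prove both claims by unfolding the definitions of the forcing translation $(\cdot)^{*}$, the forcing relation $\FVdash$, and the forcing orthogonal $\forcingOrtho{\cdot}$, and then to rely on the fact that $\congE$ is a congruence closed under $\beta$- and $\eta$-conversion (Figure~\ref{fig:HOMALcongruence}). The two parts are not independent: the second will be reduced to the first.

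For the first claim I would argue by case analysis on the shape of the negative formula $N$, which is one of $X^{\bot}$, $A\parr B$, or $\uparrow A$. The key observation is that the defining clauses of $(\cdot)^{*}$ on negative connectives are \emph{already} presented in the form $\forcingOrtho{(\text{translation of the positive dual})}$, so in each case the identity $N^{*}=\forcingOrtho{(N^{\bot})^{*}}$ holds syntactically. Concretely, for $N=A\parr B$ one uses $(A\parr B)^{\bot}=A^{\bot}\otimes B^{\bot}$ and compares the definition of $(A\parr B)^{*}$ with $\forcingOrtho{(A^{\bot}\otimes B^{\bot})^{*}}$; for $N=\uparrow A$ one uses $(\uparrow A)^{\bot}=\downarrow A^{\bot}$ and compares $(\uparrow A)^{*}$ with $\forcingOrtho{(\downarrow A^{\bot})^{*}}$; for the atomic case $N=X^{\bot}$ one uses involutivity $(X^{\bot})^{\bot}=X$ and reads off $(X^{\bot})^{*}=\forcingOrtho{X^{*}}$. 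Evaluating these identities at $p$ and invoking reflexivity of $\congE$ then yields the stated pointwise convertibility.

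For the second claim the strategy is to massage both sides into the unfolded shape $\forall r'.\,Z(r')\multimap\CInt{p+r'}$ obtained by $\beta$-reducing $\forcingOrtho{Z}(p)$, and to identify the two predicates $Z$. On the left, $p\FVdash P=\forcingOrtho{\forcingOrtho{P^{*}}}(p)$ by definition, so there $Z=\forcingOrtho{P^{*}}$. On the right, since $P^{\bot}$ is a negative formula, the definition of $\FVdash$ gives $r\FVdash P^{\bot}\;\defequ\;(P^{\bot})^{*}(r)$, so by $\beta$ the inner abstraction $\lambda r.(r\FVdash P^{\bot})$ applied inside $\forcingOrtho{\cdot}$ produces $Z=(P^{\bot})^{*}$. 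It then suffices to show these two predicates are convertible pointwise, and this is exactly the first claim applied to the negative formula $N=P^{\bot}$: using $(P^{\bot})^{\bot}=P$ it gives $(P^{\bot})^{*}(r')\congE\forcingOrtho{P^{*}}(r')$ for every $r'$. Propagating this conversion underneath the quantifier $\forall r'$ and the linear implication $\multimap$ by congruence of $\congE$ closes the argument.

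I expect no serious obstacle here: the statement is essentially definitional, the translation clauses for negative connectives having been designed precisely to make $N^{*}=\forcingOrtho{(N^{\bot})^{*}}$. The only points requiring care are the bookkeeping with polarities and the involutive negation (so that $(P^{\bot})^{\bot}$ is genuinely $P$), and the verification that $\congE$—being a congruence closed under $\beta\eta$—is strong enough both to perform the $\beta$-step unfolding $\forcingOrtho{\cdot}$ and to push the pointwise conversion of the first claim through the $\forall$/$\multimap$ structure hidden inside the forcing orthogonal. I would also be careful to apply the first claim in its pointwise form, as that is how it is stated.
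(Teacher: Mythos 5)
Your proof is correct, and it matches the paper exactly in spirit: the paper states this property \emph{without any proof}, treating it as immediate from the definitions, and your case analysis on $N \in \{X^{\bot},\, A\parr B,\, \uparrow A\}$ (showing $N^{*} = \forcingOrtho{(N^{\bot})^{*}}$ is a syntactic identity) together with the reduction of part 2 to part 1 via $\beta$/$\eta$ and congruence is precisely the definitional verification the paper leaves implicit. One micro-detail you gloss over with the word ``congruence'': pushing the pointwise conversion under $\multimap$ requires not only the congruence rules of Figure \ref{fig:HOMALcongruence} (for $\parr$, $\forall$ and application) but also the separately stated property that $T \congE U$ implies $T^{\bot} \congE U^{\bot}$, since $\multimap$ is defined through negation as $N \multimap M = N^{\bot} \parr M$ --- this property is available in the paper, so the gap is cosmetic only.
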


\subsection{The countdown machine}

We now describe a new abstract machine.
Because of the mechanism it implements, we call it 
the \textit{countdown machine}. Although this machine
is based on the term syntax of \LFOC, it has 
completely different reduction rules and hence
\emph{is not} just another extension of \LFOC.
We now suppose that the set $\Kpos$ contains 
new instructions constant $\overline{n}$ for each
$n\in \N$. Hence term syntax is augmented with 
\emph{primitive integers}. We denote by 
$\EncNat$ the set $\setp{\overline{n}}{n\in\N}$. 

\begin{defn}
To describe the evaluation in this machine,
we need to consider two new kinds of commands:

\begin{enumerate}
\item \textit{Negative commands} are of the form
$\com{\forcNeg{t}}{u^+}$ where $t$ is a positive or negative term whereas $u^+$ is a positive term.
The set of negative commands is denoted by $\forcNeg{\ComSet}$.
\item   \textit{Forcing commands}  are negative commands of the form
$\com{\forcNeg{t}}{(u,K)}$, where:
\begin{itemize}
  \item $K \in \Terms$ is a term (either positive or negative). 
  \item $t, u\in \Terms$ are terms of opposite polarities.
\end{itemize}
Such a forcing command will be sometimes noted  $\com{\forcNeg{t}}{u}\{K\}$.
\end{enumerate}
\end{defn}

If $K \in \Terms$,
$t^+ \in \PosTerms$ and $u^- \in \NegTerms$ are
respectively a positive and a negative term, we can build from the command $\com{t^+}{u^-}$
a forcing command noted $\FTrans{\com{t^+}{u^-}}\{K\} = \com{\forcNeg{u^-}}{( t^+,K)}$.

\begin{rmk}
The notation $\forcNeg{t}$ is just a marker on $t$ to indicate that 
it is now considered as negative. This is due to the formula translation.
Indeed, all the translated formulas are negative, so even if a term $t$
is positive, its \textit{image} in the machine is negative and 
can be executed in front of a positive term. 
\end{rmk}

The reduction relation
in this machine is denoted by $\redTrans$, and is defined between 
forcing commands by the following rules:

\begin{tabular}{rcl}
&  &\\
 $\com{\forcNeg{t^-}}{{(\mu \alpha.c)}}\{\overline{n}\}$ & $\redTrans$ & $\FTrans{(c[t^-/\alpha])}\{\overline{n}\}$\\
 $\com{\forcNeg{(\mu x.c)}}{V^+}\{\overline{n}\}$ & $\redTrans$ & $\FTrans{(c[V^+/x])}\{\overline{n}\}$\\
 $\com{\forcNeg{(\mu (\kappa,\kappa').c)}}{(V_1, V_2)}\{\overline{n+1}\}$ & $\redTrans$ & $\FTrans{(c[V_1/\kappa, V_2/\kappa'])}\{\overline{n}\}$\\
 $\com{\forcNeg{(\mu \{\kappa\}.c)}}{\{V\}}\{\overline{n+1}\}$ & $\redTrans$ & $\FTrans{(c[V/\kappa])}\{\overline{n}\}$\\ 
 &  & \\
 $\com{\forcNeg{(\mu (\kappa,\kappa').c)}}{(V_1, V_2)}\{\overline{0}\}$ &  $\FTrans{\Uparrow}$& \\
 $\com{\forcNeg{(\mu \{\kappa\}.c)}}{\{V\}}\{\overline{0}\}$ & $\FTrans{\Uparrow}$ & \\
  & &\\
\end{tabular}

\begin{rmk}
These rules indeed implement a kind of countdown: each step makes the counter decrease, and 
if the counter equals $0$ then any step makes the machine diverge. 
\end{rmk}

In the same spirit of the identification of $\com{t}{u}$ and $\com{u}{t}$, 
we quotient the set of forcing commands by the following $\alpha$-equivalence:
$$\com{\forcNeg{u}}{t}\{K\} \equiv \com{\forcNeg{t}}{u}\{K\}$$
It must be remarked that if $c$ is a command and $K \notin \EncNat$, then 
$\FTrans{c}\{K\}$ never reduces for $\redTrans$. \newline

\begin{rmk}
In contrast with \cite{Miquel2011Forcing}
we don't define any program transformation to justify the 
reduction rules of the machine. The justification of the introduction
of the machine will be given \textit{a posteriori}, by a specific
linear forcing structure. 
\end{rmk}

\subsection{A countdown machine-based realizability model} \label{subsec:count_real}

We now describe the realizability interpretation that is induced by 
the countdown machine. It is a 
simple realizability, in the sense that it relates a term $t$ and a
\HOMAL constructor $T$. It will be used
to state the \textbf{connection lemma}. This realizability
relation is \emph{parametrized} by a set
 $\BotTrans$ of
forcing negative commands closed under anti $\redTrans$-evaluation:
$$\forall c,c'\in \forcNeg{\ComSet},\text{ if }c \,\redTrans \,c'\text{ and }c' \in \BotTrans\text{ then }c\in\BotTrans$$

\begin{rmk}
Because of the $\alpha$-equivalence on forcing commands,
the following equivalence holds:
$\com{\forcNeg{t}}{(u,K)} \in \BotTrans \Leftrightarrow \com{\forcNeg{u}}{(t,K)}\in\BotTrans$.
\end{rmk}

We now suppose that such a set $\BotTrans$ is fixed.
Based on this set, we want to define a simple realizability
interpretation. 
Since $\BotTrans$ is not a set of commands, it is impossible to reuse immediately 
the definitions of Section \ref{sec:QuantKrivine}. But we can define a new
a set $\BotTrans_0$ of commands (in the usual sense) out of it: 
$$\BotTrans_0\,\defequ\, \setp{\com{t^+}{u^-}}{\com{\forcNeg{u^-}}{t^+}\in\BotTrans}$$

Now, suppose we have fixed a propositional domain 
$\FTrans{\Dpos}$ and a total valuation $\rhoTrans$. 
Although $\BotTrans_0$ is not closed under anti-evaluation for $\redh$, and hence
is not saturated, we can still consider the interpretation it induces. 
We then obtain
an interpretation of \HOMAL constructor. For each constructor $T$, this interpretation
is denoted $\InteTrans{T}{\rhoTrans}$, and the associated realizability relation is denoted
$\VdashTrans$.
In particular we have: 
$$t \VdashTrans N \Longleftrightarrow  \text{ for every } u\VdashTrans N^\bot \text{ we have } \com{\forcNeg{t}}{u}\in\BotTrans$$

\begin{rmks}
$ $
\begin{enumerate}
\item If the interpretation of a \HOMAL constructor can still be defined, neither
the adequacy theorem with respect to \HOMAL nor the properties of 
Subsection \ref{subsec:sat_prop} are valid. 
\item
The adequacy result,
as stated in Subsection \ref{subsec:nonquantreal}, does not hold. However,
we can and will prove a different adequacy result stated using forcing.
\item
Finally, it has to be noted that if $P$ is a positive formula and $t\in \Terms$
is a term, then $t\in \InteTrans{P}{\rhoTrans}$ implies that $t$ is positive. However, 
if $N$ is a negative formula, $t \in \InteTrans{N}{\rhoTrans}$ does not
imply that $t$ is negative, as we will see in Subsection \ref{subsec:connection}.
\end{enumerate}
\end{rmks} 

While the identification $\com{t}{u} = \com{u}{t}$ is reminiscent of 
the involutivity of the linear negation, the new $\alpha$-equivalence corresponds
to an identification between a term of the form $\forcingOrtho{X} : \kappa \rightarrow \sortpropNeg$
and its forcing biorthogonal $\forcingOrtho{\forcingOrtho{\forcingOrtho{X}}}$.  Indeed,
\begin{prop}\label{prop:forcing_alpha}
Suppose $T : \kappa\rightarrow \sortpropPos$, then if $p : \kappa$,
the following holds:
\begin{enumerate}
  \item $t \VdashTrans T(p) \text{ implies } t\VdashTrans \forcingOrtho{\forcingOrtho{T}}(p)$
  \item $t \VdashTrans \forcingOrtho{\forcingOrtho{\forcingOrtho{T}}}(p) \text{ implies } t\VdashTrans \forcingOrtho{T}(p)$
 \end{enumerate}
\end{prop}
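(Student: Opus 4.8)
The plan is to reduce both implications to the abstract orthogonality facts $X\subseteq X^{\bot\bot}$ and $X^{\bot\bot\bot}\subseteq X^{\bot}$, transported through the forcing translation. The decisive observation is that realizing a forcing orthogonal $\forcingOrtho{Z}(p)$ is a \emph{symmetric} orthogonality condition on terms, the symmetry being exactly the $\alpha$-equivalence $\com{\forcNeg{u}}{t}\{K\}\equiv\com{\forcNeg{t}}{u}\{K\}$ on forcing commands.

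First I would unfold $t\VdashTrans\forcingOrtho{Z}(p)$. Since $\forcingOrtho{Z}(p)=\forall r'.\,Z(r')\multimap\CInt{p+r'}$ and the interpretation of a universal is an intersection (Remarks~\ref{rmk:interp1}), this amounts to: for every condition $q$, $t\VdashTrans Z(q)^{\bot}\parr\CInt{p+q}$. The formula $Z(q)^{\bot}\parr\CInt{p+q}$ is negative, so by the characterisation of $\VdashTrans$ on negative formulas I must test $t$ against realizers of its dual $Z(q)\otimes(\CInt{p+q})^{\bot}$. As the interpretation of a tensor is literally the set of value-pairs, such a realizer is a pair $(u,v)$ with $u\VdashTrans Z(q)$ and $v\VdashTrans(\CInt{p+q})^{\bot}$, and the resulting interaction $\com{\forcNeg{t}}{(u,v)}$ is precisely the forcing command $\com{\forcNeg{t}}{u}\{v\}$. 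Writing $t\perp_{r}u$ for \emph{``$\com{\forcNeg{t}}{u}\{v\}\in\BotTrans$ for every $v\VdashTrans(\CInt{r})^{\bot}$''}, I obtain the clean characterisation
$$t\VdashTrans\forcingOrtho{Z}(p)\iff\text{for every }q\text{ and every }u\VdashTrans Z(q),\ t\perp_{p+q}u.$$
The $\alpha$-equivalence (and the fact that $\BotTrans$ respects it) gives $t\perp_{r}u\Leftrightarrow u\perp_{r}t$, while commutativity of $+$ in the linear forcing structure, combined with the $\congE$-invariance of the interpretation (Property~\ref{prop:interp1}), gives $\perp_{p+q}=\perp_{q+p}$.

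With this in hand the two implications are purely formal. For (1), assume $t\VdashTrans T(p)$ and take any $r'$ and any $w\VdashTrans\forcingOrtho{T}(r')$; I must show $t\perp_{p+r'}w$, since ranging over $r'$ and $w$ that is exactly $t\VdashTrans\forcingOrtho{\forcingOrtho{T}}(p)$. Instantiating the characterisation of $w\VdashTrans\forcingOrtho{T}(r')$ at condition $p$ with the realizer $t\VdashTrans T(p)$ yields $w\perp_{r'+p}t$, whence $t\perp_{p+r'}w$ by symmetry and commutativity. For (2), assume $t\VdashTrans\forcingOrtho{\forcingOrtho{\forcingOrtho{T}}}(p)$ and take $s$ and $u\VdashTrans T(s)$; part (1) applied to $T$ gives $u\VdashTrans\forcingOrtho{\forcingOrtho{T}}(s)$, so feeding $u$ into the hypothesis at condition $s$ yields $t\perp_{p+s}u$, i.e. $t\VdashTrans\forcingOrtho{T}(p)$.

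The main obstacle is entirely in the first step: making the layered unfolding of $\forall$, $\multimap$ and $\parr$ rigorous, and in particular confirming that a realizer of the dual tensor is genuinely a value-pair whose pairing with $t$ is a forcing command of the countdown machine, so that the $\alpha$-equivalence is available to deliver symmetry. Because $\BotTrans_0$ is not saturated I must work only with the raw interpretations $\InteTrans{\cdot}{}$ and the definitional orthogonality characterisation of negative formulas, and cannot appeal to saturation-based lemmas such as Lemma~\ref{lm:dbotopen}; fortunately the tensor case requires no biorthogonal closure, so no saturation is needed. Once the relation $\perp_{r}$ is isolated, its symmetry and commutativity of $+$ are the only ingredients used.
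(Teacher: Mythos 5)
Your proof is correct and takes essentially the same route as the paper's: both unfold $t \VdashTrans \forcingOrtho{Z}(p)$ into an orthogonality condition against pairs (realizer of $Z(q)$, counter term), prove item (1) by swapping the two sides of the resulting forcing command via the $\alpha$-equivalence (together with commutativity of $+$ modulo $\congE$), and obtain item (2) by applying item (1) to the test realizer before feeding it into the hypothesis. Your explicit $\perp_{r}$ bookkeeping, and the remarks that only $\congE$-invariance of the interpretation is needed and that no saturation of $\BotTrans_0$ is invoked, merely make precise steps the paper leaves implicit.
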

\begin{proof}
$ $
\begin{enumerate}
  \item
Suppose $t\VdashTrans T(p) $. Then take $K \in \InteTrans{\CInt{p+r}}{}$ for some $r : \kappa$
and $u\in \InteTrans{\forcingOrtho{T}(r)}{\rhoTrans}$. Then because
$\forcingOrtho{T}(r) = \forall r'.T(r')^\bot\parr\FInt{r+r'}$,
we have $\com{u^\ominus}{(t,K)}\in\BotTrans$.
But, by $\alpha$-equivalence, we know
that $\com{t^\ominus}{(u,K)} \in \BotTrans$.
Hence, $t \VdashTrans \forcingOrtho{\forcingOrtho{T}}(p)$.
  \item If $t \VdashTrans \forcingOrtho{\forcingOrtho{\forcingOrtho{T}}}(p)$, $K \in 
  \InteTrans{\CInt{p+r}}{}$, 
  $u \in\InteTrans{T(r)}{\rhoTrans}$, then by the previous point, $u\in \InteTrans{\forcingOrtho{\forcingOrtho{T}}(r)}{\rho}$, 
  so $\com{t^\ominus}{(u,K)}\in\BotTrans$ which concludes. 
 \end{enumerate}
\end{proof}

Hence, as a corollary we immediately obtain that the two following rules are adequate
with respect to the $\InteTrans{.}{}$ interpretation:
$$
\begin{array}{cc}
\dfrac{ \vdash t : T(p) \delb \Gamma}{\vdash t : \forcingOrtho{\forcingOrtho{T}}(p)\delb\Gamma} & \dfrac{\vdash t : \forcingOrtho{\forcingOrtho{\forcingOrtho{T}}}(p) \delb \Gamma}{\vdash t : \forcingOrtho{T}(p)\delb\Gamma}
\end{array}
$$

\subsection{A quantitative linear forcing structure} \label{subsec:quant_forc_struct}

We now describe a particular linear forcing structure. The relation 
obtained by composition of the forcing translation induced by this structure
and the realizability based on the countdown machine of Subsection \ref{subsec:count_real}
will be shown in next subsection to coincide with a quantitative realizability relation.
The structure considered is $(\iota, \CInt{.}, +, \zero)$ where:
\begin{itemize}
  \item The kind of conditions is $\sortint$, the kind of natural numbers.
  \item $\CInt{.}$ is a new predicate of kind $\iota \rightarrow \sortpropPos$. 
  \item $+$ is the usual addition on natural numbers, defined using $rec_\iota$:
  $$.\sp +\sp .\quad =\quad \lambda p^{\iota}\lambda q^{\iota}.rec_\iota \sp p\sp \foSucc \sp q$$
  \item $\zero$  is the corresponding individual. 
\end{itemize}

\begin{prop}
$(\iota, \CInt{.}, +, \zero)$ is a linear forcing structure.
\end{prop}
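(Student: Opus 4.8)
The plan is to verify directly the four requirements in the definition of a linear forcing structure for the tuple $(\sortint, \CInt{.}, +, \foZero)$. The well-kindedness conditions are immediate: $\sortint$ is a kind, $\foZero : \sortint$ holds by the kinding axiom for $\foZero$, and the predicate $\CInt{.}$ is posited of kind $\sortint \rightarrow \sortpropPos$, which is admissible since the definition of linear forcing structure allows $\circ \in \{+,-\}$ (here $\circ = +$). For the operation, unfolding $+\; =\; \lambda p^{\sortint}\lambda q^{\sortint}.rec_\iota\, p\, \foSucc\, q$ and applying the kinding rule for $rec_\iota$ shows $+ : \sortint \rightarrow \sortint \rightarrow \sortint$. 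The only substantial content is therefore the three monoid conversions.

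First I would record the two defining recursion equations of $+$, which follow from two $\beta$-conversions together with the conversion rules for $rec_\iota$ of Figure \ref{fig:HOMALcongruence}:
\begin{eqnarray*}
  p + \foZero & \congE & p\\
  p + (\foSucc\, q) & \congE & \foSucc\,(p + q)
\end{eqnarray*}
The first of these already yields right-neutrality, and the second is the step equation that drives every subsequent argument.

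The main point to address is that the convertibility relation $\congE$ contains no induction rule, so none of the three laws can be derived for a symbolic variable standing for a condition; they must instead be checked on the closed individuals that instantiate $p, q, r$, exactly as in the reading of the integer forcing structure example where $p,q,r$ range over numerals $\foN{a},\foN{b},\foN{c}$. I would establish the laws by meta-level induction, using only the two equations above. Concretely: left-neutrality $\foZero + p \congE p$ follows by induction on $p$; the auxiliary left-step law $(\foSucc\, p) + q \congE \foSucc\,(p+q)$ follows by induction on $q$; commutativity $p + q \congE q + p$ then follows by induction on $q$, invoking left-neutrality in the base case and the left-step law in the step case; and associativity $p + (q+r) \congE (p+q)+r$ follows by induction on $r$, the base case being right-neutrality and the step case a single application of the step equation to both sides. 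Each inductive step reduces to finitely many applications of the recursor and congruence rules, so every instance is a genuine $\congE$-derivation.

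The expected obstacle is purely the absence of induction in $\congE$: once one observes that the relevant conditions are closed individuals, each convertible to a numeral, and that each required identity is witnessed by a finite, meta-induced chain of recursor unfoldings, the verification is routine and identical in content to the one asserted for the integer forcing structure. The sole difference with that example is the polarity of $\CInt{.}$, which plays no role whatsoever in checking the monoid laws, so no further work is needed there.
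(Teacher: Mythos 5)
Your proof is correct, and at its core it is the same verification as the paper's: the monoid laws are checked as $\congE$-conversions using $\beta$-conversion, the $rec_\iota$ rules of Figure \ref{fig:HOMALcongruence}, congruence and transitivity. The genuine difference lies in how the quantification ``for every $p,q,r : \kappa$'' is discharged. The paper writes out only the neutrality law, by unfolding $\foZero + q$ to $rec_\iota\, \foZero\, \foSucc\, q$ and then asserting $rec_\iota\, \foZero\, \foSucc\, q \congE q$ for an apparently generic $q$; as you correctly observe, that step is not derivable from Figure \ref{fig:HOMALcongruence} when $q$ is a variable, since $\congE$ unfolds the recursor only at $\foZero$ and $\foSucc$ and has no induction rule. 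Your proof repairs this by reading the conditions as closed individuals convertible to numerals and running the induction at the meta-level (right-neutrality and the step equation as the base facts, then left-neutrality, a left-step lemma, commutativity, and associativity in the standard order). This costs some length but buys actual rigour: every instance of each law is witnessed by a finite $\congE$-derivation, which is exactly what the definition requires and all that the later development uses, since conditions there are always instantiated by numerals $\foN{n}$. The one fact you invoke without proof is that every closed individual of kind $\iota$ is convertible to a numeral --- a standard normalization property of this recursor calculus, tacitly assumed by the paper as well, and not even needed under the paper's own phrasing ``for every integers $p,q,r : \iota$''.
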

\begin{proof}
  Associativity, commutativity and neutrality of $\foZero$ with respect to $+$
  are easily checked. As an example, we show the neutrality of $\foZero$. 
  We first notice that using the rules of Figure \ref{fig:HOMALcongruence},
  $ \foZero + q \congE rec_\iota \sp \foZero \sp \foSucc \sp  q$.
But we also have 
  $rec_\iota \sp \foZero \sp \foSucc \sp q \congE q$. 
Hence, by transitivity of $\congE$ we have $\foZero + q  \congE q$. 
\end{proof}

As $\CInt{.}$ is a new (positive) predicate of kind $\iota\rightarrow \sortpropPos$,
we need to say what its realizability interpretation is.  For each valuation $\rhoTrans$, 
we pose:
$$\InteTrans{\CInt{.}}{\rhoTrans} =  p \in \N \mapsto \setp{\overline{n}}{p\leq n\wedge n\in \N}$$

Since this function does not depend of the valuation $\rhoTrans$, we will not write the $\rhoTrans$
and note it $\InteTrans{\CInt{.}}{}$. 

\begin{rmks}
$ $

  \begin{enumerate}
  \item We will often switch between concrete elements of $\iota$ and elements of $\N$. 
  As already mentioned in Section 1, we will denote $\foN{n}$ the element of kind $\iota$
  corresponding to the integer $n\in\N$, which avoids confusion. 
  
  \item For the interpretation to make sense, we need $\FTrans{\Dpos}$ to contain
  all the sets $\InteTrans{\CInt{p}}{}$ with $p : \iota$. From now on, we will only consider such $\FTrans{\Dpos}$. 
    \item This linear forcing structure can be compared to the quantitative monoid of integers
    described in Example \ref{ex:int_monoid}. Indeed, we will se in Subsection \ref{subsec:forcing_cand}
    that they play the same role. 
    \item Similarly, many quantitative monoids can be turned into linear forcing structures.
    For example, suppose that we have extended the language of kinds with a product kind
    $\kappa \times\kappa'$ and added the pair $(T,U)$ and projections $\pi_i$  constructors     
     (it is not difficult to see how to extend the realizability model to such a framework). 
    Then, the soft monoid can be described as a linear forcing structure defined as follows:
      \begin{itemize} 
      	\item The kind is $\kappa = \iota \times (\iota \rightarrow \iota)$
		\item The maximum $max$ of two elements of $\iota$ is easily defined using $rec_\iota$,
		and the addition $+_s$ of the soft monoid can then be defined using $max$:
		$$+_s = \lambda x^{\kappa}.\lambda y^{\kappa}. (max(\pi_1 x^{\kappa},\pi_1 y^{\kappa}),
		\lambda z^{\iota}.max(\pi_2 x^{\kappa} (z), \pi_2 y^{\kappa} (z)))$$
		
		\item $\InteTrans{\CInt{p}}{} = \setp{\overline{n}}{\norm{p}\leq n\wedge n\in \N}$ where
		$$\norm{.} = \lambda x^{\kappa}.(\pi_2 x^{\kappa})(\pi_1 x^{\kappa})$$
      \end{itemize}\end{enumerate}
\end{rmks}

\subsection{A connection theorem}\label{subsec:connection}

In this subsection, the connection between quantitative realizability
and forcing is set out in the form of a \textbf{connection theorem}, which states
that the composition of the forcing relation induced by the quantitative linear
structure of Subsection
\ref{subsec:quant_forc_struct} and 
countdown machine based realizability of Subsection \ref{subsec:count_real} yields
a quantitative realizability model of \class{MAL}.
We then use this result together with Theorem \ref{thm:adequacy1} to obtain 
an adequacy result for linear forcing. \newline

We suppose having fixed a set $\BotTrans$ which is closed under anti $\redTrans$-evaluation
and the associated set of commands $\BotTrans_0$. We also suppose having a propositional
domain $\FTrans{\Dpos}$ and a total valuation $\rhoTrans$. 
We suppose having fixed a set $\BotTrans$ which is closed under anti $\redTrans$-evaluation
and the associated set of commands $\BotTrans_0$. We also suppose having a propositional
domain $\FTrans{\Dpos}$ and a total valuation $\rhoTrans$.

\begin{defn}
We define the following objects: 
\begin{itemize}
  \item $\BotQuant = \setp{(c,p)}{\forall K\in \InteTrans{\CInt{\foN{p}}}{}, \forcNeg{c}\{K\} \in \FTrans{\Bot}}$
  \item $\rhoQuant(X) =\setp{(t,p)}{t \in {\rhoTrans(X^{\iota\rightarrow \sortpropPos})(p)}}$ 
  \item $(\Dpos)^{\circ} = \setp{X}{\exists Y \in (\FTrans{\Dpos})^{\N}\text{ such that }(t,p)\in X\Leftrightarrow t \in Y(\foN{p})} $
\end{itemize}
Hence, $\rhoQuant$ is a valuation, which is not total but defined on all \MAL variables. Since we are only interested in \MAL variables, but need a total one to reuse the interpretation defined in Section \ref{sec:QuantKrivine}, we will in fact consider a valuation whose restriction on \MAL variables is $\rhoQuant$ and identify it with $\rhoQuant$.
\end{defn}

Because $\BotTrans$ is closed under anti-$\redTrans$ evaluation, $\BotQuant$
yields a saturated quantitative pole, as witnessed by the following property. 

\begin{prop}\label{prop:connec_sat}
$(\N,\BotQuant, 1)$ is a saturated quantitative pole.
\end{prop}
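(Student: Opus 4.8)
The plan is to unfold the definition of $\BotQuant$ and reduce each required closure property to the single hypothesis that $\BotTrans$ is closed under anti-$\redTrans$-evaluation, following the same pattern used earlier for $\Bot_0$ in Subsection \ref{subsec:nonquantreal}. First I would dispatch the trivial ingredients: $\N=(\N,+,0,1,\leq,x\mapsto x)$ is a quantitative monoid with unit by Example \ref{ex:int_monoid}, $\pBeta=1$ is one of its elements, and $\BotQuant$ is by construction a subset of $\ComSet\times\N$, so $(\N,\BotQuant)$ is a quantitative pole. It then remains to verify the three saturation clauses, and throughout I would use the reformulation $(c,p)\in\BotQuant$ iff $\FTrans{c}\{\overline{n}\}\in\BotTrans$ for every $n\geq p$, which is immediate from $\InteTrans{\CInt{\foN{p}}}{}=\setp{\overline{n}}{p\leq n}$.

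The engine of the proof is a correspondence between reductions of the source calculus and reductions of the countdown machine. I would check, by inspecting the four machine rules, that every $\redMu$-step $c\redMu c'$ lifts to a counter-preserving step $\FTrans{c}\{\overline{n}\}\redTrans\FTrans{c'}\{\overline{n}\}$ (these are the $(+)$ and $(-)$ rules), whereas every $\redBeta$-step $c\redBeta c'$ lifts to a counter-decrementing step $\FTrans{c}\{\overline{n+1}\}\redTrans\FTrans{c'}\{\overline{n}\}$ (the $(\parr)$ and $(\uparrow)$ rules). This matching is where the choice $\pBeta=1$ acquires its meaning: one $\redBeta$-step costs exactly one unit of the countdown.

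With the correspondence in hand the clauses follow. For $\leq$-saturation, if $p\leq p'$ then $\InteTrans{\CInt{\foN{p'}}}{}\subseteq\InteTrans{\CInt{\foN{p}}}{}$, so the universally quantified condition defining $(c,p)\in\BotQuant$ is stronger than the one defining $(c,p')\in\BotQuant$, and membership is preserved. For $\redMu$-saturation, the counter-preserving lifted step together with anti-$\redTrans$-closure of $\BotTrans$ transfers membership from $\FTrans{c'}\{\overline{n}\}$ to $\FTrans{c}\{\overline{n}\}$ for every $n\geq p$, giving $(c',p)\in\BotQuant\Rightarrow(c,p)\in\BotQuant$; as for $\Bot_0$, this anti-evaluation direction is exactly what the saturation clause is invoked for in the adequacy argument. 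For $\redBeta$-saturation, assuming $(c',p)\in\BotQuant$ and $c\redBeta c'$, I would fix $m\geq p+1$, write $m=n+1$ with $n\geq p$ so that $\FTrans{c'}\{\overline{n}\}\in\BotTrans$, and then use the decrementing lifted step $\FTrans{c}\{\overline{m}\}\redTrans\FTrans{c'}\{\overline{n}\}$ with anti-closure to get $\FTrans{c}\{\overline{m}\}\in\BotTrans$; since this holds for all $m\geq p+1$, we obtain $(c,p+1)\in\BotQuant$ as required.

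The main obstacle is not any individual clause but getting the two-level bookkeeping exactly right: one must verify that the lift of a $\redBeta$-step genuinely consumes one token of the counter, so that quantification over $n\geq p$ on the reduct matches quantification over $m\geq p+1$ on the redex. It is also worth checking that the divergence rules at counter $\overline{0}$ play no role here — they are relevant only to the connection theorem to come, not to saturation, which only ever adds resources and hence only needs the three reduction rules read from reduct to redex. Once the reduction correspondence is set up cleanly, each saturation property is a one-line consequence of the anti-$\redTrans$-closure of $\BotTrans$.
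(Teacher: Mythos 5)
Your proof is correct and takes essentially the same route as the paper's own, much terser, proof: lift each source-calculus step to a countdown-machine step and conclude by anti-$\redTrans$-closure of $\BotTrans$; you are in fact more careful, since the paper handles a generic $\redh$-step at cost $1$ (its displayed machine step $\FTrans{c}\{\overline{n'+1}\}\redTrans\FTrans{c'}\{\overline{n'}\}$ is literally valid only for $\redBeta$-steps) and omits both $\leq$-saturation and the counter-preserving treatment of $\redMu$-steps, all of which you verify explicitly. The caveat you yourself flag is shared with (indeed, worse in) the paper's proof: only the anti-evaluation half of the $\redMu$-equivalence is established, and the forward half is genuinely not derivable from anti-$\redTrans$-closure alone, but this is a defect of the stated definition rather than of your argument, since the anti-evaluation half is all that the proof of Theorem~\ref{thm:adequacy1} (the sole consumer of Property~\ref{prop:connec_sat}) actually uses.
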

\begin{proof}
Suppose $(c',p)\in\BotQuant$ and $c\redh c'$. 
Then we want to prove that for any $K\in \InteTrans{\CInt{\foN{p}+\foN{1}}}{}$, $c\{K\} \in \BotTrans$.
But $K = \overline{n}$ and $p+1 \leq n$ so $n= n'+1$ and $K = \overline{n'+1}$.
So $\FTrans{c}\{\overline{n'+1}\} \redTrans \FTrans{c'}\{\overline{n'}\}$ with
$\overline{n'} \in \InteTrans{\CInt{\foN{p}}}{}$, so $\FTrans{c'}\{\overline{n'}\}\in \BotTrans$ and
by anti-reduction property, we obtain the conclusion.
\end{proof}

\begin{prop}
$(\Dpos)^{\circ}$ is a positive propositional domain.
\end{prop}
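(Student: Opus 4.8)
The plan is to verify the three closure conditions of Definition \ref{def:prop_domain} for $(\Dpos)^{\circ}$, exploiting throughout that $\FTrans{\Dpos}$ is itself a positive propositional domain for the countdown realizability and that all the sets $\InteTrans{\CInt{\foN{m}}}{}$ belong to $\FTrans{\Dpos}$. The organizing idea is that every element of $(\Dpos)^{\circ}$ is a \emph{weight-stratified} set: it is determined by a family $Y \in (\FTrans{\Dpos})^{\N}$ via $(t,p)\in X \Leftrightarrow t\in Y(\foN{p})$. Since each stratum $Y(\foN{p})$ consists of positive values and $\M = \N$, the inclusion $(\Dpos)^{\circ}\subseteq \mathcal{P}(\PosTerms\cap\mathbb{V}\times\M)$ is immediate.

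First, the closure under unions and intersections. Given a family $(X_i)_{i\in I}$ in $(\Dpos)^{\circ}$ with stratifications $Y_i$, the maps $p\mapsto \bigcup_i Y_i(\foN{p})$ and $p\mapsto \bigcap_i Y_i(\foN{p})$ take values in $\FTrans{\Dpos}$ because $\FTrans{\Dpos}$ is closed under arbitrary unions and intersections; and they stratify $\bigcup_i X_i$ and $\bigcap_i X_i$ respectively, since union and intersection of weighted sets commute with slicing by weight. Hence both belong to $(\Dpos)^{\circ}$.

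The heart of the argument is the treatment of $\otimes$ and $\downarrow$, which quantify over $\mathcal{D}^{\circ}=(\Dpos)^{\circ}\cup(\Dneg)^{\circ}$ and therefore force us to understand how a negative element $W^{\bot}\in(\Dneg)^{\circ}$ is stratified. The key lemma I would prove is that $W^{\bot}$ is again weight-stratified, now with strata in $\FTrans{\Dneg}$. Unfolding the definitions of $\BotQuant$ and of the countdown machine, and writing $Y_W$ for the stratification of $W$, one gets that $(u,q)\in W^{\bot}$ iff for all $p$, all $t\in Y_W(\foN{p})$ and all $n\geq p+q$ we have $\com{\forcNeg{u}}{(t,\overline{n})}\in\FTrans{\Bot}$. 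Setting
\[
A_q \;=\; \bigcup_{p\in\N} \big(Y_W(\foN{p}) \otimes \InteTrans{\CInt{\foN{p+q}}}{}\big),
\]
this condition reads exactly $u\in A_q^{\bot}$ (countdown orthogonal). Since each $\InteTrans{\CInt{\foN{p+q}}}{}$ lies in $\FTrans{\Dpos}$ and $\FTrans{\Dpos}$ is closed under $\otimes$ and unions, $A_q\in\FTrans{\Dpos}$, whence $A_q^{\bot}\in\FTrans{\Dneg}$; so $q\mapsto A_q^{\bot}$ is the desired stratification of $W^{\bot}$ with strata in $\FTrans{\mathcal{D}}$.

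With this lemma, every $X\in\mathcal{D}^{\circ}$ is stratified by some $Y_X\in(\FTrans{\mathcal{D}})^{\N}$. For $\otimes$, given $X,Y\in\mathcal{D}^{\circ}$ one checks that $((t,u),r)\in X\otimes Y$ iff $(t,u)\in\bigcup_{p+q=r} \big(Y_X(\foN{p})\otimes Y_Y(\foN{q})\big)$, a finite union of elements of $\FTrans{\Dpos}$ (by the tensor-closure of $\FTrans{\Dpos}$ applied to strata in $\FTrans{\mathcal{D}}$), hence in $\FTrans{\Dpos}$; so $X\otimes Y\in(\Dpos)^{\circ}$. For $\downarrow$, the set $\downarrow X$ is stratified by $p\mapsto \downarrow Y_X(\foN{p})\in\FTrans{\Dpos}$, again by the corresponding closure of $\FTrans{\Dpos}$. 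The main obstacle is precisely the key lemma: the passage from the quantitative orthogonal $W^{\bot}$ (computed in $\BotQuant$, where weights add) to a countdown orthogonal $A_q^{\bot}$ (computed in $\FTrans{\Bot}$), which is what lets the counter $\overline{n}$ supplied by $\InteTrans{\CInt{\foN{p+q}}}{}$ absorb the additive weight bookkeeping; everything else is routine stratum-wise transport of the closure properties of $\FTrans{\Dpos}$.
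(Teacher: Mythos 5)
Your proof is correct, and in outline it is what the paper intends; but the paper's own proof consists of the single sentence ``It is clear since $\FTrans{\Dpos}$ is itself a positive propositional domain,'' so the substance of your argument is material the paper leaves entirely implicit. The genuinely non-routine point is the one you isolate as your key lemma. The closure conditions of Definition \ref{def:prop_domain}, instantiated at $(\Dpos)^{\circ}$, quantify not only over $(\Dpos)^{\circ}$ but also over its set of orthogonals $\setp{W^{\bot}}{W\in (\Dpos)^{\circ}}$, and such an orthogonal is computed with respect to the quantitative pole $\BotQuant$ --- nothing says \emph{a priori} that it is weight-stratified. Your computation, identifying the $q$-th stratum of $W^{\bot}$ as the countdown orthogonal of
$$A_q \;=\; \bigcup_{p\in\N}\big(Y_W(\foN{p})\otimes \InteTrans{\CInt{\foN{p+q}}}{}\big),$$
where $Y_W$ is a stratification of $W$, is exactly what licenses the stratum-wise transfer of the $\otimes$ and $\downarrow$ closures; and it correctly uses the paper's standing assumption that every $\InteTrans{\CInt{\foN{m}}}{}$ belongs to $\FTrans{\Dpos}$, together with the closure of $\FTrans{\Dpos}$ under $\otimes$ and arbitrary unions. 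It is worth observing that this lemma is not ad hoc: it is the same unfolding that the paper performs in the $X^{\bot}$ and $\parr$ cases of the connection lemma (Lemma \ref{lm:connection}), where quantitative orthogonality relative to $\BotQuant$ is likewise traded for countdown orthogonality against a tensor with the predicate $\CInt{\cdot}$. In effect you have extracted, ahead of the connection lemma, precisely the fragment of that computation on which this proposition silently depends; the remaining steps (unions, intersections, tensors and shifts of stratified sets) are routine, as you say.
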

\begin{proof}
It is clear since $\FTrans{\Dpos}$ is itself a positive propositional domain. 
\end{proof}

Since $\BotQuant$ is a quantitative pole, $\Dpos^{\circ}$ is a propositional domain and $\rhoQuant$
is a total valuation, we obtain a quantitative realizability interpretation of \class{MAL}$\omega$, as defined in 
Section \ref{sec:QuantKrivine}. We denote this new interpretation $\InteQuant{.}{\rhoQuant}$, and the associated realizability interpretation $\VdashQuant$ . 
\begin{rmk}
We now have three different interpretations of \MAL formulas $A$:
\begin{itemize}
  \item The quantitative interpretation $\InteQuant{A}{\rhoQuant}$, 
  which is a set of bounded terms. 
  \item The non-quantitative interpretation $\InteTrans{A}{\rhoTrans}$, which
  is based on the countdown machine and is a set of terms (in fact, bounded terms where
  the bound is an element of the trivial monoid $\{0\}$). 
  \item The forcing interpretation $A^{*}(p)$, which is a \HOMAL formula. 
\end{itemize}
\end{rmk}
All these interpretations are related through the following \textbf{connection lemma}:

\begin{lem}\label{lm:connection}
For every \MAL formula $C$, every positive propositional domain $\FTrans{\Dpos}$ and every total valuation $\rhoTrans$, we have
$$t \in \InteTrans{C^{*}(\foN{p})}{\rhoTrans} \Longleftrightarrow (t,p) \in \InteQuant{C}{\rhoQuant}$$
\end{lem}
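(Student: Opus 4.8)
The plan is to argue by induction on the structure of the \MAL formula $C$, matching each quantitative interpretation operation with its forcing-translated counterpart read through the countdown machine. Throughout, the weight $p$ on the quantitative side plays the role of the forcing condition $\foN p$, and the defining clause of $\BotQuant$ — together with the interpretation $\InteTrans{\CInt{.}}{}=(m\mapsto\setp{\overline n}{m\leq n})$ — is what welds the two orthogonalities together. Concretely, unfolding $\BotQuant$ gives $(\com{u}{t},p+q)\in\BotQuant$ iff $\com{\forcNeg t}{(u,\overline n)}\in\BotTrans$ for every $n\geq p+q$, which is exactly the forcing command produced by pairing $t$ against $u$ with counter $\overline n$.

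First I would treat the atoms. For a positive atom $X$ we have $\preforc X(\foN p)=X^{\kappa\rightarrow\sortpropPos}(\foN p)$, so $t\in\InteTrans{\preforc X(\foN p)}{\rhoTrans}$ iff $t\in\rhoTrans(X^{\iota\rightarrow\sortpropPos})(p)$, which is by definition $(t,p)\in\rhoQuant(X)=\InteQuant X{\rhoQuant}$. For a negative atom $X^\bot$ we have $\preforc{(X^\bot)}(\foN p)=\forcingOrtho{X^{\kappa\rightarrow\sortpropPos}}(\foN p)$ and $\InteQuant{X^\bot}{\rhoQuant}=\rhoQuant(X)^\bot$. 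Unfolding the forcing orthogonal $\forcingOrtho Z(\foN p)=\forall r'.\,Z(r')\multimap\CInt{\foN p+r'}$ through the countdown machine (with $\CInt{.}$ playing the role of the pole), $t$ realizes it iff for every $q$, every $u\in\InteTrans{Z(\foN q)}{}$ and every $\overline n\in\InteTrans{\CInt{\foN{p+q}}}{}$ one has $\com{\forcNeg t}{(u,\overline n)}\in\BotTrans$; by the remark above this is precisely $(t,p)\in\rhoQuant(X)^\bot$.

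The positive connective cases then follow by pushing the induction hypothesis through the direct set operations. For $A\otimes B$, the interpretation of $\preforc{(A\otimes B)}(\foN p)=\exists p_1\exists p_2\langle\foN p=p_1+p_2\rangle(\preforc A(p_1)\otimes\preforc B(p_2))$ unfolds to the union over all splittings $v_1+v_2=p$ of the sets of pairs $(t_1,t_2)$ with $t_1\in\InteTrans{\preforc A(\foN{v_1})}{}$ and $t_2\in\InteTrans{\preforc B(\foN{v_2})}{}$, the equational implication $\langle\cdot\rangle$ restricting the existential union to exactly these splittings; by the induction hypothesis and the additive weight-splitting of the quantitative tensor $\InteQuant A{}\otimes\InteQuant B{}$, this is $\setp t{(t,p)\in\InteQuant{A\otimes B}{\rhoQuant}}$. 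The case $\downarrow A$ is identical but simpler, with the condition $\foN p$ left unchanged. The negative cases $A\parr B$ and $\uparrow A$ reduce to these via $\preforc{(A\parr B)}=\forcingOrtho{\preforc{(A^\bot\otimes B^\bot)}}$ and $\preforc{(\uparrow A)}=\forcingOrtho{\lambda r.\downarrow\preforc{(A^\bot)}(r)}$: using Proposition \ref{prop:posforcing} I would rewrite each negative interpretation as a forcing orthogonal of an already-treated positive interpretation, then close the argument exactly as in the negative-atom case, invoking Proposition \ref{prop:forcing_alpha} to insert or erase the extra pair of forcing orthogonals and Proposition \ref{prop:connec_sat} to know that $\BotQuant$ is a genuine (saturated) quantitative pole.

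The main obstacle is this negative step: I must verify that $\forcingOrtho Z$, once interpreted on the countdown machine at condition $\foN p$, reproduces the quantitative orthogonal $(\cdot)^\bot$ with respect to $\BotQuant$. The crux is the counter bookkeeping — that the decrementing reduction rules of the machine (consuming one counter unit at each $\parr$- and $\uparrow$-step and diverging at counter $\overline 0$) faithfully track the constraint $p+q\leq n$ carried by $\CInt{.}$, so that a single $\redBeta$-step, costing $\pBeta=\unit$ on the quantitative side, matches one $\redTrans$-decrement. The $\alpha$-equivalence $\com{\forcNeg u}{(t,K)}\equiv\com{\forcNeg t}{(u,K)}$ on forcing commands is what realizes the involutivity of the forcing orthogonal needed to close the induction symmetrically on both polarities.
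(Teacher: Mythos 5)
Your proposal is correct and takes essentially the same approach as the paper's proof: induction on the \MAL formula $C$, with atoms and positive connectives handled by direct definition-chasing (the equational implication cutting the existential down to exact splittings of the condition), and negative connectives by matching the single forcing orthogonal against the quantitative orthogonal, the two being identified through the definitions of $\BotQuant$ and $\InteTrans{\CInt{.}}{}$, exactly as in the paper. One cosmetic remark: the lemma is a purely static unfolding argument, so your appeals to Property \ref{prop:connec_sat}, to Property \ref{prop:forcing_alpha}, and to the counter-decrementing dynamics (your ``main obstacle'') are superfluous here --- the reduction bookkeeping is what makes $\BotQuant$ saturated, which is needed for adequacy (Theorem \ref{thm:forcing_adequacy}), not for this equivalence.
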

\begin{proof}
The proof is carried out by induction on the formula $C$. For each
case we prove directly the equivalence.
\begin{itemize}
  \item  If $C = X$, 
  \begin{eqnarray*}
   t \in \InteTrans{X^{*}(\foN{p})}{\rhoTrans}& \Longleftrightarrow & t \in \InteTrans{X^{\iota\rightarrow \sortpropPos}}{\rhoTrans}(p)\\
  	& \Longleftrightarrow & t \in \rhoTrans (X^{\iota\rightarrow\sortpropPos})(p)\\
  	& \Longleftrightarrow & (t,p) \in \rhoQuant (X)\\
  	& \Longleftrightarrow & (t,p)\in\InteQuant{X}{\rhoQuant}
  \end{eqnarray*}

  \item  If $C = X^\bot$,
  \begin{eqnarray*}
  	& & t \in \InteTrans{(X^{\bot})^{*}(\foN{p})}{\rhoTrans} \\
  	& \Longleftrightarrow & t \in \InteTrans{\forall r. X^{*}(r) \multimap \CInt{\foN{p}+r}^{\bot}}{\rhoTrans} \\
  	& \Longleftrightarrow & \forall r\in \N,\forall u \in \rhoTrans(X^{\iota\rightarrow\sortpropPos})(r), \forall K\in \InteTrans{\CInt{\foN{p}+r}}{}, \com{\forcNeg{t}}{(u,K)}\in\BotTrans\\
  	& \Longleftrightarrow & \forall r\in \N,\forall (u,r) \in \rhoQuant(X), \forall K\in \InteTrans{\CInt{\foN{p}+r}}{}, \com{\forcNeg{t}}{(u,K)}\in\BotTrans\\
	  	& \Longleftrightarrow & (t,p)\in \rhoQuant(X)^\bot\\
  	& \Longleftrightarrow & (t,p) \in \InteQuant{X^\bot}{\rhoQuant}
  \end{eqnarray*}
    
      \end{itemize} 
  
  In the remaining cases, we do not write the
valuations $\rhoQuant$ and $\rhoTrans$ in the interpretations, since they play no
role here. 
  
  \begin{itemize}
  \item If $C = \downarrow A$,
    
  \begin{tabular}{rcl}
   $t \in \InteTrans{\downarrow A^{*}({\foN{p}})}{\rhoTrans}$&  $\Longleftrightarrow$ & $t = \{ t' \}\text{ and } t' \in \InteTrans{A^{*}(\foN{p})}{\rhoTrans}$\\
  	& $\Longleftrightarrow$ & $t = \{ t' \}\text{ and } (t',p)\in \InteQuant{A}{\rhoQuant}$\\
  	& $\Longleftrightarrow$ & $(t,p) \in \InteQuant{\downarrow A}{\rhoQuant}$
  \end{tabular}
 
  \item  If $C = \uparrow A$,
  
  \begin{tabular}{rcl}
  	& & $t \in \InteTrans{(\uparrow A)^{*} (\foN{p})}{}$\\
	& $\Longleftrightarrow$ & $t \in \InteTrans{\forall x^{\iota}.(\downarrow A^{\bot})^{*}(x) \multimap
	\CInt{\foN{p}+x}^{\bot}}{}$\\
  	& $\Longleftrightarrow$ & $\forall q\in \N, t \in \InteTrans{\downarrow (A^{\bot})^{*}(\foN{q}) \otimes \CInt{\foN{p+q}}}{}^{\bot}$\\
  	& $\Longleftrightarrow$ & $\forall q\in \N, \forall K\in \InteTrans{\CInt{\foN{p+q}}}{},\, \forall u \in \InteTrans{(A^\bot)^{*}}{}(q), 
  	 \com{\forcNeg{t}}{(\{u\},K)} \in \BotTrans$\\
  	& $\Longleftrightarrow$ & $\forall (u,q) \in \InteQuant{A^\bot}{}, \forall K\in \InteTrans{\CInt{\foN{p+q}}}{},
  	 \com{\forcNeg{t}}{(\{u\},K)} \in \BotTrans$\\
  	& $\Longleftrightarrow$ & $\forall (u,q) \in \InteQuant{A^\bot}{}, (t,p)\bot^{\circ} (\{u\},q)$\\
  	& $\Longleftrightarrow$ & $(t,p) \in \InteQuant{\uparrow A}{} = \InteQuant{\downarrow A^{\bot}}{}^{\bot}$
  \end{tabular}
  	
	  \item If $C = A\otimes B$,
  
  \begin{tabular}{rcl}
  	& & $t \in \InteTrans{(A \otimes B)^{*}(\foN{p})}{}$\\
	& $\Leftrightarrow$ & $\exists p_1,p_2\in \N, p= p_1+p_2\wedge t \in \InteTrans{A^{*}(\foN{p_1})}{} \otimes \InteTrans{B^{*}(\foN{p_2})}{}$\\
  	&$\Leftrightarrow$ & $\exists p_1,p_2\in \N, p = (p_1,p_2)\wedge t = (t_1, t_2)\wedge t_1 \in \InteTrans{A^{*} (\foN{p_1})}{}
	\wedge t_2\in \InteTrans{B^{*}(\foN{p_2})}{}$\\
	& $\Leftrightarrow$ & $\exists p_1, p_2 \in \N, p = (p_1, p_2) \wedge t = (t_1,t_2) \wedge (t_1, p_1) \in \InteQuant{A}{}
	\wedge (t_2,p_2) \in \InteQuant{B}{}$\\
	& $\Leftrightarrow$ & $(t,p) \in \InteQuant{A\otimes B}{}$
  \end{tabular}
    \item  If $C = A\parr B$,
  \begin{eqnarray*}
  	& & t \in \InteTrans{(A \parr B)^{*}(\foN{p})}{}\\
  	&\Leftrightarrow &\forall u_1 \in \InteTrans{(A^\bot)^{*}(\foN{p_1})}{},
								\forall u_2\in \InteTrans{(B^\bot)^{*}(\foN{p_2})}{}, 
								\forall K\in \InteTrans{\CInt{\foN{p+p_1+p_2}}}{}, \\
  	& & \com{\forcNeg{t}}{((u_1,u_2),K)} \in \BotTrans\\
  	& \Leftrightarrow & \forall K\in \InteTrans{\CInt{\foN{p+p_1+p_2}}}{}, \forall (u_1, p_1) \in \InteQuant{A^\bot}{},\forall (u_2,p_2) \in \InteQuant{B^\bot}{},\\
  	& & \com{\forcNeg{t}}{((u_1,u_2),K)} \in \BotTrans\\
  	& \Leftrightarrow & \forall (u_1,p_1)\in \InteQuant{A^\bot},\forall (u_2,p_2)\in \InteQuant{B^\bot}{}, (t,p)\bot ((u_1,u_2),p_1+p_2)\\
  	& \Leftrightarrow & (t,p)\in \InteQuant{A\parr B}{}
  \end{eqnarray*}
  	        \end{itemize}
  	
\end{proof} 

As a corollary of this lemma and of Property \ref{prop:forcing_alpha}, we 
obtain the following connection theorem. 

\begin{thm}[Connection theorem]
For every \MAL formula $C$ and for every $t\in \Terms$ and $p\in \N$, we have
$$t \VdashTrans (\foN{p} \FVdash C) \Longleftrightarrow (t,p) \VdashQuant C$$
\end{thm}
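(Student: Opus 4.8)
The plan is to use the fact that $\foN{p}\FVdash C$ is, by construction, always a negative \HOMAL formula, and to reduce the statement to the connection lemma (Lemma~\ref{lm:connection}) by a case analysis on the polarity of $C$. The definition of the forcing relation behaves differently in the two cases: for a negative formula there is no outer forcing orthogonal, $\foN{p}\FVdash N \defequ N^{*}(\foN{p})$, whereas for a positive one it carries a double forcing orthogonal, $\foN{p}\FVdash P \defequ \forcingOrtho{\forcingOrtho{P^{*}}}(\foN{p})$. This split is exactly where the two halves of the argument diverge.

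The negative case is immediate. If $C=N$, unfolding the definition gives that $t\VdashTrans(\foN{p}\FVdash N)$ holds iff $t\in\InteTrans{N^{*}(\foN{p})}{\rhoTrans}$, and Lemma~\ref{lm:connection} identifies this membership with $(t,p)\in\InteQuant{N}{\rhoQuant}$, that is with $(t,p)\VdashQuant N$. Here no closure issue arises, since $\InteQuant{N}{\rhoQuant}$ is an orthogonal and hence already $\dbot$-closed, so the $\Inte{\cdot}$ and $\interp{\cdot}$ readings of $\VdashQuant$ coincide.

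The positive case is the real work. For $C=P$ I would first apply Property~\ref{prop:posforcing} to rewrite $\foN{p}\FVdash P \congE \forcingOrtho{(P^{\bot})^{*}}(\foN{p})$, turning the outer double orthogonal into a single forcing orthogonal applied to the translation of the negative formula $P^{\bot}$. Unfolding this single forcing orthogonal, $t$ realizes $\forcingOrtho{(P^{\bot})^{*}}(\foN{p})$ iff for all $r\in\N$, all $u$ with $u\VdashTrans (P^{\bot})^{*}(\foN{r})$ and all $K\in\InteTrans{\CInt{\foN{p+r}}}{}$ one has $\com{\forcNeg{t}}{(u,K)}\in\BotTrans$. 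Applying Lemma~\ref{lm:connection} to the negative subformula $P^{\bot}$ replaces the quantification over realizers $u$ by a quantification over pairs $(u,r)\in\InteQuant{P^{\bot}}{\rhoQuant}$, and the definition of $\BotQuant$ together with the $\alpha$-equivalence $\com{\forcNeg{t}}{(u,K)}\equiv\com{\forcNeg{u}}{(t,K)}$ converts the condition ``$\com{\forcNeg{t}}{(u,K)}\in\BotTrans$ for all such $K$'' into ``$(\com{t}{u},p+r)\in\BotQuant$'', i.e.\ into $(t,p)\bot(u,r)$. Collecting these equivalences yields that $t\VdashTrans(\foN{p}\FVdash P)$ holds iff $(t,p)\in(\InteQuant{P^{\bot}}{\rhoQuant})^{\bot}=(\InteQuant{P}{\rhoQuant})^{\dbot}$, which is precisely $(t,p)\VdashQuant P$ for a positive formula, whose realizability truth value is the biorthogonal closure of $\InteQuant{P}{\rhoQuant}$.

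I expect the main obstacle to be exactly this matching of the forcing double orthogonal $\forcingOrtho{\forcingOrtho{\cdot}}$ on the countdown-machine side with the quantitative biorthogonal closure $(\cdot)^{\dbot}$ on the realizability side. The bridge is the definition of $\BotQuant$, which encodes the step-counting correspondence between $\redTrans$ and $\redh$, together with the $\alpha$-equivalence on forcing commands; meanwhile Property~\ref{prop:forcing_alpha} is the abstract statement that $\forcingOrtho{\forcingOrtho{\cdot}}$ acts as a closure operator, supplying the two adequate rules that make the positive case go through in both directions. Since for negative formulas the two notions already agree, once the positive case is settled the theorem holds for every \MAL formula $C$.
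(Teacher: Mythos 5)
Your proposal is correct and follows essentially the same route as the paper's own proof: a case split on the polarity of $C$, with the negative case discharged by Lemma \ref{lm:connection} directly, and the positive case obtained by rewriting $\foN{p}\FVdash P$ as the forcing orthogonal of $(P^{\bot})^{*} = \forcingOrtho{P^{*}}$, applying the connection lemma to the negative formula $P^{\bot}$, and matching the resulting $\BotTrans$-condition with the quantitative biorthogonal via the definition of $\BotQuant$ and the $\alpha$-equivalence on forcing commands. The only cosmetic difference is that the paper's negative case detours through the triple forcing orthogonal (invoking Properties \ref{prop:posforcing} and \ref{prop:forcing_alpha}) before landing on $N^{*}(\foN{p})$, whereas you read the definition $\foN{p}\FVdash N \defequ N^{*}(\foN{p})$ literally and apply the lemma at once, which is slightly more direct.
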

\begin{proof}
We use the previous theorem. Let's distinguish two cases, depending of the polarity of $C$. 
\begin{enumerate}
\item Suppose $C = P$ is positive. Then:
\begin{eqnarray*}
t\VdashTrans (\foN{p}\FVdash P) &  \Leftrightarrow & t \in \InteTrans{\forcingOrtho{\forcingOrtho{P^*}}(\foN{p})}{}\\
& \Leftrightarrow & t\in \InteTrans{\forall r.(\forcingOrtho{P^{*}}(r) \multimap \CInt{\foN{p}+r}^{\bot})}{}\\
& \Leftrightarrow & \forall r \in \N. \forall u \in \InteTrans{\forcingOrtho{P^{*}}(\foN{r})}{},
\forall K \in \InteTrans{\CInt{\foN{p+r}}}{}, \com{\forcNeg{t}}{(u,K)}\in \BotTrans\\
& \Leftrightarrow & \forall r \in \N. \forall u \in \InteTrans{\forcingOrtho{P^{*}}(\foN{r})}{},
\forall K \in \InteTrans{\CInt{\foN{p+r}}}{}, \com{\forcNeg{t}}{(u,K)}\in \BotTrans
\end{eqnarray*}
 Since $P$ is positive, we have $(P^{\bot})^{*} = \forcingOrtho{P^{*}}$.
 Hence:
\begin{eqnarray*}
& &  \forall r \in \N. \forall u \in \InteTrans{\forcingOrtho{P^{*}}(\foN{r})}{},
\forall K \in \InteTrans{\CInt{\foN{p+r}}}{}, \com{\forcNeg{t}}{(u,K)}\in \BotTrans\\
& \Leftrightarrow &  \forall (u,r) \in \InteQuant{P^{\bot}}{},
\forall K \in \InteTrans{\CInt{\foN{p+r}}}{}, \com{\forcNeg{t}}{(u,K)}\in \BotTrans
\end{eqnarray*}
But if $\com{t^{\ominus}}{(u,K)}\in \BotTrans$ and $u$ is a negative term, then
it implies that $t$ is a positive term. Indeed $\BotTrans$ is a set of forcing negative commands. 
Hence,
\begin{eqnarray*}
&  &  \forall (u,r) \in \InteQuant{P^{\bot}}{},
\forall K \in \InteTrans{\CInt{\foN{p+r}}}{}, \com{\forcNeg{t}}{(u,K)}\in \BotTrans\\
& \Leftrightarrow &   (t,p)\in \InteQuant{P^{\bot}}{}^{\bot}\\
& \Leftrightarrow &   (t,p)\VdashQuant P\\
\end{eqnarray*}
\item Suppose $C = N$ is negative. 
Then $\forcingOrtho{N^{*}} = \forcingOrtho{(N^{\bot})^{*}}$.
We have
\begin{eqnarray*}
t\VdashTrans (\foN{p}\FVdash N) &  \Longleftrightarrow & t \in \InteTrans{\forcingOrtho{\forcingOrtho{N^*}}(\foN{p})}{}\\
 &  \Longleftrightarrow & t \in \InteTrans{\forcingOrtho{\forcingOrtho{\forcingOrtho{{(N^{\bot})^*}}}}(\foN{p})}{}
 \end{eqnarray*}
 By Property \ref{prop:posforcing}, we have
 $$t \in \InteTrans{\forcingOrtho{\forcingOrtho{\forcingOrtho{{(N^{\bot})^*}}}}(\foN{p})}{} \Longleftrightarrow
t \in \InteTrans{\forcingOrtho{{(N^{\bot})^*}}(\foN{p})}{}$$
 By Property \ref{prop:forcing_alpha}, we also have the following equivalence:
 $$ t\in \InteTrans{\forcingOrtho{(N^{\bot})^{*}}(\foN{p})}{} \Longleftrightarrow t \in \InteTrans{N^{*}(\foN{p})}{}$$
 Finally, by Lemma \ref{lm:connection}, we obtain 
$$  t \in \InteTrans{N^{*}(\foN{p})}{} \Longleftrightarrow (t,p) \in \InteQuant{N}{}$$
 Since, $N$ is negative, $\InteQuant{N}{} = \InteQuant{N}{}^{\bot\bot}$
 and hence we obtain
 $$t\VdashTrans (\foN{p}\FVdash N) \Longleftrightarrow (t,p)  \VdashQuant N$$
\end{enumerate}
\end{proof}

\begin{rmks}
$ $
\begin{enumerate}
\item This theorem shows that positive terms $t$ can realize (in the sense of
the $\InteTrans{.}{}$ interpretation) a negative formula. Indeed, if $P$ is positive
and $p\in \N$ then $p\FVdash P$ is always a negative formula, realized by a positive
term. 
\item This last theorem show the connection between quantitative linear forcing and 
quantitative realizability. It says considering this specific linear forcing structure
\emph{inside} the countdown realizability model (for any choice of a pole) is equivalent
to a certain quantitative realizability on \class{MAL}. 
\item If we have shown that composing the linear forcing and countdown realizability induces
a quantitative realizability relation, the converse is not true. Indeed, a quantitative realizability
relation is not a priori equivalent to the composition of a certain forcing and a countdown
realizability. 
\end{enumerate}
\end{rmks}

By Property \ref{prop:connec_sat}, we know that $\BotQuant$ is a saturated quantitative
pole. Hence, Theorem \ref{thm:adequacy1} is valid. 
Together with the connection theorem, it can be used to  obtain an adequacy theorem
for linear forcing with respect to \class{MAL}, \textit{inside} 
the realizability model. 

\begin{thm}\label{thm:forcing_adequacy}
Suppose $A, B_1, \dots, B_n$ are \MAL formulas.
Suppose $\pi$ is a proof
of $(\vdash t : A \delb \kappa_1 : B_1, \dots, \kappa_n : B_n)$.
Let $u_1,\dots, u_n \in \Terms$ and $q_1,\dots, q_n\in \N$
such that for any $i\in \Val{1,n}$ we have $u_i \VdashTrans (q_i \FVdash B_i)$.
Then if we pose $p = \Mp{\pi}$, we have 
$$t[u_1/\kappa_1, \dots, u_n/\kappa_n] \VdashTrans (\foN{p}+\foN{q_1}+\dots+\foN{q_n} \FVdash A)$$ 
\end{thm}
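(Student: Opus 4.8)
The plan is to reduce this forcing-adequacy statement to the quantitative adequacy theorem (Theorem \ref{thm:adequacy1}) by transporting everything across the connection theorem. The three ingredients needed are already in place: by Property \ref{prop:connec_sat} the pair $\BotQuant$ is a saturated quantitative pole, the set $(\Dpos)^{\circ}$ is a positive propositional domain, and $\rhoQuant$ is (the restriction to \MAL variables of) a total valuation. Together they determine the quantitative interpretation $\InteQuant{.}{\rhoQuant}$ and its realizability relation $\VdashQuant$; and because $\BotQuant$ is saturated, Theorem \ref{thm:adequacy1} applies verbatim to any \HOMAL derivation read against this pole.

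First I would rewrite the hypotheses. Each $B_i$ is a \MAL formula, so the connection theorem turns $u_i \VdashTrans (\foN{q_i}\FVdash B_i)$ into $(u_i,q_i)\VdashQuant B_i$. Collecting these, I would form the substitution $\sigma = [\kappa_1 \leftarrow (u_1,q_1),\dots,\kappa_n\leftarrow(u_n,q_n)]$ and check that it is adequate to $\Gamma[\rhoQuant]$, where $\Gamma = \kappa_1 : B_1,\dots,\kappa_n : B_n$. This is precisely the point where the connection theorem is used to identify each weighted term $(u_i,q_i)$ with the quantitative interpretation of the corresponding context entry, respecting the polarity and orthogonality conventions of the adequate-substitution definition.

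Next, since \MAL is a fragment of \HOMAL, the proof $\pi$ is in particular a \HOMAL derivation, and its weight $\Mp{\pi}=p$ is computed identically in either system. Applying Theorem \ref{thm:adequacy1} to the saturated pole $\BotQuant$ then gives that the judgment $\vdash t : A \delb \Gamma$ is $p$-adequate. Feeding it the total valuation $\rhoQuant$ together with the adequate substitution $\sigma$ yields $(t,p)[\sigma] \in \interp{A}_{\rhoQuant}$, that is
$$(t[u_1/\kappa_1,\dots,u_n/\kappa_n],\; p+q_1+\dots+q_n) \VdashQuant A,$$
the weight $p+\sum_i q_i$ arising exactly from the definition of $(t,p)[\sigma]$.

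Finally I would transport this conclusion back through the connection theorem, applied now to the \MAL formula $A$ with the integer $p+q_1+\dots+q_n$, and using that $\foN{p+q_1+\dots+q_n}$ is convertible to $\foN{p}+\foN{q_1}+\dots+\foN{q_n}$ in the integer linear forcing structure. This gives exactly $t[u_1/\kappa_1,\dots,u_n/\kappa_n] \VdashTrans (\foN{p}+\foN{q_1}+\dots+\foN{q_n}\FVdash A)$, which is the desired statement. I expect the only delicate step to be the middle one: verifying that the forcing-realizers $(u_i,q_i)$, once translated by the connection theorem, genuinely constitute an adequate substitution for $\Gamma$ relative to $\rhoQuant$. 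Everything else is a routine composition of the connection theorem with quantitative adequacy, so no saturation or closure property beyond those already recorded is required.
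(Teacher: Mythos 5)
Your proposal is correct and follows essentially the same route as the paper: the paper offers no separate proof of this theorem, presenting it exactly as you do, as the composition of Property \ref{prop:connec_sat} (so that Theorem \ref{thm:adequacy1} applies to the saturated pole $\BotQuant$) with the connection theorem applied first to the hypotheses $u_i \VdashTrans (\foN{q_i} \FVdash B_i)$ and then to the conclusion. The one step you flag as delicate --- checking that the translated realizers form an adequate substitution, which by the paper's definition requires membership in $\InteQuant{B_i^{\bot}}{\rhoQuant}$ rather than $\InteQuant{B_i}{\rhoQuant}$ --- is a polarity convention inherited from the paper's own statement of the theorem, and is left just as implicit there as in your argument.
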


This result justifies \textit{a posteriori} the introduction of the countdown machine. \newline

\subsection{Forcing and reducibility candidates} \label{subsec:forcing_cand}

We have seen that it is possible to obtain certain instances of the quantitative
realizability by composing forcing and countdown machine-based realizability. We now show an example of such an instance: the quantitative reducibility
candidates of Subsection \ref{subsec:quant_cand} restricted to \class{MAL}, arise
from the composition of our quantitative linear forcing of Subsection \ref{subsec:quant_forc_struct}
and non quantitative reducibility candidates, adapted to the countdown machine. 

We choose a set $\BotTime$ of bounded commands, a positive propositional domain
$\Dpos$ and a total valuation $\rho$ such that:
\begin{eqnarray*}
\BotTime & =   &\setp{(c,n)}{c \text{ normalizes for }\redh^*\text{ using at most }n\,\,\beta\text{-steps}}\\
\rho(X^{\sortpropPos}) & = & \setp{(\daiPos, p)}{p\in \N}\\
\end{eqnarray*}
Those correspond to a particular case of quantitative reducibility candidates of Subsection \ref{subsec:quant_cand}. Indeed, in the case of the quantitative monoid
of integers, $\rho(X^{\sortpropPos})$ is a quantitative reducibility candidate.
We denote the corresponding realizability relation $\Vdash_{Time}$.

On another hand, we can define the (non-quantitative) reducibility candidates model
corresponding to the countdown machine. It is an instance of
the countdown machine based realizability. We choose the following $\BotTrans$ and 
the valuation $\rhoTrans$:
\begin{eqnarray*}
\BotTrans       & = & \setp{c\in {\ComSet^{\ominus}}}{c\text{ normalizes for } \redTrans}\\
\rhoTrans(X^{\iota\rightarrow \sortpropPos})(p) & = & \{\daiPos\}
%%\FTrans{\Dpos} & = & \setp{Z \subseteq \PosTerms\times\M}{\daiPos \in Z^{\dbot}\subseteq
%%\{\daiNeg\}^{\bot}}
\end{eqnarray*}

It is clear that these definitions correspond to those of Subsection \ref{subsec:quant_cand} where
the quantitative part has been erased. 
\begin{prop}\label{prop:timeforc}
$ $
\begin{enumerate}
  \item We have $\BotTime = \BotQuant$.
  \item For every \MAL atom $X^\sortpropPos$, we have
  $\rhoQuant(X^\sortpropPos) = \rho(X^{\sortpropPos})$.
\end{enumerate}
\end{prop}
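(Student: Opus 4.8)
The plan is to prove both equalities by unfolding the relevant definitions, the only substantial content being a step-by-step correspondence between the countdown machine and the reduction $\redh$. I would dispatch the second item first, since it is immediate. By definition $\rhoQuant(X^{\sortpropPos}) = \setp{(t,p)}{t \in \rhoTrans(X^{\iota\rightarrow\sortpropPos})(p)}$, and since the chosen $\rhoTrans$ satisfies $\rhoTrans(X^{\iota\rightarrow\sortpropPos})(p) = \{\daiPos\}$ for every $p$, this set is exactly $\setp{(\daiPos,p)}{p\in\N}$, which is precisely the chosen value $\rho(X^{\sortpropPos})$.

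For the first item I would begin by unfolding $\BotQuant$. Since $\InteTrans{\CInt{\foN{p}}}{} = \setp{\overline{n}}{p\leq n}$, we have $(c,p)\in\BotQuant$ iff $\FTrans{c}\{\overline{n}\}$ normalizes for $\redTrans$ for every $n\geq p$. The heart of the argument is then a simulation lemma asserting that the countdown reductions of $\FTrans{c}\{\overline{n}\}$ are in exact correspondence with the $\redh$-reductions of $c$: the two administrative rules (on $\mu\alpha$ and $\mu x$) mirror $\redMu$-steps and leave the counter unchanged, while the two remaining rules (on $\mu(\kappa,\kappa')$ and $\mu\{\kappa\}$) mirror $\redBeta$-steps and decrement the counter. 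Concretely I would prove that $\FTrans{c}\{\overline{n}\}\redTrans\FTrans{c'}\{\overline{n'}\}$ holds iff either $c\redMu c'$ with $n'=n$, or $c\redBeta c'$ with $n\geq 1$ and $n'=n-1$; here the $\alpha$-equivalence on forcing commands and the identification $\com{t}{u}=\com{u}{t}$ make the matching of redexes routine.

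From this correspondence I would conclude both inclusions. If $(c,p)\in\BotTime$, i.e. $c$ is $\redh$-normalizing with $\TimeS{\redBeta}(c)=k\leq p$, then for any $n\geq p\geq k$ the run of $\FTrans{c}\{\overline{n}\}$ mirrors the $\redh$-normalizing reduction of $c$, decrements the counter exactly $k\leq n$ times, hence never reaches the diverging rule $\FTrans{\Uparrow}$ and terminates; thus $(c,p)\in\BotQuant$. Conversely, if $(c,p)\in\BotQuant$, instantiating $K=\overline{p}\in\InteTrans{\CInt{\foN{p}}}{}$ shows that $\FTrans{c}\{\overline{p}\}$ normalizes; by the correspondence the underlying $\redh$-reduction must terminate (an infinite run, whether through $\redMu$-steps alone or through infinitely many $\redBeta$-steps, forces divergence of the machine) and can use at most $p$ $\redBeta$-steps, for otherwise the counter hits $0$ before a needed $\redBeta$-step and $\FTrans{\Uparrow}$ fires. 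Hence $\TimeS{\redBeta}(c)\leq p$ and $(c,p)\in\BotTime$.

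The main obstacle is the precise formulation and proof of the simulation lemma, in particular pinning down that ``normalizes for $\redTrans$'' coincides with ``$\redh$-terminates using at most $n$ $\redBeta$-steps'' rather than merely ``$\redh$-terminates''. This requires treating the divergence rule $\FTrans{\Uparrow}$ as genuine non-termination and checking that reaching counter $0$ in front of a $\parr$- or $\uparrow$-redex is the only new source of divergence the machine introduces over $\redh$. The well-definedness of $\TimeS{\redBeta}$ (which already presupposes $\redh$-normalization) then lets me transport the per-reduction-sequence correspondence to the two membership statements without having to fix a reduction strategy.
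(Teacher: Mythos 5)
Your proof is correct and takes essentially the same route as the paper: item 2 is immediate from the definitions, and item 1 rests on exactly the equivalence you establish, namely that $\com{t_+}{u_-}$ normalizes for $\redh$ in at most $p$ $\redBeta$-steps iff for every $n \geq p$ the forcing command $\FTrans{\com{t_+}{u_-}}\{\overline{n}\}$ normalizes for $\redTrans$. The only difference is one of detail: the paper asserts this correspondence in a single sentence as evident from the design of the countdown machine, whereas you make explicit the step-by-step simulation lemma (administrative steps preserve the counter, $\redBeta$-steps decrement it, and counter $0$ in front of a $\redBeta$-redex is the sole added source of divergence) that justifies it.
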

\begin{proof}
$ $
\begin{enumerate}
  \item If $t_+$ and $u_-$ are two terms, then
  $\com{t_+}{u_-}$ normalizes in a number of $\redBeta$-steps at most $p$ if and only if
  for any $n\in\N$ greater than $p$, $\com{(u^-)^{\ominus}}{(\bar{n},t^+)}$ 
  normalizes for $\redTrans$. 
  \item This is by definition.
\end{enumerate}
\end{proof}

As an immediate corollary of Property \ref{prop:timeforc} and Lemma \ref{lm:connection},
we obtain the following \textbf{decomposition theorem}:
	
\begin{thm}\label{th:timeforcth} 
If $A$ is a \MAL formula, then 
$$(t,n)\Vdash_\Time A \Longleftrightarrow t\VdashTrans (\foN{n} \FVdash A)$$
\end{thm}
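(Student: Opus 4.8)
The plan is to read both sides of the equivalence as instances of relations whose equivalence has already been settled, so that the only genuinely new step is to match up the concrete model data. On the right-hand side, $t\VdashTrans(\foN{n}\FVdash A)$ is exactly the composite of the quantitative linear forcing of Subsection \ref{subsec:quant_forc_struct} with the countdown-machine realizability, instantiated at the set $\BotTrans$ of $\redTrans$-normalizing negative commands and at the valuation $\rhoTrans$ with $\rhoTrans(X^{\iota\rightarrow\sortpropPos})(p)=\{\daiPos\}$. On the left-hand side, $(t,n)\Vdash_\Time A$ is the quantitative realizability relation attached to the pole $\BotTime$, the reducibility-candidate domain $\CanPos$, and the valuation $\rho$ with $\rho(X^{\sortpropPos})=\setp{(\daiPos,p)}{p\in\N}$. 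So the whole statement asks us to compare the forcing-composed model with a quantitative reducibility-candidate model.

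First I would invoke the connection theorem of Subsection \ref{subsec:connection} for precisely this choice of $\BotTrans$ and $\rhoTrans$. It gives, for every \MAL formula $A$ and all $t,n$,
$$t\VdashTrans(\foN{n}\FVdash A)\ \Longleftrightarrow\ (t,n)\VdashQuant A,$$
where $\VdashQuant$ is the quantitative relation induced by the saturated pole $(\N,\BotQuant,1)$ (Property \ref{prop:connec_sat}), the induced domain $(\Dpos)^{\circ}$, and the valuation $\rhoQuant$. Recall that this connection theorem is itself the corollary of Lemma \ref{lm:connection} obtained after the polarity bookkeeping of Properties \ref{prop:posforcing} and \ref{prop:forcing_alpha}, so appealing to it is the same as appealing to the connection lemma together with those polarity facts.

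It then remains only to identify $\VdashQuant$ with $\Vdash_\Time$ on \MAL formulas, and this is exactly the content of Property \ref{prop:timeforc}: it supplies $\BotTime=\BotQuant$ and $\rhoQuant(X^{\sortpropPos})=\rho(X^{\sortpropPos})$ for every \MAL atom. The key observation making this suffice is that the interpretation $\Inte{A}_\rho$ of a \MAL formula is computed by a fixed, formula-driven recursion using only the pole (through $(.)^{\bot}$) and the operations $\otimes$ and $\downarrow$, seeded by the valuation's values at the atoms occurring in $A$; the ambient positive propositional domain ($\CanPos$ on one side, $(\Dpos)^{\circ}$ on the other) only guarantees that these sets are well-defined and affects neither the recursion nor its result. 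Hence a routine induction on $A$, using the two equalities above, yields $\InteQuant{A}{\rhoQuant}=\Inte{A}_\rho$ for every \MAL formula, so $(t,n)\VdashQuant A\Longleftrightarrow (t,n)\Vdash_\Time A$. Chaining this with the connection theorem closes the proof.

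The only delicate point is the first clause of Property \ref{prop:timeforc}, the equality $\BotTime=\BotQuant$, which rests on the elementary but essential fact that $\com{t_+}{u_-}$ normalizes in at most $p$ $\redBeta$-steps iff for every $n\ge p$ the countdown command $\com{\forcNeg{u_-}}{(\overline{n},t_+)}$ normalizes for $\redTrans$; that is, the primitive integer $\overline{n}$ literally records and exhausts the $\beta$-budget. Everything else is bookkeeping — the induction over formula structure, the treatment of the two polarities, and the biorthogonal closures hidden in $\FVdash$ and in the interpretation of positive formulas were all already carried out in the connection lemma and theorem. Consequently, once Property \ref{prop:timeforc} is granted the decomposition theorem requires no further computation and is indeed an immediate corollary.
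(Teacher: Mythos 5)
Your proposal is correct and follows essentially the same route as the paper, which derives Theorem \ref{th:timeforcth} as an immediate corollary of Property \ref{prop:timeforc} and the connection lemma/theorem of Subsection \ref{subsec:connection}. The only detail you spell out that the paper leaves implicit is the structural induction showing that, for quantifier-free \MAL formulas, the interpretation depends only on the pole and the valuation at atoms (so that $\BotTime=\BotQuant$ and the agreement on atoms suffice), and that observation is accurate.
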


\subsection{Remarks} 

Let's finish by a few remarks about this forcing decomposition, and the choices we have made.\newline

\subtitle{Exponentials and quantifiers} Although we have only treated \MAL in this section, 
we could extend these results to a system with exponentials, like \HOSAL. 
To do so, it suffices to give an interpretation of $!$ and $?$. The definition would be:
\begin{eqnarray*}
(!A)^{*}  & = & \lambda x^{\kappa}.{\forall q. \langle x = !q \rangle (!(A^{*}(q)))}\\
(?A)^{*}  & = & \lambda x^{\kappa}.{\forcingOrtho{\lambda r.\forall q. \langle r = !q \rangle (!((A^\bot)^{*}(q)))}}(x)
\end{eqnarray*}
where $!$ is a term of kind $\kappa \rightarrow\kappa$. Of course, properties of $\CInt{.}$ with
respect to $!$ would be needed. 
Concerning the quantifiers, if one wants to interpret them, it suffices to follow 
the construction described in \cite{Miquel2011Forcing}. \newline

\subtitle{The choice of the machine} In the countdown machine, when the counter reaches $0$, any $\beta$-step
makes it diverge. This is a necessary choice we have made in order to obtain
Property \ref{prop:timeforc}. 
We could have chosen any other behavior without loosing Lemma \ref{lm:connection} and Theorem \ref{thm:forcing_adequacy}.
In particular, we could define a machine that executes programs for a certain number
of steps and then gives the hand back to a given program. \newline

\subtitle{Program transformation} In contrast with \cite{Miquel2011Forcing}, we don't justify our machine by 
exhibiting a program transformation. We could give such a transformation, as it amounts to reveal
the term behind the proof of the quantitative monoid part of Theorem \ref{thm:adequacy1}.

\section{Conclusion}

We have proposed an abstract quantitative framework, built upon 
Krivine's classical realizability for system $\class{L}_{foc}$ and the notion
of resource monoid developed in \cite{DalLago20112029}. As a
particular case of our construction, we have defined the
quantitative reducibility candidates, which allow us to
prove complexity properties of typable programs. 
Inside \class{MAL}$\omega$, we then have defined a linear forcing interpretation of \MAL
and an abstract machine that internalizes the computational behavior of the programs 
obtained through a particular instance of this forcing. We
finally have proved a decomposition theorem
which states that quantitative reducibility candidates for \MAL
can be obtained as the composition of ordinary reducibility candidates 
and linear forcing. \newline

We plan to explore several research directions.\newline

\subtitle{Order sensitive realizability} Both classical realizability and
	quantitative classical realizability are insensitive to order, in the logical
	sense. Indeed, unlike reducibility candidates, these techniques are 
	designed precisely to interpret second-order or higher-order logics. 
	In \cite{brunel23church}, a resource sensitive realizability is defined.
	One particularity of this realizability is that it cannot be used to
	interpret second-order quantifiers (in the paper, only a \textit{linear}
	second-order quantifier is interpreted), and thus allows an even finer
	grained study of the complexity properties of programs. This is achevied
	by using \textit{typed} abstract bounds. It would be interesting to see
	if this framework and ours can be both generalized into a new one: it could
	lead to a even more precise quantitative analysis. Indeed, if we are able
	to study the complexity due to the presence of different exponentials, what
	about the complexity due to quantifiers?\newline

\subtitle{Countdown and Implicit Complexity} Several attempts have been made to 
prove fundamental complexity results inside a purely logical framework. We
can for example mention the work of Terui and al. \cite{teruibassaurin} 
where a link between focalization and space compression theorem is
stated. 
In \cite{BaillotElementary}, the question is raised of whether it is possible
to prove in their setting a hierarchy theorem like $\class{P} \subsetneq \class{EXP}$. It is striking that such an elementary complexity result
cannot be easily proved in a proof theoretic setting. 
We conjecture that one of the main reasons for this apparent difficulty 
is the lack of expressivity of the logics at stake.
For example, all known proofs of $\class{P}\subsetneq \class{EXP}$
crucially rely on defining a Turing machine which executes another Turing
 machine on a entry for a certain number of steps. 
 This kind of feature is not available in a purely functional $\lambda$-calculus:
 it is not possible to internalize such a $\lambda$-evaluator inside
 the typed $\lambda$-calculus. Moreover, adding this feature would rise the problem of how to type 
 programs using it.  
 This is exactly the functionality implemented by the countdown machine. Using
 a variant of it, we would be able to execute programs for a certain 
 number of steps. Typing those terms can be achieved using forcing. 
 It seems to us we could use these facts to prove hierarchy theorems in
 a purely proof theoretic setting.
 As a first test for this idea, we plan to see whether is possible
 to prove $\class{P}\subsetneq \class{EXP}$ in a
 a forcing extension 
 of the system described in \cite{BaillotElementary}.  \newline
 
\subtitle{Krivine Linear Algebras}  We plan to reformulate this framework in a more abstract
 setting, in the spirit of Krivine's realizability
 algebras \cite{krivine2010realizability}. In such a framework, we could express both 
 quantitative realizability and linear forcing. We could hopefully prove a general iteration
 theorem of which our connection lemma would be a particular case.  \newline
 
	\subtitle{A Logic of forcing}  	As already noticed, while the identification 
  	$\com{t}{u} \equiv \com{u}{t}$ made in $\class{L}_{foc}$ materializes
  	the involutivity of negation,   	
  	the identification \linebreak {$\com{\forcNeg{u}}{t}\{K\} \equiv \com{\forcNeg{t}}{u}\{K\}$}
  	accounts for the properties of forcing orthogonality. 
  	This remark suggests a new logic where forcing orthogonality
  	would be \textit{primitive}, just like negation in linear logic. 
  	In such a setting, forcing would be easily recovered 
  	and dealt directly with.   
  We are currently investigating a \textit{logic of forcing}, which could be used 
  as a type system for a calculus with effects.  \newline
  
  \subtitle{Differential privacy and function sensitivity} \textit{Differential privacy}
  \cite{dwork2006differential} is a quantitative property of randomized functions (typically functions
  giving an answer to user queries on a database) that prevents 
  malicious users to gain confidential knowledge from repeated queries. 
  In \cite{reed2010distance}, a linear type system that ensures differential privacy is proposed. It is based
  on \textit{function sensitivity}, a measure of how the \textit{distance} between outputs
  of a function is related to the distance between the respective inputs (this property
  is similar to \textit{Lipschitz continuity}). We plan to see if the logical relations used in \cite{reed2010distance}
  to prove the soundness of their type system can be reformulated in terms
  of quantitative realizability.
  
%  	
%Recently, delimited continuations have been intensively studied 
%from the proof-theoretic point of view \cite{ZenaHeberlin,Herbelin:2008:ACD:1328897.1328484,zeilberger2010polarity}.
%The question of their logical meaning has not been completely settled yet. 
%
%  \begin{itemize}
%  	\item When we look at the forcing program transformation
%  	and/or the corresponding machines, striking similarities
%  	with delimited continuations calculi appear. We indeed
%  	think that delimited continuations can be obtained 
%  	by forcing, and conversely 
%  	that forcing program transformations can be explained
%  	in terms of delimited continuations.
  	
%  	\item 
%  \end{itemize}

\bibliographystyle{plain}
%%\nocite{*}
\bibliography{biblio}

\end{document}